  \theoremstyle{acmdefinition}
  \newtheorem{note}[theorem]{Note}
  \newtheorem{remark}[theorem]{Remark}
  \newtheorem{assumption}[theorem]{Assumption}}
\def\StripPrefix#1>{}
\def\jobis#1{FF\fi
  \def\predicate{#1}%
  \edef\predicate{\expandafter\StripPrefix\meaning\predicate}%
  \edef\job{\jobname}%
  \ifx\job\predicate
}
\if\jobis{short}
  \newcommand{\refshort}[1]{\ref{#1}}
  \newcommand{\refapp}[1]{\ref{full:#1}}
\if\jobis{app}
  \newcommand{\refshort}[1]{\ref{full:#1}}
  \newcommand{\refapp}[1]{\ref{#1}}
  \newcommand{\refshort}[1]{\ref{#1}}
  \newcommand{\refapp}[1]{\ref{#1}}
\newcommand{\Bool}{\mathrm{Bool}}
\newcommand{\true}{\mathrm{true}}
\newcommand{\false}{\mathrm{false}}
\newcommand{\Lor}{\operatorname{\bigvee}}
\newcommand{\union}{\cup}
\newcommand{\Union}{\bigcup}
\newcommand{\intersection}{\cap}
\let\originalmiddle=\middle
\def\middle#1{\mathrel{}\originalmiddle#1\mathrel{}}
\newcommand{\strempty}{\epsilon}
  \newcommand{\oracle}{\text{\emoji{crystal-ball}}}
  \newcommand{\oracle}{🔮}
\newcommand{\query}[1]{\langle #1 \rangle}
\newcommand{\interp}[1]{\llbracket #1 \rrbracket}
\newcommand{\queryp}[1]{[ #1 ]}
\newcommand{\skel}{\operatorname{skel}}
\newcommand{\open}{\operatorname{open}}
\newcommand{\close}{\operatorname{close}}
\newcommand{\blank}{\operatorname{blank}}
\newcommand{\str}{\operatorname{str}}
\newcommand{\scalefactor}{0.65}
\newcommand{\qgstart}{\text{start}}
\newcommand{\qgend}{\text{end}}
\newcommand{\qgidx}{\operatorname{idx}}
\newcommand{\qgat}{\mathrel{:}} % @
\tikzset{qgstart/.style={rectangle, rounded corners=2mm, draw}}
\tikzset{qgend/.style={rectangle, rounded corners=2mm, draw}}
\tikzset{qgblank/.style={circle, draw=blue!50, fill=blue!20}}
\newcommand{\qcon}{\operatorname{qcon}}
\newcommand{\efp}{\to^{\strempty^*}_F}
\newcommand{\Gad}{\text{Gad}}
\newcommand{\alive}{\operatorname{Alive}}
\newcommand{\backref}{\operatorname{Backref}}
\newcommand{\loq}{\operatorname{LOQ}}
\newcommand{\aq}{\operatorname{Matched}}
\newcommand{\grepo}{\text{grep}^O}
\newcommand{\padded}[1]{\Sigma^*r_\text{#1}\Sigma^*}
\newcommand{\boldsmall}{$\mathbf{\le 0.1}$}
\newcommand{\mclearpage}{}
\begin{document}

\title{Membership Testing for Semantic Regular Expressions}

\author{Yifei Huang}
\orcid{0009-0006-5675-4065}
\affiliation{
  \institution{University of Southern California}
  \city{Los Angeles}
  \state{CA}
  \country{USA}
}
\email{yifeih@usc.edu}
\author{Matin Amini}
\orcid{0009-0006-8776-3925}
\affiliation{
  \institution{University of Southern California}
  \city{Los Angeles}
  \state{CA}
  \country{USA}
}
\email{matinami@usc.edu}
\author{Alexis Le~Glaunec}
\orcid{0000-0002-5444-5924}
\affiliation{
  \institution{Rice University}
  \city{Houston}
  \state{TX}
  \country{USA}
}
\email{afl5@rice.edu}
\author{Konstantinos Mamouras}
\orcid{0000-0003-1209-7738}
\affiliation{
  \institution{Rice University}
  \city{Houston}
  \state{TX}
  \country{USA}
}
\email{mamouras@rice.edu}
\author{Mukund Raghothaman}
\orcid{0000-0003-2879-0932}
\affiliation{
  \institution{University of Southern California}
  \city{Los Angeles}
  \state{CA}
  \country{USA}
}
\email{raghotha@usc.edu}

%%%%

% \begin{CCSXML}
% <ccs2012>
%    <concept>
%        <concept_id>10003752.10003766.10003776</concept_id>
%        <concept_desc>Theory of computation~Regular languages</concept_desc>
%        <concept_significance>500</concept_significance>
%        </concept>
%  </ccs2012>
% \end{CCSXML}

% \ccsdesc[500]{Theory of computation~Regular languages}

% \keywords{Regular expressions, oracle-backed grammars, membership testing}

%%%%

\if\jobis{app} \else
\begin{abstract}
This paper is about semantic regular expressions (SemREs). This is a concept that was recently
proposed by~\citet{Smore} in which classical regular expressions are extended with a primitive to
query external oracles such as databases and large language models (LLMs). SemREs can be used to
identify lines of text containing references to semantic concepts such as cities, celebrities,
political entities, etc. The focus in their paper was on automatically synthesizing semantic regular
expressions from positive and negative examples.
In this paper, we study the \emph{membership testing problem}:
\begin{enumerate}
\item We present a two-pass NFA-based algorithm to determine whether a string $w$ matches a SemRE
  $r$ in $O(|r|^2 |w|^2 + |r| |w|^3)$ time, assuming the oracle responds to each query in unit time.
  In common situations, where oracle queries are not nested, we show that this procedure runs in
  $O(|r|^2 |w|^2)$ time.
  %
% \item
  % ~2x = Reciprocal(Geomean(Table2:Oracle fraction))
  Experiments with a prototype implementation of this algorithm validate our theoretical analysis,
  and show that the procedure massively outperforms a dynamic programming-based baseline, and incurs
  a $\approx 2 \times$ overhead over the time needed for interaction with the oracle.
\item We establish connections between SemRE membership testing and the triangle finding problem
  from graph theory, which suggest that developing algorithms which are simultaneously practical and
  asymptotically faster might be challenging.
  %
% \item
  Furthermore, algorithms for classical regular expressions primarily aim to optimize their time and
  memory consumption. In contrast, an important consideration in our setting is to minimize the cost
  of invoking the oracle. We demonstrate an $\Omega(|w|^2)$ lower bound on the number of oracle
  queries necessary to make this determination.
\end{enumerate}
\end{abstract}

\maketitle

\mclearpage
\section{Introduction}
\label{sec:intro}

%%%%%%%%%%%%%%%%%%%%%%%%%%%%%%%%%%%%%%%%%%%%%%%%%%%%%%%%%%%%%%%%%%%%%%%%%%%%%%%%%%%%%%%%%%%%%%%%%%%%
% 1. Problem

Since their introduction in the 1970s, regular expression matching engines such as grep~\cite{grep}
and sed~\cite{sed} have emerged as the workhorses of data extraction and transformation workflows.
Regular expressions are well-suited to describe simple syntactic structures in text, such as for
identifying sequences of tokens and for matching delimiter boundaries. Apart from their ease of use,
they are backed by well-understood theoretical foundations and efficient matching algorithms.

\citet{Smore} recently proposed extending classical regular expressions with primitives to query
large language models (LLMs). For example, one might search for mentions of Turing award winners in
some text using the pattern $\Sigma^* \query{\text{Turing award winner}} \Sigma^*$, assuming an
oracle exists which can identify their names.
One might also conceivably apply such patterns to search for mentions of Eastern European cities,
respiratory diseases endemic among humans, to flag passwords and SSH keys accidentally left behind
in source code commits, or to query external services such as DNS and WHOIS while categorizing spam
email. The motivation is that classical regular expressions are a bad tool to describe such semantic
categories, which are better resolved by asking a search engine, an LLM, or by querying other
external sources of information such as databases of award winners or atlases of major cities.

The major focus in~\cite{Smore} was on synthesizing semantic regular expressions from user-provided
examples of positive and negative data. We assert that a more fundamental problem involves
\emph{membership testing}: Does a string $w$ match a given semantic regular expression $r$? This
problem will be our subject of study in this paper.

%%%%%%%%%%%%%%%%%%%%%%%%%%%%%%%%%%%%%%%%%%%%%%%%%%%%%%%%%%%%%%%%%%%%%%%%%%%%%%%%%%%%%%%%%%%%%%%%%%%%
% 2. Algorithm

In contrast to~\cite{Smore}, which uses an expansive formalism (including regular expression
operators such as intersection and complement), we will focus on a simplified core subset of SemREs
(oracle refinement + the classical regular expression operators---union, concatenation, and
Kleene-*). This allows us to isolate the impact of oracle refinement on the complexity of membership
testing.

Still, even for this simplified fragment, it is relatively straightforward to show an
$\Omega(|w|^2)$ lower bound on the number of oracle queries needed (and hence on the worst-case
running time) to make this determination. See Theorem~\ref{thm:complexity:quad}. It follows that any
automata-based matching algorithm---which are frequently associated with linear time algorithms---%
must necessarily perform non-trivial additional work.

In this context, our primary contribution is a two-pass NFA-based matching algorithm: The first pass
recognizes syntactic patterns required by the underlying classical regular expression, and
constructs a data structure which we call a query graph. The query graph encodes outstanding query-%
substring pairs that need to submitted to the oracle for verification and Boolean relationships
among the results of these queries. The second pass uses dynamic programming to evaluate the query
graph and compute the final result.

The principal challenge is in designing the query graph so that it can both be rapidly constructed
and efficiently evaluated. One particularly tricky part of this procedure occurs while constructing
the query graph, when we must track the string indices at which oracle queries begin and end. Recall
that because classical Thompson NFAs also contain $\strempty$-transitions, their evaluation requires
computing their transitive closure after processing each subsequent character of the input string.
Unfortunately, in our case, these $\strempty$-closures might themselves involve complicated query
structures that need to be carefully processed. Our solution is a novel gadget which summarizes all
possible patterns in which substrings may need to be examined by the oracle using only $\Theta(|r|)$
vertices in the query graph.
At this point, query graph construction reduces to a relatively straightforward tiling of this
gadget after processing each character. Taken together, our end-to-end membership testing algorithm
runs in $O(|r|^2 |w|^2 + |r| |w|^3)$ time, while making $O(|r| |w|^2)$ oracle queries.

%%%%

In Section~\ref{sub:complexity:triangles}, we show that SemRE matching is at least as hard as
finding triangles in graphs. While subcubic (i.e., $O(n^{3 - \epsilon})$ time) algorithms are known
for this latter problem~\cite{Williams2:FOCS2010}, they all build on algorithms for fast (integer)
matrix multiplication, and are consequently slower in practice. This suggests that matching
algorithms which are both practical and asymptotically faster might be challenging to devise.

This hardness result fundamentally relies on SemREs with nested queries. An example of such a
pattern is $(\Sigma^* (\Sigma^* \land \query{\text{City}}) \Sigma^*) \land
\query{\text{Celebrity}}$, which identifies celebrities whose names also include the names of
cities, such as \texttt{Paris Hilton}. We expect that common SemREs will not utilize such nested
queries. In their absence, we can show that our algorithm runs in $O(|r|^2 |w|^2)$ time.

% 101x = Geomean(T2:RT:DP / T2:RT:SNFA)
% ~2x = Reciprocal(Geomean(Table2:Oracle fraction))
% 51% = 1 - Geomean(T2:OC:SNFA / T2:OC:DP)
We also include the results of an experimental evaluation: We use a benchmark dataset consisting of
two text files and nine SemREs, and compare our matching algorithm to a simple dynamic
programming-based baseline. Our algorithm turns out to have $101 \times$ the throughput of the
baseline, and a $2\times$ overhead over the cost of consulting the oracle. Furthermore, our
algorithm makes 51\% fewer oracle queries than the baseline. Given the costs of large-scale LLM
use~\cite{OpenAI:Pricing}, our hope is that this algorithm will enable the practical application of
oracle-backed regular expressions.

%%%%%%%%%%%%%%%%%%%%%%%%%%%%%%%%%%%%%%%%%%%%%%%%%%%%%%%%%%%%%%%%%%%%%%%%%%%%%%%%%%%%%%%%%%%%%%%%%%%%
% 3. Contributions

\paragraph{Summary of Contributions.}

To summarize, our paper makes the following contributions:
\begin{enumerate}
\item We initiate the study of the membership testing problem for semantic regular expressions
  (SemREs), and present an automaton-based matching algorithm that runs in $O(|r|^2 |w|^2 +
  |r| |w|^3)$ time with $O(|r| |w|^2)$ queries. Additional assumptions, such as the absence of
  nesting, permit faster performance guarantees to be made. % See Table~\ref{tab:intro:summary}.
\item We report on a set of SemREs and corresponding text files that can be used to benchmark
  membership testing algorithms for SemREs. Experiments with this dataset validate our theoretical
  analysis, and indicate that our algorithm outperforms a simple dynamic-programming based baseline.
\item We establish the first lower bounds on SemRE membership testing:
  \begin{inparaenum}[(\itshape a\upshape)]
  \item That at least $\Omega(|w|^2)$ oracle queries are necessary in the worst case, and
  \item that subcubic algorithms for membership testing could be used to find triangles in graphs in
    subcubic time.
  \end{inparaenum}
\end{enumerate}

\begin{comment}
\begin{table}
\caption{Summary of asymptotic complexity of SemRE membership testing.}
\label{tab:intro:summary}
\resizebox{\textwidth}{!}{
\begin{tabular}{@{}lccccl}
\toprule
  & \multicolumn{2}{c}{SNFA (Our algorithm)}
  & \multicolumn{2}{c}{DP (Baseline)}
  & \multicolumn{1}{c}{\multirow{2}{*}{Notes}} \\
\cmidrule(r){2-5}
  & Time & \#Queries
  & Time & \#Queries
  & \\
\midrule
  General     & $O(|r|^2 |w|^2 + |r||w|^3)$    & \multirow{2}{*}{$O(|r||w|^2)$}
              & \multirow{2}{*}{$O(|r||w|^3)$} & \multirow{2}{*}{$O(|r||w|^2)$}
              & Triangle-finding hard \\
  Non-nested  & $O(|r|^2 |w|^2)$ &
              & &
              & $\Omega(|w|^2)$ queries needed \\
  % Unambiguous & & & & & \\
\bottomrule
\end{tabular}
}
\end{table}
\end{comment}
        \mclearpage
\section{Semantic Regular Expressions}
\label{sec:semre}

\subsection{The Formalism}
\label{sub:semre:formalism}

Fix a finite alphabet $\Sigma$, a space of possible queries $Q$, and let $\oracle : Q \times
\Sigma^* \to \Bool$ be the external oracle. We choose this symbol to evoke the idea of the oracle
being a magical crystal ball. We define \emph{semantic regular expressions} (SemRE) as the
productions of the following grammar:
\begin{alignat}{1}
  r & \Coloneqq \bot \mid
                \strempty \mid
                a \mid
                r_1 + r_2 \mid
                r_1 r_2 \mid
                r^* \mid
                r \land \query{q},
\label{eq:semre:syntax}
\end{alignat}
for $a \in \Sigma$ and $q \in Q$. Observe that the only difference from classical regular
expressions~\cite{Sipser} is the final production rule $r \Coloneqq \dots \mid r \land \query{q}$.
Each semantic regular expression $r$ identifies a set of strings $\interp{r} \subseteq \Sigma^*$,
defined recursively as follows:
\begin{equation}
\left.
\begin{aligned}
  \interp{\bot}              & = \emptyset,                                         \\
  \interp{\strempty}         & = \{ \strempty \},                                   \\
  \interp{a}                 & = \{ a \},                                           \\
  \interp{r_1 + r_2}         & = \interp{r_1} \union \interp{r_2},                  \\
  \interp{r_1 r_2}           & = \interp{r_1} \interp{r_2},                         \\
  \interp{r^*}               & = \Union_{i \geq 0} \interp{r}^i, \text{ and}        \\
  \interp{r \land \query{q}} & = \{ w \in \interp{r} \mid \oracle(q, w) = \true \},
\end{aligned}
\qquad \right\}
\label{eq:semre:semantics}
\end{equation}
where the concatenation and exponentiation of sets of strings $A, B \subseteq \Sigma^*$ are defined
as usual:
\begin{alignat*}{1}
  AB        & = \{ w_1 w_2 \mid w_1 \in A, w_2 \in B \}, \\
  A^0       & = \{ \strempty \}, \text{ and} \\
  A^{n + 1} & = A^n A, \text{ for } n \geq 0 \text{ respectively}.
\end{alignat*}
Once again, the only difference from classical language theory is in the definition $\interp{r \land
\query{q}}$ of oracle queries. Given a SemRE $r$ and a string $w \in \Sigma^*$, the central problem
of this paper is to determine whether $w \in \interp{r}$.
An easy solution is to use dynamic programming and memoization to operationalize Equation~%
\ref{eq:semre:semantics}. This is the approach used in the SMORE implementation by~\citet{Smore}.%
\footnote{See Lines 271--394 of the file \texttt{Smore-main/lib/interpreter/executor.py} in the
  artifact that may be downloaded from \url{https://doi.org/10.5281/zenodo.8144182}.} This algorithm
would require $O(|r||w|^3)$ time in the worst case. \emph{Can we do better?}
 \mclearpage
\subsection{Examples}
\label{sub:semre:examples}

In common use, one might set $\Sigma$ to be the set of ASCII or Unicode code points. One might
choose $Q = \{ \text{Sportsperson}, \text{Scientist}, \text{Eastern European city}, \dots \}$. Of
course, $\oracle$ would depend on the backing oracle, but for the sake of concreteness, one might
assume $\oracle(\text{Sportsperson}, w) = \true$ for $w \in \{ \texttt{Simone Biles},
\texttt{Lionel Messi}, \texttt{Roger Federer}, \dots\}$, and that $\oracle(\text{Scientist}, w) =
\true$ for $w \in \{ \texttt{Albert Einstein}, \texttt{Marie Curie}, \texttt{Charles Darwin},
\dots \}$.

We start with two simple examples of semantic regular expressions: Lines of text containing
political news might match the expression $\Sigma^* (\Sigma^* \land \query{\text{Politician}})
\Sigma^*$, while rosters of sports teams might be described by the pattern $((\Sigma^* \land
\query{\text{Sportsperson}}) \texttt{", "})^* (\Sigma^* \land \query{\text{Sportsperson}})$.

\begin{note}
Observe the particularly common idiom, $\Sigma^* \land \query{q}$. When the intent is clear, we will
shorten this and simply write $\query{q}$. For example, we will write $\Sigma^*
\query{\text{Politician}} \Sigma^*$ instead of the more verbose $\Sigma^* (\Sigma^* \land
\query{\text{Politician}}) \Sigma^*$, and $(\query{\text{Sportsperson}} \texttt{", "})^*
\query{\text{Sportsperson}}$ instead of the second SemRE above.
\end{note}

\begin{note}[Role of the alphabet]
Throughout this paper, we will assume that $\Sigma = \{ a, b, c, \dots \}$ is finite. Given the
number of ASCII and Unicode code points, this assumption needs deeper consideration while actually
implementing these algorithms. Our prototype implementation operates over streams of 8-bit bytes
(using Rust's default UTF-8 encoding) and supports three types of character classes:
\begin{inparaenum}[(\itshape a\upshape)]
\item The wildcard $\Sigma$ (or ``\texttt{.}''), which is matched by any symbol from the alphabet,
  $a \in \Sigma$. In other words, the character class $\Sigma$ is simply notational shorthand and a
  minor optimization over the SemRE $a + b + c + \cdots$.
\item $\texttt{[}a\text{--}b\texttt{]}$ which matches all characters $c \in \Sigma$ such that
  $a \leq c \leq b$, and
\item $\texttt{[\^{}}a\text{--}b\texttt{]}$, which matches all characters $c \in \Sigma$ that do not
  match $\texttt{[}a\text{--}b\texttt{]}$.
\end{inparaenum}
By constructing an effective Boolean algebra over character classes, symbolic alphabets~%
\cite{Veanes:POPL2012, Saarikivi:TACAS2019} provide a more principled approach to these issues, and
is a good direction for future work.
\end{note}

%%%%%%%%%%%%%%%%%%%%%%%%%%%%%%%%%%%%%%%%%%%%%%%%%%%%%%%%%%%%%%%%%%%%%%%%%%%%%%%%%%%%%%%%%%%%%%%%%%%%

\begin{example}[Credential leaks]
\label{ex:semre:secrets}
A common class of software vulnerabilities arises when developers embed passwords, SSH keys, and
other secret credentials in their source code, which subsequently gets accidentally publicized as
part of an open source software release. Several popular programs (such as Gitleaks~\cite{Gitleaks})
have been written to automatically discover hardcoded secrets in source code.
One approach to discover these forgotten credentials is to examine all string literals occurring in
the repository and flag those which look like secrets. While classical regular expressions are ideal
for recognizing string literals, an external oracle, such as an LLM, is better suited to determine
whether a specific string literal might be a password or SSH key. This approach can be summarized
using the following semantic regular expression:\footnote{After accounting for escape sequences, as
per \url{https://docs.oracle.com/javase/specs/jls/se22/html/jls-3.html\#jls-3.10.5}.}
%
% StringLiteral:
%   " {StringCharacter} "
% StringCharacter:
%   InputCharacter but not " or \
%   EscapeSequence
%
\begin{alignat}{1}
  r_{\text{pass}} & = \texttt{"}
                      (\Sigma_s + \text{Esc})^* \land \query{\text{Password or SSH key}}
                      \texttt{"}, \text{ where} \\
  \Sigma_s        & = \Sigma \setminus \{ \texttt{"}, \text{BS} \} % \text{CR}, \text{NL}
                      \text{ and} \nonumber \\
  \text{Esc}      & = \text{BS} \; \{ \texttt{b}, \texttt{t}, \texttt{n}, \texttt{f}, \texttt{r},
                                      \texttt{"}, \texttt{'}, \text{BS} \}, \nonumber
\end{alignat}
and where $\text{BS}$ is the backslash character.
% $\text{CR}$: Carriage return
% $\text{NL}$: Newline
\end{example}

\begin{assumption}[Deterministic oracles]
\label{assm:semre:llm}
We have implicitly assumed that $\oracle : Q \times \Sigma^* \to \Bool$ is a deterministic function.
On the other hand, the token sequences produced by LLMs are fundamentally non-deterministic. This
arises from multiple factors, including its temperature, sampling strategy, and even non-determinism
arising from floating point calculations performed on the GPU. We compensate for these issues by
lowering the temperature to 0 and caching results to previously submitted queries, thereby
forcefully determinizing the oracle.
\end{assumption}

\begin{example}[Non-existent file paths]
\label{ex:semre:files}
Source code also frequently contains hardcoded file paths, which progressively become obsolete as
bitrot sets in. A SemRE similar to the following might be used to automatically detect such paths:
\begin{alignat}{1}
  r_{\text{file}} & = (\Sigma_f^* \; \text{FS} \; 
                        (\Sigma_f^* + \text{FS})^+ +  
                       \Sigma_f^+ \; \text{FS})  \land
                      \query{\text{Non-existent file path}}, \text{ where} \\
  \Sigma_f        & = \{ \texttt{a}, \dots, \texttt{z},
                         \texttt{A}, \dots, \texttt{Z},
                         \texttt{0}, \dots, \texttt{9},
                         \texttt{-}, \text{PD}, \text{US} \}, \text{ and where} \nonumber
\end{alignat}
$\text{PD}$ and $\text{US}$ are the period and underscore characters, and $\text{FS}$ is the forward
slash character commonly used to separate directories in Unix file paths.
\end{example}

\begin{note}
In the above example, the oracle would be an external script that determines the (non-)existence of
specific file paths. Although many SemREs would use LLMs as the background oracle, $\oracle$, we
emphasize that the oracle may also be constructed using other sources of information, such as
databases, calculators, or other external tools.
\end{note}

\begin{example}[Identifier naming conventions]
\label{ex:semre:identifiers}
Software maintainers love to compile elaborate conventions for how to name variables, identifiers,
and other program elements,\footnote{For example:
\url{https://google.github.io/styleguide/javaguide.html\#s5-naming.}} and also love to complain
about programmers not following these conventions.
Ignoring keywords and non-ASCII characters, Java identifiers may be recognized using the regular
expression $\Sigma_l (\Sigma_l + \{ \texttt{0}, \dots, \texttt{9} \})^*$, where $\Sigma_l =
\{ \texttt{a}, \dots, \texttt{z}, \texttt{A}, \dots, \texttt{Z}, \texttt{\$}, \text{US} \}$.
With an appropriate oracle, they simply need to scan their project for identifiers of the form:
\begin{align}
  & r_{\text{id}}  = 
  (\Sigma_l \; (\Sigma_l + \{ \texttt{0}, \dots, \texttt{9} 
  \})^*)  \; \land 
            \query{\text{Inappropriately named Java identifier}}      
\end{align}
where the oracle presumably confirms that the name is meaningful, correctly uses camel case or snake
case formats, and is drawn from the appropriate grammatical categories (such as verbs or nouns).
\end{example}

%%%%%%%%%%%%%%%%%%%%%%%%%%%%%%%%%%%%%%%%%%%%%%%%%%%%%%%%%%%%%%%%%%%%%%%%%%%%%%%%%%%%%%%%%%%%%%%%%%%%

\begin{example}[Pharmaceutical spam]
\label{ex:semre:viagra}
Spam email routinely contains advertisements for pharmaceutical products and weight loss
supplements. Assume access to an LLM which can reliably recognize the names of these products. One
might then use the following SemRE to flag the subject lines of these email messages:
\begin{alignat}{1}
  r_{\text{spam}, 1} & = \texttt{Subject:} \; \Sigma^* \; \queryp{\text{Medicine name}} \; \Sigma^*,
\end{alignat}
where $\queryp{q}$ is shorthand for $\Sigma^+ \land \query{q}$. One might additionally wish to
restrict their search to single whole words, so they might use the following SemRE instead:
\begin{alignat}{1}
  r_{\text{spam}, 2} & = \texttt{Subject:} \; \Sigma^* \;
                         \text{WS} \; \{ \texttt{a}, \dots, \texttt{z},
                                         \texttt{A}, \dots, \texttt{Z} \}^+ \land
                                      \query{\text{Medicine name}} \; \text{WS} \;
                         \Sigma^*,
\end{alignat}
where $\text{WS}$ is the whitespace character.
For a somewhat humorous example of classical regular expressions being used in a similar setting, we
refer the reader to Wikipedia's blacklist of forbidden article titles.%
\footnote{\url{https://en.wikipedia.org/wiki/MediaWiki:Titleblacklist}.}
\end{example}

\begin{example}[Domains, 1]
\label{ex:semre:whois}
Another common indicator of spam email is when the sender's domain no longer exists. This can be
checked using services such as DNS, Whois or PyFunceble~\cite{Pyfunceble}. One can then filter email
from senders that satisfy the pattern:
\begin{alignat}{1}
  r_{\text{edom}} & = \Sigma_e^+ \; \texttt{@} \;
                      (\Sigma_e^+ \; \text{PD} \; \Sigma_a^{\{ 1, 3 \}}) \land
                      \query{\text{Domain does not exist}}, \text{ where} \\
  \Sigma_e        & = \{ \texttt{a}, \dots, \texttt{z},
                         \texttt{A}, \dots, \texttt{Z},
                         \texttt{0}, \dots, \texttt{9},
                         \texttt{-}, \text{PD} \}, \text{ and} \nonumber \\
  \Sigma_a       & = \{ \texttt{a}, \dots, \texttt{z}, \texttt{A}, \dots, \texttt{Z} \}. \nonumber
\end{alignat}
Here the notation $r^{\{ i, j \}}$ is syntactic sugar for bounded repetition:
\[ r^{\{ i, j \}} = r^i + r^{i + 1} + \dots + r^j. \]
\end{example}

\begin{example}[Domains, 2]
\label{ex:semre:phishing}
A third characteristic of spam email is the presence of URLs from phishing domains. Security
researchers make repositories of these domains publicly available. See, for example,
\url{https://openphish.com/}. When scanning their inbox, one might therefore wish to run the SemRE:
\begin{alignat}{1}
  r_{\text{wdom}, 1} & = (\texttt{http} (\texttt{s}?) \texttt{://} + \texttt{www} \; \text{PD}) \;
                         (\Sigma_e^+ \; \text{PD} \; \Sigma_a^{\{ 1, 3 \}}) \land
                         \query{\text{Phishing domain}}.
\end{alignat}
We use the usual syntactic sugar $r?$ for $r + \strempty$.

In other situations, one might be interested in recent domains, say those registered after 2010. One
can once again use the Whois service to answer this question:
\begin{alignat}{1}
  r_{\text{wdom}, 2} & = (\texttt{http} (\texttt{s}?) \texttt{://} + \texttt{www} \; \text{PD}) \;
                         (\Sigma_e^+ \; \text{PD} \; \Sigma_a^{\{ 1, 3 \}}) \land
                         \query{\text{Domain registered after 2010}}.
\end{alignat}
\end{example}

\begin{example}[Foreign IP addresses.]
\label{ex:semre:ip}
While analyzing some packet capture data, a security researcher might be interested in mentions of
IP addresses that are outside the company's intranet. They might use their knowledge of the local
network topology to set up an oracle and look for matches for the following SemRE:
\begin{alignat}{1}
  r_{\text{ip}} & = ((\Sigma_d^{\{ 1, 3 \}} \; \text{PD})^3 \; \Sigma_d^{\{ 1, 3 \}}) \land
                    \query{\text{Foreign IP address}},
\end{alignat}
where $\Sigma_d = \{ \texttt{0}, \dots, \texttt{9} \}$.
\end{example}

        \mclearpage
\section{Matching Semantic Regular Expressions Using Query Graphs}
\label{sec:alg}

We will now present an algorithm to determine whether $w \in \interp{r}$ for a given SemRE $r$ and
input string $w$. Our procedure conceptually proceeds in two passes: It first uses an NFA-like
algorithm to identify syntactic structure in $w$. Notably, we do not discharge any non-trivial
oracle queries during this pass, but instead construct a data structure called a query graph which
encodes the Boolean structure among the results of these oracle queries. The second pass uses
dynamic programming to evaluate the query graph, discharge the necessary oracle queries, and produce
the final result.

\mclearpage
\subsection{Converting SemREs into Semantic NFAs}
\label{sub:alg:snfa}

We start with a straightforward extension of classical NFAs: A \emph{semantic NFA} (SNFA) is a
5-tuple, $M = (S, \Delta, \lambda, s_0, s_f)$, where:
\begin{inparaenum}[(\itshape a\upshape)]
\item $S$ is a finite set of states,
\item $\Delta \subseteq S \times (\Sigma \union \{ \strempty \}) \times S$ is the transition
  relation,
\item $\lambda : S \to \{ \open, \close \} \times Q \union \{ \blank \}$ is the state labeling
  function,
\item $s_0, s_f \in S$ are the start and end states respectively, and such that:
\item for each sequence of transitions, $s_0 \to s_1 \to s_2 \to \dots \to s_f$, the corresponding
  sequence of state labels $\lambda(s_0) \lambda(s_1) \lambda(s_2) \cdots \lambda(s_f)$ forms a
  well-parenthesized string ($|Q|$ different types of parentheses, one for each query in $Q$).
\end{inparaenum}

% TODO: Intuition for well-parenthesized property

\begin{remark}
The main difference with classical NFAs is the state labeling $\lambda(s)$. We use this to mark
substrings that subsequently need to be submitted to the oracle for validation. The semantics of
SemREs in Equation~\ref{eq:semre:semantics} imply that these substrings are well-nested.
Well-parenthesized structures will consequently play a central role in our algorithm, including in
the above definition of SNFAs and in the definition of query graphs in Section~%
\ref{sub:alg:qgraph-defn}.
\end{remark}

We will write $s \to^a s'$ to indicate the transition $(s, a, s') \in \Delta$. When the state label
$\lambda(s)$ is of the form $(\open, q)$ or $(\close, q)$, we instead write $\open(q)$ and
$\close(q)$ respectively.

\begin{figure}
%%%%%%%%%%%%%%%%%%%%%%%%%%%%%%%%%%%%%%%%%%%%%%%%%%%%%%%%%%%%%%%%%%%%%%%%%%%%%%%%%%%%%%%%%%%%%%%%%%%%
\begin{subfigure}{.25\textwidth}
\centering
\scalebox{\scalefactor}{
\begin{tikzpicture}
\node [state, initial]   (s0)               {$s_0$};
\node [state, accepting] (sf) [right=of s0] {$s_f$};
\end{tikzpicture}
}
\caption{$\bot$.}
\label{sfig:alg:snfa:bot}
\end{subfigure}
\hfill
%
%%%%%%%%%%%%%%%%%%%%%%%%%%%%%%%%%%%%%%%%%%%%%%%%%%%%%%%%%%%%%%%%%%%%%%%%%%%%%%%%%%%%%%%%%%%%%%%%%%%%
\begin{subfigure}{.25\textwidth}
\centering
\scalebox{\scalefactor}{
\begin{tikzpicture}
\node [state, initial]   (s0)               {$s_0$};
\node [state, accepting] (sf) [right=of s0] {$s_f$};
\path [->] (s0) edge [above] node {$\strempty$} (sf);
\end{tikzpicture}
}
\caption{$\strempty$.}
\label{sfig:alg:snfa:strempty}
\end{subfigure}
\hfill
%
%%%%%%%%%%%%%%%%%%%%%%%%%%%%%%%%%%%%%%%%%%%%%%%%%%%%%%%%%%%%%%%%%%%%%%%%%%%%%%%%%%%%%%%%%%%%%%%%%%%%
\begin{subfigure}{.25\textwidth}
\centering
\scalebox{\scalefactor}{
\begin{tikzpicture}
\node [state, initial]   (s0)               {$s_0$};
\node [state, accepting] (sf) [right=of s0] {$s_f$};
\path [->] (s0) edge [above] node {$a$} (sf);
\end{tikzpicture}
}
\caption{$a$.}
\label{sfig:alg:snfa:char}
\end{subfigure}

\vspace{2em}

%%%%%%%%%%%%%%%%%%%%%%%%%%%%%%%%%%%%%%%%%%%%%%%%%%%%%%%%%%%%%%%%%%%%%%%%%%%%%%%%%%%%%%%%%%%%%%%%%%%%
\begin{subfigure}{.5\textwidth}
\centering
\scalebox{\scalefactor}{
\begin{tikzpicture}
\node [state, initial] (s0) {$s_0$};

\node [state] (s10) [above right=of s0] {$s_{10}$};
\node [state] (s1f) [right=of s10] {$s_{1f}$};

\node [state] (s20) [below right=of s0] {$s_{20}$};
\node [state] (s2f) [right=of s20] {$s_{2f}$};

\node[state, accepting] (sf) [below right=of s1f] {$s_f$};

\path [->] (s0) edge [above] node {$\strempty$} (s10);
\path [->] (s1f) edge [above] node {$\strempty$} (sf);
\path [->] (s0) edge [below] node {$\strempty$} (s20);
\path [->] (s2f) edge [below] node {$\strempty$} (sf);

\begin{scope}[on background layer]
  \node [fill=black!10, fit=(s10) (s1f)] {$r_1$};
\end{scope}

\begin{scope}[on background layer]
  \node [fill=black!10, fit=(s20) (s2f)] {$r_2$};
\end{scope}
\end{tikzpicture}
}
\caption{$r_1 + r_2$.}
\label{sfig:alg:snfa:union}
\end{subfigure}
\hfill
%
%%%%%%%%%%%%%%%%%%%%%%%%%%%%%%%%%%%%%%%%%%%%%%%%%%%%%%%%%%%%%%%%%%%%%%%%%%%%%%%%%%%%%%%%%%%%%%%%%%%%
\begin{subfigure}{.45\textwidth}
\centering
\scalebox{\scalefactor}{
\begin{tikzpicture}
\node [state, initial] (s0) {$s_0$};
\node [state] (sr0) [right=of s0] {$s_{r0}$};
\node [state] (srf) [right=of sr0] {$s_{rf}$};
\node [state, accepting] (sf) [below=of s0] {$s_f$};

\path [->] (s0) edge [above] node {$\strempty$} (sr0);
\path [->] (s0) edge [left] node {$\strempty$} (sf);
\path [->, bend right=45] (srf) edge [above] node {$\strempty$} (s0);

\begin{scope}[on background layer]
  \node [fill=black!10, fit=(sr0) (srf)] {$r$};
\end{scope}
\end{tikzpicture}
}
\caption{$r^*$.}
\label{sfig:alg:snfa:kstar}
\end{subfigure}

\vspace{2em}

%%%%%%%%%%%%%%%%%%%%%%%%%%%%%%%%%%%%%%%%%%%%%%%%%%%%%%%%%%%%%%%%%%%%%%%%%%%%%%%%%%%%%%%%%%%%%%%%%%%%
\begin{subfigure}{.5\textwidth}
\centering
\scalebox{\scalefactor}{
\begin{tikzpicture}
\node [state, initial] (s0) {$s_0$};

\node [state] (s10) [right=of s0] {$s_{10}$};
\node [state] (s1f) [right=of s10] {$s_{1f}$};

\node [state] (s20) [right=of s1f] {$s_{20}$};
\node [state] (s2f) [right=of s20] {$s_{2f}$};

\node[state, accepting] (sf) [right=of s2f] {$s_f$};

\path [->] (s0) edge [above] node {$\strempty$} (s10);
\path [->] (s1f) edge [above] node {$\strempty$} (s20);
\path [->] (s2f) edge [below] node {$\strempty$} (sf);

\begin{scope}[on background layer]
  \node [fill=black!10, fit=(s10) (s1f)] {$r_1$};
\end{scope}

\begin{scope}[on background layer]
  \node [fill=black!10, fit=(s20) (s2f)] {$r_2$};
\end{scope}
\end{tikzpicture}
}
\caption{$r_1 r_2$.}
\label{sfig:alg:snfa:concat}
\end{subfigure}
\hfill{}
\begin{subfigure}{.45\textwidth}
\centering
\scalebox{\scalefactor}{
\begin{tikzpicture}
\node [state, initial, label=below:{$\open(q)$}] (s0) {$s_0$};
\node [state] (sr0) [right=of s0] {$s_{r0}$};
\node [state] (srf) [right=of sr0] {$s_{rf}$};
\node [state, accepting, label=below:{$\close(q)$}] (sf) [right=of srf] {$s_f$};

\path [->] (s0) edge [above] node {$\strempty$} (sr0);
\path [->] (srf) edge [above] node {$\strempty$} (sf);

\begin{scope}[on background layer]
  \node [fill=black!10, fit=(sr0) (srf)] {$r$};
\end{scope}
\end{tikzpicture}
}
\caption{$r \land \query{q}$.}
\label{sfig:alg:snfa:query}
\end{subfigure}

\Description{Recursive construction of the semantic NFA $M_r$ given a SemRE $r$.}
\caption{Recursive construction of the semantic NFA $M_r$ given a SemRE $r$. The states $s_0$ and
  $s_f$ of $M_{r \land \query{q}}$ are respectively labelled with the open and close query markers
  for $q$. The formal construction may be found in Appendix~\refapp{app:alg}.}
\label{fig:alg:snfa}
\end{figure}

Given a SemRE $r$, we construct an SNFA $M_r$ by following a traditional Thompson-like approach. We
visualize this construction in Figure~\ref{fig:alg:snfa}, but postpone its formal description to
Appendix~\refapp{app:alg:snfa}.
To argue its correctness, we need to reason about (feasible) paths through the SNFA.
A path $\pi = s_1 \to^{w_1} s_2 \to^{w_2} s_3 \to^{w_3} \dots \to^{w_n} s_{n + 1}$, for $n \geq 0$,
consists of an initial state $s_1$, and a (possibly empty) sequence of transitions emerging from
$s_1$. The corresponding string $\str(\pi)$ is the sequence of characters, $\str(\pi) = w_1 w_2 w_3
\dots w_n$.
(Informally,) We say that the path $\pi$ is \emph{feasible} if the oracle accepts all its queries.
Formally, $\pi$ is feasible if any of the following conditions hold:
\begin{enumerate}
\item $\pi = s_1$ is the empty path where $\lambda(s_1) = \blank$, or
\item $\pi = s_1 \to s_2 \to \dots \to s_n \to s_{n + 1}$, where $\lambda(s_1) = \open(q)$ and
  $\lambda(s_{n + 1}) = \close(q)$ for some query $q$, the subpath $s_2 \to \dots \to s_n$ is
  feasible, and $\oracle(q, \str(\pi)) = \true$, or
\item $\pi = \underbrace{s_1 \to \dots \to s_k}_{\pi_1} \to \underbrace{s_{k + 1} \to \dots \to
  s_n}_{\pi_2}$, where both subpaths $\pi_1$ and $\pi_2$ are feasible.
\end{enumerate}
% For example, consider the SNFA in Figure~\ref{} and let $q$ be the query \TODO. In this case, the
% path $\pi_1 = \TODO$ is feasible, while the path $\pi_2 = \TODO$ is not feasible.
We say that a string $w$ is accepted by an SNFA $M$ if there a feasible $w$-labelled path $\pi = s_0
\to \dots \to s_f$.
The following theorem asserts the correctness of the constructed SNFA $M_r$:
\begin{theorem}
\label{thm:alg:snfa}
Pick a SemRE $r$ and let $M_r$ be the SNFA resulting from the construction of Figure~%
\ref{fig:alg:snfa}.
For each string $w$, $w \in \interp{r}$ iff $M_r$ accepts $w$.
\end{theorem}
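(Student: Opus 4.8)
The plan is to prove both directions of the biconditional by structural induction on the SemRE $r$, strengthening the induction hypothesis so that it speaks about feasible paths between the designated start and end states $s_0, s_f$ of the inductively-built SNFA $M_r$. Concretely, the statement I would actually induct on is: for every string $w$, there is a feasible $w$-labelled path $s_0 \to \dots \to s_f$ in $M_r$ if and only if $w \in \interp{r}$. The base cases $r = \bot$, $r = \strempty$, $r = a$ are immediate by inspecting the three small automata in Figures~\ref{sfig:alg:snfa:bot}--\ref{sfig:alg:snfa:char}: the only paths from $s_0$ to $s_f$ are, respectively, none, the single $\strempty$-edge (feasible by clause~(1), since both endpoints are $\blank$), and the single $a$-edge.

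For the inductive step I would handle each of the four compound constructors. The key observation making this manageable is that in every gadget of Figure~\ref{fig:alg:snfa} the fresh ``wiring'' states ($s_0, s_f$, and the intermediate $s_{10}, s_{1f}, \dots$) are all labelled $\blank$, and the only $\open/\close$-labelled states are the outer pair introduced by $r \land \query{q}$. So any path through $M_{r_1 r_2}$, $M_{r_1 + r_2}$, or $M_{r^*}$ decomposes at the $\blank$ wiring states into sub-paths that lie entirely inside the copies of $M_{r_1}$ and $M_{r_2}$ (or $M_r$), and clause~(3) of feasibility lets me assemble/disassemble feasibility along exactly these $\blank$-state cut points (a $\blank$ state can always be treated as the boundary between a feasible prefix and a feasible suffix, since an empty path at a $\blank$ state is feasible). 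Thus:
\begin{itemize}
\item \emph{Concatenation.} A feasible $s_0 \to^w s_f$ path in $M_{r_1 r_2}$ must pass through $s_{1f}$ and $s_{20}$ via $\strempty$-edges; splitting $w = w_1 w_2$ at the moment the path leaves the $r_1$-block, the $r_1$-portion is a feasible $s_{10}\to^{w_1} s_{1f}$ path and the $r_2$-portion a feasible $s_{20}\to^{w_2} s_{2f}$ path. By the IH, $w_1 \in \interp{r_1}$ and $w_2 \in \interp{r_2}$, so $w \in \interp{r_1}\interp{r_2}$; conversely any such split yields a path, and feasibility composes by clause~(3).
\item \emph{Union.} A feasible accepting path enters exactly one of the two blocks; the IH on that block gives membership in $\interp{r_1}$ or $\interp{r_2}$, and conversely.
\item \emph{Kleene star.} Induct on the number of times the back-edge $s_{rf} \to^{\strempty} s_0$ is used: zero uses gives $w = \strempty \in \interp{r}^0$; $k+1$ uses decomposes $w = w' w''$ where $w'$ traverses one copy of $M_r$ (feasible $s_{r0}\to^{w'} s_{rf}$, so $w' \in \interp{r}$ by the outer IH) and $w''$ is accepted by the remaining loop (so $w'' \in \interp{r}^k$ by the inner induction). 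Feasibility again composes via clause~(3) across the $\blank$ states. The converse is symmetric.
\item \emph{Oracle refinement} $r \land \query{q}$. An accepting path in $M_{r \land \query{q}}$ has the form $s_0 \to^{\strempty} s_{r0} \to^w s_{rf} \to^{\strempty} s_f$ with $\lambda(s_0) = \open(q)$, $\lambda(s_f) = \close(q)$ and everything in between labelled $\blank$; feasibility of the whole path is, by clause~(2), equivalent to feasibility of the inner $s_{r0}\to^w s_{rf}$ sub-path together with $\oracle(q, w) = \true$ (note $\str$ of the full path equals $\str$ of the inner path since the two $\strempty$-edges contribute nothing). The inner sub-path is feasible iff $w \in \interp{r}$ by the IH, so the accepting path exists iff $w \in \interp{r}$ and $\oracle(q,w) = \true$, i.e.\ iff $w \in \interp{r \land \query{q}}$.
\end{itemize}

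The main obstacle — and the part deserving the most care — is the interaction between path-level feasibility (clauses (1)--(3)) and the structural decomposition of paths at $\blank$ states. One must verify the small ``gluing lemma'': if $\pi$ is a path that factors as $\pi_1 \pi_2$ at a state $s$ with $\lambda(s) = \blank$, then $\pi$ is feasible iff $\pi_1$ and $\pi_2$ are both feasible. The nontrivial direction (feasibility of $\pi$ implies feasibility of the halves) requires an induction on the derivation of feasibility: clause~(3) gives it when the witnessing split already occurs at (or before/after) $s$, but if $\pi$ is feasible via clause~(2) — i.e.\ $\pi$ is a single matched $\open(q)\dots\close(q)$ block — then $s$ is an interior $\blank$ state of that block, and one has to recurse into the feasible inner sub-path $s_2 \to \dots \to s_n$ to relocate the cut and then re-wrap with clause~(2) on each side; however, in the SNFAs of Figure~\ref{fig:alg:snfa} this last situation only arises inside a single $r \land \query{q}$ gadget where the inner block is itself $M_r$ with no enclosing parentheses of its own, so the recursion bottoms out cleanly. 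I would also state explicitly the invariant (provable by a trivial induction alongside the main one) that in $M_r$ every state reachable on an $s_0$--$s_f$ path is $\blank$ except the outermost-query endpoints, which is what licenses all the decompositions above; this invariant is essentially condition~(e) of the SNFA definition specialized to the Thompson construction.
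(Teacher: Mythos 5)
Your overall strategy---structural induction on $r$, with each compound case reduced to a lemma about splitting a concatenated path into feasible halves---is essentially the paper's own approach: the paper proves the forward direction by induction on the derivation of $w \in \interp{r}$ and the converse via exactly such a splitting lemma followed by induction on $r$ and case analysis on the feasibility derivation. Your case analysis of the four constructors and the direct use of clause~(2) for $r \land \query{q}$ are fine.

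The genuine problem is your gluing lemma. As stated---``if $\pi$ factors as $\pi_1\pi_2$ at a state $s$ with $\lambda(s)=\blank$, then $\pi$ is feasible iff $\pi_1$ and $\pi_2$ are both feasible''---it is false. Consider $M_{\strempty \land \query{q}}$, whose accepting path is $s_0 \to s_{10} \to s_{1f} \to s_f$ with $\lambda(s_0)=\open(q)$, $\lambda(s_f)=\close(q)$ and the two middle states $\blank$: cutting at the blank state $s_{10}$ leaves a prefix containing an unmatched $\open(q)$, which is infeasible under every one of clauses (1)--(3), even though the whole path is feasible whenever $\oracle(q,\strempty)=\true$. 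The hypothesis you need (and the one the paper states) is that \emph{both halves are well-parenthesized}, i.e.\ the cut is not interior to any matched $\open(q)\cdots\close(q)$ block; a $\blank$ label at the cut point does not guarantee this. With that hypothesis the clause-(2) case of the induction on the feasibility derivation becomes vacuous---an interior cut would place the unmatched open in one half, contradicting its well-parenthesization---which is the clean way to dispose of the case you wrestle with. Your proposed move of ``relocating the cut and re-wrapping with clause~(2) on each side'' is not a valid step: after an interior cut, neither half has the matching $\open$/$\close$ endpoints that clause~(2) requires, and the conclusion you are trying to reach (both halves feasible) is simply false there. The good news is that every cut you actually perform (at the wiring states of the concatenation and star gadgets; the union and query cases need no cut at all beyond clause~(2)) does satisfy the well-parenthesization hypothesis, so the proof is repaired simply by restating the lemma with the correct hypothesis and observing, via the easy auxiliary fact that feasible paths are well-parenthesized, that it applies at each of your split points.
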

\begin{proof}[Proof sketch]
The forward direction can be proved by induction on the derivation of $w \in \interp{r}$.
To prove the converse, we start by observing that whenever a combined path
\[ \pi = \underbrace{s_1 \to \dots \to s_n}_{\pi_1} \to
         \underbrace{s_{n + 1} \to \dots \to s_{n + m}}_{\pi_2} \]
is feasible and both subpaths $\pi_1$ and $\pi_2$ are well-parenthesized, then $\pi_1$ and $\pi_2$
are both themselves feasible paths. The final proposition may then be proved by induction on $r$,
and by case analysis on the presented proof of feasibility.
\end{proof}
          \mclearpage
\subsection{The Query Graph Data Structure}
\label{sub:alg:qgraph-defn}

Given a SemRE $r$ and string $w$, our goal is to determine whether $w \in \interp{r}$. So far, our
approach of converting $r$ into a corresponding SNFA $M_r$ has been fairly standard. Per Theorem~%
\ref{thm:alg:snfa}, we just need to determine whether a feasible $w$-labelled path exists from the
initial to the final states of $M_r$. In the classical setting, we would do this by maintaining the
set of reachable states, and determining whether the accepting state $s_f$ was reachable. More
formally, we would iteratively define the set of reachable states $S_w$ after processing $w$ as:
\begin{alignat*}{1}
  S_\strempty & = \{ s_1 \in S \mid s_0 \to^{\strempty^*} s_1 \}, \text{ and} \\
  S_{wa}      & = \{ s_3 \in S \mid \exists s_1 \in S_w, s_2 \in S \text{ such that }
                                    s_1 \to^a s_2 \to^{\strempty^*} s_3 \},
\end{alignat*}
where $s \to^{\strempty^*} s'$ denotes the transitive closure of $\strempty$-transitions in the NFA.
When $r$ is a (classical) regular expression, $w \in \interp{r}$ iff $s_f \in S_w$.

%%%%

\paragraph{The complication (Part 1).}

The classical algorithm fundamentally relies on the following Markovian property of NFA executions:
One only needs to know $S_w \subseteq S$ and $w'$ to determine whether the combined string $ww' \in
\interp{r}$. On the other hand, consider the SNFA in Figure~\ref{fig:alg:qgraph-defn:complication1},
which accepts strings of the form $r_{\text{pal}} = \Sigma^* a \query{\text{pal}}$, and assume the
query $\text{pal}$ accepts palindromic strings. Consider the prefixes:
\begin{alignat*}{2}
  w_1 & = babc \text{ and } & w_2 & = bacb.
\end{alignat*}
After processing these prefixes, the machine may be in any of the states:
\[ S_{w_1} = S_{w_2} = \{ s_0, s_2, s_f \}. \]
Now consider the common suffix,
\[ w_3 = cb. \]
Observe that the combined string $w_1 w_3 = babccb\in \interp{r_{\text{pal}}}$, while $w_2 w_3 =
bacbcb \notin \interp{r_{\text{pal}}}$. It follows that the SemRE membership testing algorithm needs
to maintain additional information beyond just the frontier $S_w$ of currently reachable states.

\begin{figure}
\centering
\scalebox{\scalefactor}{
\begin{tikzpicture}
  \node [state, initial] (s0) {$s_0$};
  \node [state, label=below:{$\open(\text{pal})$}, right=of s0] (s1) {$s_1$};
  \node [state, right=of s1] (s2) {$s_2$};
  \node [state, label=below:{$\close(\text{pal})$}, accepting, right=of s2] (sf) {$s_f$};

  \path [->] (s0) edge [loop above] node {$\Sigma$} (s0);
  \path [->] (s0) edge [above] node {$a$} (s1);
  \path [->] (s1) edge [above] node {$\strempty$} (s2);
  \path [->] (s2) edge [loop above] node {$\Sigma$} (s2);
  \path [->] (s2) edge [above] node {$\strempty$} (sf);
\end{tikzpicture}
}
\Description{SNFA which accepts strings of the form $\Sigma^* a \query{\text{pal}}$.}
\caption{SNFA which accepts strings of the form $\Sigma^* a \query{\text{pal}}$. Assume the query
  $\text{pal}$ recognizes palindromes. The machine nondeterministically finds an occurrence of the
  character $a$, and confirms that the subsequent suffix is a palindrome.}
\label{fig:alg:qgraph-defn:complication1}
\end{figure}

%%%%

\paragraph{The complication (Part 2).}

Even for a single string, different paths leading to the same state may behave differently. For
example, consider the string $w_4 = babca$, so that
\begin{alignat*}{1}
  w_4 w_3 & =   ba\overbrace{bca\underbrace{cb}_{\cancel{\query{\text{pal}}}}}^{\query{\text{pal}}}
            \in \interp{r_{\text{pal}}}.
\end{alignat*}
The machine may nondeterministically take the $s_0
\to^a s_1$ transition either on the first or the second occurrence of the symbol $a$, resulting the
two paths $\pi_4$ and $\pi'_4$ in Figure~\ref{fig:alg:qgraph-defn:complication2}.
Both of these paths can then be extended using the path $\pi_3$ (corresponding to the suffix $w_3$).
Notice that the combined path $\pi_4 \pi_3$ is feasible, while $\pi'_4 \pi_3$ is not feasible, even
though both $\pi_4$ and $\pi'_4$ end at the same state $s_2$. Both paths $\pi_4$ and $\pi'_4$ have
an outstanding open query; the difference is the position in the string at which this query was
opened.
For each path, the SNFA evaluation algorithm therefore also needs to maintain the string indices at
which queries were opened and closed.
% Because of the possibility of $\strempty$-loops, we also note that there may be infinitely many
% $w$-labeled paths.

\begin{figure}
\centering
\scalebox{\scalefactor}{
\begin{tikzpicture}
  \node [state, initial] (s_e_0) {$s_0$};
  \node [state, right=of s_e_0] (s_b_0) {$s_0$};
  \node [state, right=of s_b_0, label=above:{$\open(\text{pal})$}] (s_ba_1) {$s_1$};
  \node [state, right=of s_ba_1] (s_ba_2) {$s_2$};
  \node [state, right=of s_ba_2] (s_bab_2) {$s_2$};
  \node [state, right=of s_bab_2] (s_babc_2) {$s_2$};

  \path [->] (s_e_0) edge [above] node {$b$} (s_b_0);
  \path [->] (s_b_0) edge [above] node {$a$} (s_ba_1);
  \path [->] (s_ba_1) edge [above] node {$\strempty$} (s_ba_2);
  \path [->] (s_ba_2) edge [above] node {$b$} (s_bab_2);
  \path [->] (s_bab_2) edge [above] node {$c$} (s_babc_2);

  \node [state, initial, below=of s_e_0] (t_e_0) {$s_0$};
  \node [state, right=of t_e_0] (t_b_0) {$s_0$};
  \node [state, right=of t_b_0] (t_ba_0) {$s_0$};
  \node [state, right=of t_ba_0] (t_bab_0) {$s_0$};
  \node [state, right=of t_bab_0] (t_babc_0) {$s_0$};
  \node [state, right=of t_babc_0, label=below:{$\open(\text{pal})$}] (t_babca_1) {$s_1$};

  \path [->] (t_e_0) edge [above] node {$b$} (t_b_0);
  \path [->] (t_b_0) edge [above] node {$a$} (t_ba_0);
  \path [->] (t_ba_0) edge [above] node {$b$} (t_bab_0);
  \path [->] (t_bab_0) edge [above] node {$c$} (t_babc_0);
  \path [->] (t_babc_0) edge [above] node {$a$} (t_babca_1);
  % \path [->] (t_babc_1) edge [above] node {$a$} (s_babca_2);

  \node [state] (s_babca_2) at ($(s_babc_2)!0.5!(t_babca_1) + (2, 0)$) {$s_2$};
  \node [state, right=of s_babca_2] (s_babcac_2) {$s_2$};
  \node [state, right=of s_babcac_2] (s_babcacb_2) {$s_2$};
  \node [state, accepting, right=of s_babcacb_2, label=below:{$\close(\text{pal})$}]
    (s_babcacb_f) {$s_f$};

  \path [->] (s_babc_2) edge [above] node (pi4_a) {$a$} (s_babca_2);
  \path [->] (t_babca_1) edge [below] node {$\strempty$} (s_babca_2);

  \path [->] (s_babca_2) edge [above] node {$c$} (s_babcac_2);
  \path [->] (s_babcac_2) edge [above] node {$b$} (s_babcacb_2);
  \path [->] (s_babcacb_2) edge [above] node {$\strempty$} (s_babcacb_f);

  \node (pi4_bl) at ($(s_e_0) + (-0.5, 1.25)$) {};
  \node (pi4_br) at (pi4_bl -| pi4_a) {};
  \draw [decorate, decoration=brace] (pi4_bl) -- (pi4_br) node [midway, above] {$\pi_4$};

  \node (pip4_bl) at ($(t_e_0) + (-0.5, -1.25)$) {};
  \node (pip4_br) at (pip4_bl -| pi4_a) {};
  \draw [decorate, decoration={brace, mirror}] (pip4_bl) -- (pip4_br) node [midway, below] {$\pi'_4$};

  \node (pi3_bl) at (pi4_br) {};
  \node (pi3_br) at (pi3_bl -| s_babcacb_f) {};
  \draw [decorate, decoration=brace] (pi3_bl) -- (pi3_br) node [midway, above] {$\pi_3$};
\end{tikzpicture}
}
\Description{Figure illustrating that distinct $w$-labelled paths might not be equifeasible.}
\caption{Two prefix paths $\pi_4$ and $\pi'_4$ from the initial state $s_0$ to the intermediate
  state $s_2$. The SNFA in question is the one from Figure~\ref{fig:alg:qgraph-defn:complication1}.
  Both paths correspond to the same string $w_4 = babca$.
  The suffix path $\pi_3$ moves the machine from $s_2$ to the final state $s_f$ along the string
  $w_3 = cb$.
  Notice that the combined path $\pi_4 \pi_3$ is feasible (so the machine accepts $w_4 w_3 =
  babcacb$) but $\pi'_4 \pi_3$ is not. The SNFA evaluation algorithm therefore needs to track the
  indices at which queries were opened along each path through the automaton.}
\label{fig:alg:qgraph-defn:complication2}
\end{figure}

%%%%

\paragraph{Our solution.}

Our solution is to unroll the SNFA while processing the input string. We maintain a compact
representation of all possible paths in a data structure called the query graph.
Fix a string $w = w_1 w_2 \dots w_n$, and recall that $Q$ is the set of all possible oracle queries.
Formally, a \emph{query graph}, $G = (V, E, \qgstart, \qgend, \qgidx, l)$, is a directed acyclic
graph (DAG) with vertices $V$ and edges $E \subseteq V \times V$,
with identified initial and final vertices, $\qgstart, \qgend \in V$, and
whose vertices are associated with string indices, $\qgidx : V \to \{ 1, 2, \dots, n + 1 \}$ and
labelled using $l : V \to \{ \open, \close \} \times Q \union \{ \blank \}$
such that for each path $p$ from $\qgstart$ to $\qgend$:
\begin{enumerate}
\item The index labels are monotonically increasing along $p$, and
\item the $\open$ and $\close$ marks and associated queries form a well-parenthesized string.
\end{enumerate}
The query graph is essentially an unrolling of the SNFA. Therefore, as before, we define its
semantics in terms of the feasibility of its paths. Let $p = v_1 \to v_2 \to \dots \to v_k$ be a
path through the query graph starting from a specified state $v_1$. We say that $p$ is feasible if
any of the following conditions hold:
\begin{enumerate}
\item $p = v_1$ is the empty path, where $l(v_1) = \blank(i_1)$ for some $i_1$, or
\item $p = v_1 \to v_2 \to \dots \to v_k \to v_{k + 1}$, where:
  \begin{enumerate}
  \item the subpath $v_2 \to \dots \to v_k$ is feasible, and
  \item $l(v_1) = \open(q)$ and $l(v_{k + 1}) = \close(q)$ for some query $q \in Q$, and
  \item $\oracle(q, w_i w_{i + 1} \cdots w_{j - 1}) = \true$, where $i = \qgidx(v_1)$ and $j =
    \qgidx(v_{k + 1})$, or
  \end{enumerate}
\item $p = \underbrace{v_1 \to \dots \to v_k}_{p_1} \to \underbrace{v_{k + 1} \to \dots \to
  v_n}_{p_2}$, where both subpaths $p_1$ and $p_2$ are feasible.
\end{enumerate}
We associate the entire query graph $G$ with a Boolean value $\interp{G} \in \Bool$, such that
$\interp{G} = \true$ iff there is a feasible path from $\qgstart$ to $\qgend$.

\begin{figure}
\begin{subfigure}{.2\textwidth}
\centering
\scalebox{\scalefactor}{
\begin{tikzpicture}
  \node [qgstart] (start) at (0, 0) {$\qgstart$};
  \node (o1) at (-1.2, 1) {$3 \qgat \open(\text{pal})$};
  \node (o2) at (1.2, 1) {$6 \qgat \open(\text{pal})$};
  \node (c8) at (0, 2) {$8 \qgat \close(\text{pal})$};
  \node [qgend] (end) at (0, 3) {$\qgend$};

  \path [->] (start) edge (o1);
  \path [->] (start) edge (o2);
  \path [->] (o1) edge (c8);
  \path [->] (o2) edge (c8);
  \path [->] (c8) edge (end);
\end{tikzpicture}
}
\caption{$G_1$.}
\label{sfig:alg-qgraph-defn:examples:w4w3-rpal}
\end{subfigure}
\hfill
\begin{subfigure}{.5\textwidth}
\centering
\scalebox{\scalefactor}{
\begin{tikzpicture}
  \node [qgstart] (start) at (0, 0) {$\qgstart$};

  \node [qgblank] (a1) at (0, 0.75) {};
  \node [qgblank] (a2) at (2, 0.75) {};
  \node [qgblank] (a3) at (4, 0.75) {};
  \node [qgblank] (a4) at (6, 0.75) {};
  \node [qgblank] (a5) at (8, 0.75) {};

  \path [->] (start) edge (a1);
  \path [->] (a1) edge (a2);
  \path [->] (a2) edge (a3);
  \path [->] (a3) edge (a4);
  \path [->] (a4) edge (a5);

  \node (b1) at (0, 1.5) {$1 \qgat \open(q)$};
  \node (b2) at (2, 1.5) {$2 \qgat \open(q)$};
  \node (b3) at (4, 1.5) {$3 \qgat \open(q)$};
  \node (b4) at (6, 1.5) {$4 \qgat \open(q)$};
  \node (b5) at (8, 1.5) {$5 \qgat \open(q)$};

  \path [->] (a1) edge (b1);
  \path [->] (a2) edge (b2);
  \path [->] (a3) edge (b3);
  \path [->] (a4) edge (b4);
  \path [->] (a5) edge (b5);

  \node [qgblank] (c1) at (0, 2.25) {};
  \node [qgblank] (c2) at (2, 2.25) {};
  \node [qgblank] (c3) at (4, 2.25) {};
  \node [qgblank] (c4) at (6, 2.25) {};
  \node [qgblank] (c5) at (8, 2.25) {};

  \path [->] (b1) edge (c1);
  \path [->] (b2) edge (c2);
  \path [->] (b3) edge (c3);
  \path [->] (b4) edge (c4);
  \path [->] (b5) edge (c5);

  \path [->] (c1) edge (c2);
  \path [->] (c2) edge (c3);
  \path [->] (c3) edge (c4);
  \path [->] (c4) edge (c5);

  \node (d1) at (0, 3) {$1 \qgat \close(q)$};
  \node (d2) at (2, 3) {$2 \qgat \close(q)$};
  \node (d3) at (4, 3) {$3 \qgat \close(q)$};
  \node (d4) at (6, 3) {$4 \qgat \close(q)$};
  \node (d5) at (8, 3) {$5 \qgat \close(q)$};

  \path [->] (c1) edge (d1);
  \path [->] (c2) edge (d2);
  \path [->] (c3) edge (d3);
  \path [->] (c4) edge (d4);
  \path [->] (c5) edge (d5);

  \node [qgblank] (e1) at (0, 3.75) {};
  \node [qgblank] (e2) at (2, 3.75) {};
  \node [qgblank] (e3) at (4, 3.75) {};
  \node [qgblank] (e4) at (6, 3.75) {};
  \node [qgblank] (e5) at (8, 3.75) {};

  \path [->] (d1) edge (e1);
  \path [->] (d2) edge (e2);
  \path [->] (d3) edge (e3);
  \path [->] (d4) edge (e4);
  \path [->] (d5) edge (e5);

  \path [->] (e1) edge (e2);
  \path [->] (e2) edge (e3);
  \path [->] (e3) edge (e4);
  \path [->] (e4) edge (e5);

  \node [qgend] (end) at (8, 4.5) {$\qgend$};

  \path [->] (e5) edge (end);
\end{tikzpicture}
}
\caption{$G_2$.}
\label{sfig:alg-qgraph-defn:examples:pol}
\end{subfigure}
\hfill
\begin{subfigure}{.2\textwidth}
\centering
\scalebox{\scalefactor}{
\begin{tikzpicture}
  \node [qgstart] (start) at (0, 0) {$\qgstart$};
  \node (a1) at (0, 1) {$3 \qgat \open(q)$};
  \node (b1) at (-1, 2) {$5 \qgat \open(q')$};
  \node (b2) at (1, 2) {$7 \qgat \open(q')$};
  \node (c1) at (0, 3) {$8 \qgat \close(q')$};
  \node (d1) at (0, 4) {$8 \qgat \close(q)$};
  \node [qgend] (end) at (0, 5) {$\qgend$};

  \path [->] (start) edge (a1);
  \path [->] (a1) edge (b1);
  \path [->] (a1) edge (b2);
  \path [->] (b1) edge (c1);
  \path [->] (b2) edge (c1);
  \path [->] (c1) edge (d1);
  \path [->] (d1) edge (end);
\end{tikzpicture}
}
\caption{$G_3$.}
\label{sfig:alg-qgraph-defn:examples:nest}
\end{subfigure}
\Description{Examples of query graphs.}
\caption{Examples of query graphs.
  \ref{sfig:alg-qgraph-defn:examples:w4w3-rpal}: Corresponding to possible parse trees for the
  string $w_4 w_3 = babcacb$ according to the SemRE $r_{\text{pal}} = \Sigma^* a
  \query{\text{pal}}$.
  We indicate the labels on each node $v$ by writing $\qgidx(v) \qgat l(v)$.
  The path through the $\open$ node on the left is feasible if $\oracle(\text{pal}, bcacb) = \true$
  and the path on the right is feasible if $\oracle(\text{pal}, cb) = \true$. $\interp{G_1} = \true$
  if either of these paths is feasible.
  \ref{sfig:alg-qgraph-defn:examples:pol}: Corresponding to the string $w = w_1 w_2 w_3 w_4$
  according to the pattern $\Sigma^* \query{q} \Sigma^*$. The unlabelled vertices are all marked
  $\blank$. Observe how each $\open$ node may be delimited by any subsequently reachable matching
  $\close$ node. Our construction in Section~\ref{sub:alg:qgraph-build} exploits similar sharing to
  produce a query graph with only $O(|r| |w|)$ vertices.
  \ref{sfig:alg-qgraph-defn:examples:nest}: Query graph with ``\emph{nested}'' queries.
  Corresponding to the string $w = babcbc$ and the SemRE $r_{\text{nest}} = \Sigma^* a (\Sigma^* b
  \query{q'}) \land \query{q}$. $\interp{G_3} = \true$ iff the Boolean formula $\oracle(q, cbcbc)
  \land (\oracle(q', cbc) \lor \oracle(q', c))$ evaluates to $\true$.}
\label{fig:alg-qgraph-defn:examples}
\end{figure}

We present some examples of query graphs in Figure~\ref{fig:alg-qgraph-defn:examples}. Graph $G_1$
corresponds to the string $w_4 w_3 = babcacb$ and the pattern $\Sigma^* a \query{\text{pal}}$ from
earlier in this section.
In general, notice that the answer to the membership problem, i.e., whether $w \in \interp{r}$,
depends on some Boolean combination of the results of no more than $O(|r| |w|^2)$ oracle queries. By
allowing a single $\close$ node to demarcate multiple open queries, Figure~%
\ref{sfig:alg-qgraph-defn:examples:pol} illustrates how this Boolean function might nevertheless be
encoded using a query graph with only $O(|r| |w|)$ vertices.
Of course, it is not yet clear that small query graphs can be constructed for arbitrary instances of
the membership testing problem: We devote Section~\ref{sub:alg:qgraph-build} to actually
constructing a compact query graph such that $\interp{G_r^w}$ is $\true$ iff $w \in \interp{r}$. It
is also not clear how $\interp{G}$ can be computed: We show how to solve this problem in
Section~\ref{sub:alg:qgraph-eval}.
   \mclearpage
\subsection{Constructing the Query Graph}
\label{sub:alg:qgraph-build}

Given an SNFA $M$ and input string $w \in \Sigma^*$, our goal in this section is to construct a
query graph $G_M^w$ such that $\interp{G_M^w} = \true$ iff $M$ accepts $w$. We hope to construct
$G_M^w$ by making one left-to-right pass over the string. Furthermore, at a high level, we hope to
iteratively maintain a ``\emph{frontier}'' of vertices in $G_M^w$ corresponding to the reachable
states of $M$. After processing each subsequent character $w_i$, we create a new ``\emph{layer}'' of
vertices in the growing query graph, and add edges between these last two layers according to the
$w_i$-labelled transitions of $M$.

As an example, consider the SNFA $M_{q^*}$ shown in Figure~\ref{sfig:alg:qgraph-build:qstar:snfa},
which accepts strings of the form $(\Sigma^* \land \query{q})^*$. Let $w = abc$. Depending on how\
$w$ is partitioned, it is accepted in one of four different cases: when
\begin{equation}
\left.
\begin{aligned}
  \oracle(w, abc)                                       & = \true, \text{ or}      \\
  \oracle(w, a) \land \oracle(w, bc)                    & = \true, \text{ or}      \\
  \oracle(w, ab) \land \oracle(w, c)                    & = \true, \text{ or when} \\
  \oracle(w, a) \land \oracle(w, b) \land \oracle(w, c) & = \true.
\end{aligned}
\qquad \right\}
\label{eq:alg:graph-build:motiv-ex}
\end{equation}
Each of these alternatives corresponds to a different path through the SNFA. Observe that these
alternatives differ only in the sequence of $\strempty$-transitions exercised after processing each
character, which result in the query $q$ being closed (and possibly immediately reopened) at
different locations in the input string. The query graph $G_{q^*}^{abc}$ in Figure~%
\ref{sfig:alg:qgraph-build:qstar:qg} collectively summarizes the effect of these alternatives: The
edges labelled $a$, $b$ and $c$ in red indicate new layers of the query graph that are added after
each subsequent character is processed. These are straightforward to add. The remaining edges all
correspond to $\strempty$-transitions in the SNFA. The technical meat of this section is therefore
in designing an \emph{inter-character gadget} which summarizes the effect of $\strempty$-%
transitions.

\begin{figure}
\begin{subfigure}{0.18\textwidth}
\centering
\scalebox{\scalefactor}{
\begin{tikzpicture}
  \node [state, initial, accepting] (s0) at (0, 0) {$s_0$};
  \node [state, label=right:{$\open(q)$}] (s1) at (0, -1.5) {$s_1$};
  \node [state] (s2) at (0, -3) {$s_2$};
  \node [state, label=right:{$\close(q)$}] (s3) at (0, -4.5) {$s_3$};

  \path [->] (s0) edge [right] node {$\strempty$} (s1);
  \path [->] (s1) edge [right] node {$\strempty$} (s2);
  \path [->, loop right] (s2) edge [right] node {$\Sigma$} (s2);
  \path [->] (s2) edge [right] node {$\strempty$} (s3);
  \path [->, bend left] (s3) edge [left] node {$\strempty$} (s0);
\end{tikzpicture}
}
\caption{}
\label{sfig:alg:qgraph-build:qstar:snfa}
\end{subfigure}
\hfill
\begin{subfigure}{0.8\textwidth}
\centering
\scalebox{\scalefactor}{
\begin{tikzpicture}
  \node [qgstart] (v00) at (0, 0) {$\qgstart$};
  \node (v01) at (0, -1.5) {$1 \qgat \open(q)$};
  \node [qgblank] (v02) at (0, -3) {};

  \node [qgblank] (v12) at (2.5, -3) {};
  \node (v13) at (2.5, -4.5) {$2 \qgat \close(q)$};

  \node [qgblank] (v20) at (5, 0) {};
  \node (v21) at (5, -1.5) {$2 \qgat \open(q)$};
  \node [qgblank] (v22) at (5, -3) {};

  \node [qgblank] (v32) at (7.5, -3) {};
  \node (v33) at (7.5, -4.5) {$3 \qgat \close(q)$};

  \node [qgblank] (v40) at (10, 0) {};
  \node (v41) at (10, -1.5) {$3 \qgat \open(q)$};
  \node [qgblank] (v42) at (10, -3) {};

  \node [qgblank] (v52) at (12.5, -3) {};
  \node (v53) at (12.5, -4.5) {$4 \qgat \close(q)$};

  \node [qgend] (v60) at (15, 0) {$\qgend$};

  \path [->] (v00) edge (v01);
  \path [->] (v01) edge (v02);
  \path [->, thick, red] (v02) edge [above] node {$a$} (v12);
  \path [->] (v12) edge (v13);
  \path [->] (v12) edge (v22);
  \path [->, bend left=10] (v13) edge (v20);
  \path [->] (v20) edge (v21);
  \path [->] (v21) edge (v22);
  \path [->, thick, red] (v22) edge [above] node {$b$} (v32);
  \path [->] (v32) edge (v33);
  \path [->] (v32) edge (v42);
  \path [->, bend left=10] (v33) edge (v40);
  \path [->] (v40) edge (v41);
  \path [->] (v41) edge (v42);
  \path [->, thick, red] (v42) edge [above] node {$c$} (v52);
  \path [->] (v52) edge (v53);
  \path [->, bend left=10] (v53) edge (v60);
\end{tikzpicture}
}
\caption{}
\label{sfig:alg:qgraph-build:qstar:qg}
\end{subfigure}

\Description{Figure illustrating the importance of $\strempty$-transitions in building the query
  graph.}
\caption{\ref{sfig:alg:qgraph-build:qstar:snfa}: The SNFA $M_{q^*}$ accepts strings of the form
  $(\Sigma^* \land \query{q})^*$.
  Let $w = abc$. The query graph $G_{q^*}^{abc}$ in Figure~\ref{sfig:alg:qgraph-build:qstar:qg}
  describes the four possible ways in Equation~\ref{eq:alg:graph-build:motiv-ex} by which $M_{q^*}$
  might accept $w$.
  Note that edges in the query graph are unlabelled: The red coloured labels $a$, $b$ and $c$ are
  there simply to hint to the reader that these edges can be thought of as originating from the
  corresponding $s_2 \to s_2$ transition of the SNFA.}
\label{fig:alg:qgraph-build:qstar}
\end{figure}

%%%%

\mclearpage
\subsubsection{Preliminaries: Tentatively Feasible Subpaths, Query Contexts, and
  $\strempty$-Feasibility.}
\label{ssub:alg:qgraph-build:prelims}

Before discussing the gadget, we note that although every complete $s_0 \to^* s_f$ path is
well-parenthesized, its subpaths might have unmatched $\close$ and $\open$ states. We start by
investigating the structure of these subpaths.

Let $\pi = s_1 \to s_2 \to \dots \to s_k$ be a path through the SNFA starting at a specified state
$s_1$. We say that $\pi$ is \emph{tentatively well-parenthesized} if the sequence of state labels
$\lambda(\pi) = \lambda(s_1) \lambda(s_2) \dots \lambda(s_k)$ is a substring of some
well-parenthesized string. For example, the path in Figure~\ref{sfig:alg:qgraph-build:tentative:wp}
is tentatively well-parenthesized but the one in Figure~\ref{sfig:alg:qgraph-build:tentative:mp} is
not.

\begin{figure}
\begin{subfigure}{0.45\textwidth}
\centering
\scalebox{\scalefactor}{
\begin{tikzpicture} 
  \node [state, label=below:{$\close(q_1)$}] (s1) at (0, 0) {$s_1$};
  \node [state, label=below:{$\open(q_2)$}] (s2) at (2, 0) {$s_2$};
  \node [state, label=below:{$\close(q_2)$}] (s3) at (4, 0) {$s_3$};
  \node [state, label=below:{$\open(q_3)$}] (s4) at (6, 0) {$s_4$};

  \path [->] (s1) edge (s2);
  \path [->] (s2) edge (s3);
  \path [->] (s3) edge (s4);
\end{tikzpicture}
}
\caption{}
\label{sfig:alg:qgraph-build:tentative:wp}
\end{subfigure}
\hfill
\begin{subfigure}{0.45\textwidth}
\centering
\scalebox{\scalefactor}{
\begin{tikzpicture} 
  \node [state, label=below:{$\close(q_1)$}] (s1) at (0, 0) {$s_1$};
  \node [state, label=below:{$\open(q_3)$}] (s2) at (2, 0) {$s_2$};
  \node [state, label=below:{$\close(q_2)$}] (s3) at (4, 0) {$s_3$};
  \node [state, label=below:{$\open(q_3)$}] (s4) at (6, 0) {$s_4$};

  \path [->] (s1) edge (s2);
  \path [->] (s2) edge (s3);
  \path [->] (s3) edge (s4);
\end{tikzpicture}
}
\caption{}
\label{sfig:alg:qgraph-build:tentative:mp}
\end{subfigure}

\Description{Examples and non-examples of tentatively well-parenthesized paths.}
\caption{(Non-)Examples of tentatively well-parenthesized paths. Assume $q_3 \neq q_2$. The path in
  Figure~\ref{sfig:alg:qgraph-build:tentative:wp} is tentatively well-parenthesized while the path
  in Figure~\ref{sfig:alg:qgraph-build:tentative:mp} is not.}
\label{sfig:alg:qgraph-build:tentative}
\end{figure}

Given a tentatively well-parenthesized path $\pi$, let $\sigma_c$ be the list of its unbalanced
close nodes, and let $\sigma_o$ be the list of its unbalanced open nodes. For example, for the path
$\pi_1$ in Figure~\ref{sfig:alg:qgraph-build:tentative:wp}, $\sigma_{c, 1} = [ \close(q_1) ]$ and
$\sigma_{o, 1} = [ \open(q_3) ]$. Given a tentatively well-parenthesized path $\pi$, we refer to the
pair $(\sigma_c, \sigma_o)$ as its \emph{query context}, which we will denote by $\qcon(\pi) =
(\sigma_c, \sigma_o)$.

We then note, without elaboration, that it is possible to compute the query context $\qcon(\pi)$ of
a combined path:
\[ \pi = \underbrace{s_1 \to^* s_2}_{\pi_1} \to \underbrace{s_3 \to^* s_4}_{\pi_2}, \]
merely by consulting $\qcon(\pi_1)$ and $\qcon(\pi_2)$.

Now, recall that all paths from the start state $s_0$ to the end state $s_f$ in the SNFA are
well-parenthesized. A natural consequence is that all paths through $M$ (regardless of origin and
destination) are also tentatively well-parenthesized. Furthermore, for each state $s$, and for each
pair of paths $\pi = s_0 \to^* s$ and $\pi' = s_0 \to^* s$, $\qcon(\pi) = \qcon(\pi')$. Because all
paths to a given state $s$ share the same query context, we can now speak of the query context
$\qcon(s)$ of individual states $s$, without identifying any particular path $\pi = s_0 \to^* s$.

\begin{assumption}
\label{assm:alg:qgraph-build:snfa1}
We assume that every state $s$ in $M$ is both reachable from the start state, $s_0 \to^* s$, and can
itself reach the final state, $s \to^* s_f$. The SNFA $M_r$ constructed in Section~%
\ref{sub:alg:snfa} automatically satisfies this property if $r$ does not contain any occurrence of
$\bot$. Such sub-expressions can be eliminated by the application of simple rewrite rules and using
the observation that $w \notin \interp{\bot}$.
\end{assumption}

%%%%%%%%%%%%%%%%%%%%%%%%%%%%%%%%%%%%%%%%%%%%%%%%%%%%%%%%%%%%%%%%%%%%%%%%%%%%%%%%%%%%%%%%%%%%%%%%%%%%

We say that a tentatively well-parenthesized path $\pi$ through the SNFA is \emph{tentatively
feasible} if the oracle accepts all of its balanced queries. Similar ideas of tentative
well-parenthesization, query contexts, and tentative feasibility can also be developed for query
graphs.

Finally, we note that the feasibility of well-parenthesized $\strempty$-labelled paths depends on
responses from the oracle for queries of the form $\oracle(q, \strempty)$. Before constructing the
gadget, we therefore begin by determining all pairs of states, $(s_1, s_2)$, such that there exists
an $\strempty$-labelled feasible path from $s_1$ to $s_2$. In this case, we will write $s_1 \efp
s_2$. The set of all such pairs can be calculated using a graph search algorithm in $O(|r|^2)$
time. The details of this algorithm can be found in Appendix~\refapp{app:alg:qgraph-build:prelims}.
 \mclearpage
\subsubsection{The Inter-Character Gadget}
\label{ssub:alg:qgraph-build:gadget}

The purpose of the gadget is to summarize $\strempty$-labelled paths through the SNFA. As an
example, consider the SNFA in Figure~\ref{sfig:alg:qgraph-build:gadget:snfa}, and say that we are
currently in state $s_5$. Before processing the next character, the machine might choose to either
stay in the same state, or close and immediately reopen one or both of the queries $q_1$ and $q_2$.
Because of the well-parenthesized property of query opens and closes, $q_1$ must be closed
\emph{before} $q_2$ is closed, and can only be reopened \emph{after} $q_2$ is reopened. This
immediately suggests a three layer structure, where queries are sequentially closed in Layer~1,
(re-)opened in Layer~2, and where the last layer performs any remaining $\strempty$-transitions that
do not affect query context. We visualize the resulting gadget in Figure~%
\ref{sfig:alg:qgraph-build:gadget:qgraph}, and highlight the possible taken from state $s_5$ in
Figure~\ref{sfig:alg:qgraph-build:gadget:5e}.

\begin{figure}
\begin{subfigure}{0.18\textwidth}
\centering
\scalebox{\scalefactor}{
\begin{tikzpicture} 
  \node [state, initial] (s1) at (0, 0) {$s_1$};
  \node [state] (s2) at (0, -1.5) {$s_2$};
  \node [state, label=right:{$\open(q_2)$}] (s3) at (0, -3) {$s_3$};
  \node [state, label=right:{$\open(q_1)$}] (s4) at (0, -4.5) {$s_4$};
  \node [state] (s5) at (0, -6) {$s_5$};
  \node [state, label=right:{$\close(q_1)$}] (s6) at (0, -7.5) {$s_6$};
  \node [state, label=right:{$\close(q_2)$}] (s7) at (0, -9) {$s_7$};
  \node [state, accepting] (s8) at (0, -10.5) {$s_8$};

  \path [->] (s1) edge (s2);
  \path [->] (s2) edge (s3);
  \path [->, loop right] (s5) edge [right] node {$\Sigma$} (s5);
  \path [->] (s3) edge (s4);
  \path [->] (s4) edge (s5);
  \path [->] (s5) edge (s6);
  \path [->] (s6) edge (s7);
  \path [->, bend right] (s2) edge (s8);
  \path [->, bend left = 20] (s7) edge (s2);
  \path [->, bend left] (s6) edge (s4);
\end{tikzpicture}
}
\caption{}
\label{sfig:alg:qgraph-build:gadget:snfa}
\end{subfigure}
\hfill
\begin{subfigure}{.38\textwidth}
\centering
\scalebox{\scalefactor}{
\begin{tikzpicture}
  \node (q10) at (0, 1.5) {Layer 1};
  \node (q20) at (3, 1.5) {Layer 2};
  \node (q30) at (6, 1.5) {Layer 3};
  
  \node [qgblank] (q11) at (0, 0) {};
  \node [qgblank] (q12) at (0, -1.5) {};
  \node [qgblank] (q13) at (0, -3) {};
  \node [qgblank] (q14) at (0, -4.5) {};
  \node [qgblank] (q15) at (0, -6) {};
  \node (q16) at (0, -7.5) {$\close(q_1)$};
  \node (q17) at (0, -9) {$\close(q_2)$};
  \node [qgblank] (q18) at (0, -10.5) {};

  \node [qgblank] (q21) at (3, 0) {};
  \node [qgblank] (q22) at (3, -1.5) {};
  \node (q23) at (3, -3) {$\open(q_2)$};
  \node (q24) at (3, -4.5) {$\open(q_1)$};
  \node [qgblank] (q25) at (3, -6) {};
  \node [qgblank] (q26) at (3, -7.5) {};
  \node [qgblank] (q27) at (3, -9) {};
  \node [qgblank] (q28) at (3, -10.5) {};
  
  \node [qgblank] (q31) at (6, 0) {};
  \node [qgblank] (q32) at (6, -1.5) {};
  \node [qgblank] (q33) at (6, -3) {};
  \node [qgblank] (q34) at (6, -4.5) {};
  \node [qgblank] (q35) at (6, -6) {};
  \node [qgblank] (q36) at (6, -7.5) {};
  \node [qgblank] (q37) at (6, -9) {};
  \node [qgblank] (q38) at (6, -10.5) {};
  
  \path [->, bend right] (q13) edge (q17);
  \path [->, bend right] (q14) edge (q16);
  \path [->, bend right] (q15) edge (q16);
  \path [->, bend right] (q16) edge (q17);

  \path [->] (q21) edge (q32);
  \path [->] (q21) edge (q38);
  \path [->] (q21) edge (q32);
  \path [->, in = 180, out =0] (q22) edge (q38);
  \path [->, in = 180, out =0] (q22) edge (q37);
  \path [->, bend left = 10] (q23) edge (q36);
  \path [->] (q24) edge (q35);
  \path [->, bend right = 15] (q27) edge (q32);
  \path [->] (q27) edge (q38);

  \path [->, bend right] (q21) edge (q23);
  \path [->] (q22) edge (q23);
  \path [->] (q23) edge (q24);
  \path [->, bend left] (q26) edge (q24);
  \path [->, bend left] (q27) edge (q23);

  \path [->] (q11) edge (q21);
  \path [->] (q12) edge (q22);
  \path [->] (q13) edge (q23);
  \path [->] (q14) edge (q24);
  \path [->] (q15) edge (q25);
  \path [->] (q16) edge (q26);
  \path [->] (q17) edge (q27);
  \path [->] (q18) edge (q28);
  
  \path [->] (q21) edge (q31);
  \path [->] (q22) edge (q32);
  \path [->] (q23) edge (q33);
  \path [->] (q24) edge (q34);
  \path [->] (q25) edge (q35);
  \path [->] (q26) edge (q36);
  \path [->] (q27) edge (q37);
  \path [->] (q28) edge (q38);
\end{tikzpicture}
}
\caption{}
\label{sfig:alg:qgraph-build:gadget:qgraph}
\end{subfigure}
\hfill
\begin{subfigure}{.38\textwidth}
\centering
\scalebox{\scalefactor}{
\begin{tikzpicture}
  \node (q10) at (0, 1.5) {Layer 1};
  \node (q20) at (3, 1.5) {Layer 2};
  \node (q30) at (6, 1.5) {Layer 3};

  \node [qgblank] (q15) at (0, -6) {};
  \node (q16) at (0, -7.5) {$\close(q_1)$};
  \node (q17) at (0, -9) {$\close(q_2)$};
  
  \node (q23) at (3, -3) {$\open(q_2)$};
  \node (q24) at (3, -4.5) {$\open(q_1)$};
  \node [qgblank] (q25) at (3, -6) {};
  \node [qgblank] (q26) at (3, -7.5) {};
  \node [qgblank] (q27) at (3, -9) {};

  \node [qgblank] (q32) at (6, -1.5) {};
  \node [qgblank] (q33) at (6, -3) {};
  \node [qgblank] (q34) at (6, -4.5) {};
  \node [qgblank] (q35) at (6, -6) {};
  \node [qgblank] (q36) at (6, -7.5) {};
  \node [qgblank] (q37) at (6, -9) {};
  \node [qgblank] (q38) at (6, -10.5) {};

  \path [->, bend right] (q15) edge (q16);
  \path [->, bend right] (q16) edge (q17);

  \path [->] (q15) edge (q25);
  \path [->] (q16) edge (q26);
  \path [->] (q17) edge (q27);

  \path [->] (q23) edge (q24);
  \path [->, bend left] (q26) edge (q24);
  \path [->, bend left] (q27) edge (q23);
  \path [->] (q27) edge (q38);

  \path [->] (q23) edge (q33);
  \path [->] (q24) edge (q34);
  \path [->] (q25) edge (q35);
  \path [->] (q26) edge (q36);
  \path [->] (q27) edge (q37);

  \path [->, bend left = 10] (q23) edge (q36);
  \path [->] (q24) edge (q35);
  \path [->, bend right = 15] (q27) edge (q32);

\end{tikzpicture}
}
\caption{}
\label{sfig:alg:qgraph-build:gadget:5e}
\end{subfigure}

\Description{Construction of the gadget to summarize $\strempty$-paths.}
\caption{Construction of the gadget to summarize $\strempty$-paths.
  The SNFA in Figure~\ref{sfig:alg:qgraph-build:gadget:snfa} accepts strings of the form
  $(\query{q_1}^* \land \query{q_2})^*$.
  Assume for the sake of this example that $\epsilon$ is accepted by all queries.
  There is a path to $s_5$ after processing each character
  $w_i$ of the input string. The query context of $s_5$ is $[ \open(q_2), \open(q_1) ]$.
  Before processing the next character, $w_{i + 1}$ using the $s_5 \to s_5$ transition, the machine
  might make four choices:
  (\textit{a}) stay in the same state, or
  (\textit{b}) close and immediately reopen the query $q_1$, by following the sequence of
    transitions, $s_5 \to s_6 \to s_4 \to s_5$, or
  (\textit{c}) close and immediately reopen both queries: $s_5 \to s_6 \to s_7 \to s_2 \to^* s_5$,
    or
  (\textit{d}) move to the final state $s_8$.
  These alternatives are described in Figure~\ref{sfig:alg:qgraph-build:gadget:5e}. Observe how the
  gadget in Figure~\ref{sfig:alg:qgraph-build:gadget:qgraph} describes these and all other
  eventualities from the states of the SNFA.}
\label{sfig:alg:qgraph-build:gadget}
\end{figure}

\begin{figure}
\centering
\begin{tikzpicture}
  \node [circle, draw] (s0) at (0, 0) {};
  \node [circle, draw, label=above:{$[$}] (s1) at (1, 0) {};
  \node [circle, draw, label=above:{$]$}] (s2) at (2, 0) {};
  \node [circle, draw, fill=red!20, label=above:{$]$}] (s3) at (3, 0) {};
  \node [circle, draw] (s4) at (4, 0) {};
  \node [circle, draw, fill=red!20, label=above:{$]$}] (s5) at (5, 0) {};
  \node [circle, draw] (s6) at (6, 0) {};
  \node [circle, draw, fill=green!20, label=above:{$[$}] (s7) at (7, 0) {};
  \node [circle, draw] (s8) at (8, 0) {};
  \node [circle, draw, fill=green!20, label=above:{$[$}] (s9) at (9, 0) {};
  \node [circle, draw] (sa) at (10, 0) {};
  \node [circle, draw, label=above:{$[$}] (sb) at (11, 0) {};
  \node [circle, draw, label=above:{$]$}] (sc) at (12, 0) {};
  \node [circle, draw] (sd) at (13, 0) {};

  \draw [->] (s0) -- (s1) node {};
  \draw [->] (s1) -- (s2) node {};
  \draw [->] (s2) -- (s3) node {};
  \draw [->] (s3) -- (s4) node {};
  \draw [->] (s4) -- (s5) node {};
  \draw [->] (s5) -- (s6) node {};
  \draw [->] (s6) -- (s7) node {};
  \draw [->] (s7) -- (s8) node {};
  \draw [->] (s8) -- (s9) node {};
  \draw [->] (s9) -- (sa) node {};
  \draw [->] (sa) -- (sb) node {};
  \draw [->] (sb) -- (sc) node {};
  \draw [->] (sc) -- (sd) node {};

  \node [qgblank] (q0) at (0, -1) {};
  \node [qgblank] (q3) at (3, -1) {};
  \node [qgblank] (q51) at (4.5, -1) {};
  \node [qgblank] (q52) at (5.5, -1) {};
  \node [qgblank] (q7) at (7, -1) {};
  \node [qgblank] (q9) at (9, -1) {};
  \node [qgblank] (qd) at (13, -1) {};

  \draw [->] (q0) -- (q3) node [midway, above] {$E_{11}$};
  \draw [->] (q3) -- (q51) node [midway, above] {$E_{11}$};
  \draw [->] (q51) -- (q52) node [midway, above] {$E_{12}$};
  \draw [->] (q52) -- (q7) node [midway, above] {$E_{22}$};
  \draw [->] (q7) -- (q9) node [midway, above] {$E_{22}$};
  \draw [->] (q9) -- (qd) node [midway, above] {$E_{23}$};

  \draw [dashed] (s0) -- (q0) node {};
  \draw [dashed] (s3) -- (q3) node {};
  \draw [dashed] (s5) -- (q51) node {};
  \draw [dashed] (s5) -- (q52) node {};
  \draw [dashed] (s7) -- (q7) node {};
  \draw [dashed] (s9) -- (q9) node {};
  \draw [dashed] (sd) -- (qd) node {};

  \node (q0_bl) at ($(q0) + (0, -0.5)$) {};
  \node (q51_bl) at ($(q51) + (0, -0.5)$) {};
  \draw [decorate, decoration={brace, mirror}] (q0_bl) -- (q51_bl) node [midway, below] {Layer 1};
  \node (q52_bl) at ($(q52) + (0, -0.5)$) {};
  \node (q9_bl) at ($(q9) + (0, -0.5)$) {};
  \draw [decorate, decoration={brace, mirror}] (q52_bl) -- (q9_bl) node [midway, below] {Layer 2};
\end{tikzpicture}
\Description{Schematic procedure for how tentatively feasible $\strempty$-paths are broken down
  across the three layers of the gadget.}
\caption{Schematic procedure for how tentatively feasible $\strempty$-paths are broken down across
  the three layers of the gadget. All edges indicate $\epsilon$ transitions, and brackets, $]$ and
  $[$ indicate close and open nodes respectively. Unmatched have been shaded red and green. The
  edges in Layers~1 and~2 of the gadget summarize sequences of transitions between subsequent pairs
  of shaded nodes.}
\label{fig:alg:qgraph-build:gadget:eps}
\end{figure}

We formally define the gadget $\Gad$ as the triple:
{\allowdisplaybreaks
\begin{alignat}{1}
  \Gad      & = (V_\Gad, E_\Gad, l_\Gad), \text{ where} \label{eq:alg:qgraph-build:gadget} \\
  V_\Gad    & = S \times \{ 1, 2, 3 \}, \nonumber \\
  E_\Gad    & = E_{\Gad, 11} \union E_{\Gad, 12} \union
                E_{\Gad, 22} \union E_{\Gad, 23}, \text{ and} \nonumber \\
  l_\Gad(v) & = \begin{cases}
                \close(q) & \text{if } v = (s, 1) \text{ and } \lambda(s) = \close(q), \\
                \open(q)  & \text{if } v = (s, 2) \text{ and } \lambda(s) = \open(q), \text{ and} \\
                \blank    & \text{otherwise}.
                \end{cases} \nonumber
\end{alignat}
}
Here, $E_{\Gad, 11}$, $E_{\Gad, 12}$, $E_{\Gad, 22}$, and $E_{\Gad, 23}$ refer to edges between the
respective layers. Informally, edges in $E_{\Gad, 11}$ result in the progressive closing of
currently open queries and $E_{\Gad, 22}$ progressively (re-)opens queries. The direct connections
in $E_{\Gad, 12}$ provides shortcut connections to bypass these reopening actions. The last set of
edges, $E_{\Gad, 23}$ summarizes $\strempty$-labelled feasible paths that do not visibly change the
query context. These edges are formally defined as follows:
{\allowdisplaybreaks
\begin{alignat*}{1}
  E_{\Gad, 11} & = \left\{ (s, 1) \to (s''', 1) \middle|
                           \begin{array}{l}
                           s, s', s'', s''' \in S, q \in Q \text{ such that} \\
                           s' \efp s'' \text{ and }
                           s \to^\strempty s'
                           \text{ and }
                           s'' \to^\strempty s''' \text{ and }
                           \lambda(s''') = \close(q)
                           \end{array} \right\} \union \\
                    & \quad \left\{ (s, 1) \to (s''', 1) \middle|
                           \begin{array}{l}
                           s, s''' \in S, q \in Q \text{ such that} \\
                           s \to^\strempty s''
                           \text{ and }
                           \lambda(s''') = \close(q)
                           \end{array} \right\}, \\
  E_{\Gad, 12} & = \{ (s, 1) \to (s, 2) \mid s \in S \}, \\
  E_{\Gad, 22} & = \left\{ (s, 2) \to (s''', 2) \middle|
                           \begin{array}{l}
                           s, s', s'', s''' \in S, q \in Q \text{ such that} \\
                           s' \efp s'' \text{ and }
                           s \to^\strempty s'
                           \text{ and }
                           s'' \to^\strempty s''' \text{ and }
                           \lambda(s''') = \open(q)
                           \end{array} \right\} \union \\
                    & \quad \left\{ (s, 2) \to (s''', 2) \middle|
                           \begin{array}{l}
                           s, s''' \in S, q \in Q \text{ such that} \\
                           s \to^\strempty s''
                           \text{ and }
                           \lambda(s''') = \open(q)
                           \end{array} \right\}, \\
  E_{\Gad, 23} & = \{ (s, 2) \to (s', 3) \mid s, s', s'' \in S 
                      \text{ such that } s \efp s'' \text{ and } s'' \to^\strempty s'
                      \text{ or } s = s'\}.
\end{alignat*}
}
Our central claim about this construction is that there is a correspondence between
$\strempty$-labeled tentatively feasible paths in the SNFA and paths in the gadget. We formalize and
prove this claim in Lemma~\refapp{lem:qgraph-build:gadget:pc} of Appendix~%
\refapp{app:alg:qgraph-build:gadget}. We schematically visualize the breakdown of tentatively
feasible $\strempty$ paths across the three layers of the gadget in Figure~%
\ref{fig:alg:qgraph-build:gadget:eps}.
  \mclearpage
\subsubsection{The Final Query Graph Construction}
\label{ssub:alg:qgraph-build:final}

Let $w = w_1 w_2 \cdots w_n$, and recall that $S$ is the set of states in the SNFA. We construct the
final query graph $G_M^w$ by tiling $n + 1$ copies of the gadget, and connecting each pair of
adjacent gadget instances, $i - 1$ and $i$, based on the transitions made by $M$ while processing
the character $w_i$:
\begin{equation}
\left.
\begin{aligned}
  G_M^w           & = (V, E, \qgstart, \qgend, \qgidx, l), \text{ where} \\
  V               & = S \times \{ 1, 2, 3 \} \times \{ 1, 2, \dots, n+1 \} \\
                  & = \{ (s, k, i) \mid (s, k) \in V_\Gad \text{ and } 1 \leq i \leq n+1 \}, \text{ and} \\
  E               & = E_c \union E_\strempty, \text{ where} \\
  E_c             & = \{ (s, 3, i) \to (s', 1, i+1) \mid
                         s, s' \in S \text{ and transition } s \to^{w_i} s' \text{ for } 1 \leq i \leq n \}, \\
  E_\strempty     & = \{ (s, k, i) \to (s', k', i) \mid
                         (s, k) \to (s', k') \in E_\Gad, 1 \leq i \leq n+1 \}, \text{ and} \\
  \qgstart        & = (s_0, 1, 1), \\
  \qgend          & = (s_f, 3, n+1), \\
  \qgidx(s, k, i) & = i, \text{ and} \\
  l(s, k, i)      & = l_\Gad(s, k).
\end{aligned}
\qquad \right\}
\label{eq:alg:graph-build:final}
\end{equation}

%%%%%%%%%%%%%%%%%%%%%%%%%%%%%%%%%%%%%%%%%%%%%%%%%%%%%%%%%%%%%%%%%%%%%%%%%%%%%%%%%%%%%%%%%%%%%%%%%%%%

The following lemmas establish a connection between paths in $M$ and through $G_M^w$. Their proofs
may be found in Appendix~\refapp{app:alg:qgraph-build:final}.

\begin{makethm}{lemma}{lem:alg:qgraph-build:qg-pc1}[Path correspondence, Part 1]
Pick a path $\pi = s_0 \to^* s$ through $M$. If $\pi$ is tentatively feasible, then there is also a
tentatively feasible path $p = \qgstart \to^* (s, 3, |\str(\pi)|+1)$ through $G_M^w$ with the same
query context.
\end{makethm}

\begin{makethm}{lemma}{lem:alg:qgraph-build:qg-pc2}[Path correspondence, Part 2]
Pick a path $p = \qgstart \to^* v$ in $G_M^w$, where $v = (s, 3, i)$ for $s \in S$.
There is a tentatively well-parenthesized path $\pi = s_0 \to^* s$ in the SNFA with the same query
context, and such that if $p$ is tentatively feasible, then so is $\pi$.
\end{makethm}

%%%%%%%%%%%%%%%%%%%%%%%%%%%%%%%%%%%%%%%%%%%%%%%%%%%%%%%%%%%%%%%%%%%%%%%%%%%%%%%%%%%%%%%%%%%%%%%%%%%%

From Lemma~\ref{lem:alg:qgraph-build:qg-pc2}, it follows that each path $p = \qgstart \to^* \qgend$
is well-parenthesized. It is also easy to see that string indices monotonically increase along each
edge $e \in E$, and that $G_M^w$ is a DAG. It follows that $G_M^w$ is a well-formed query graph.
The following theorem is now a direct consequence of Lemmas~\ref{lem:alg:qgraph-build:qg-pc1} and~%
\ref{lem:alg:qgraph-build:qg-pc2}, and establishes the correctness of the query graph:
\begin{theorem}
\label{thm:alg:qgraph-build:final}
$M$ accepts $w$ iff $\interp{G_M^w} = \true$.
\end{theorem}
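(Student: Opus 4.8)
The plan is to read the theorem off the two path-correspondence lemmas, Lemma~\ref{lem:alg:qgraph-build:qg-pc1} and Lemma~\ref{lem:alg:qgraph-build:qg-pc2}, one for each direction of the biconditional. A preliminary observation does the glue work: a path whose query context is empty is well-parenthesized, and for such a path the recursive notion of feasibility --- of Section~\ref{sub:alg:snfa} for the SNFA, and the analogous one of Section~\ref{sub:alg:qgraph-defn} for the query graph --- coincides with tentative feasibility. I would prove this by induction on the length of the path, using the fact that a well-parenthesized label sequence splits as a concatenation of single $\blank$-nodes and matched pairs $\open(q)\cdots\close(q)$ enclosing a shorter well-parenthesized subpath, which is precisely the shape of the feasibility recursion; since every query along such a path is balanced, ``the oracle accepts every balanced query'' and ``the oracle accepts every query demanded by the feasibility recursion'' assert the same thing. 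The argument is identical for query graphs, reading the query string for a matched pair off the $\qgidx$ labels of its endpoints.

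\textbf{Forward direction.}
Suppose $M$ accepts $w$, so there is a feasible $w$-labelled path $\pi = s_0 \to^* s_f$. As a complete path of $M$ it is well-parenthesized (part of the SNFA definition), hence has empty query context, and by the preliminary observation it is tentatively feasible. Lemma~\ref{lem:alg:qgraph-build:qg-pc1} then supplies a tentatively feasible path $p = \qgstart \to^* (s_f, 3, |\str(\pi)|+1)$ in $G_M^w$ with the same query context, which is again empty. Since $\str(\pi) = w$ has length $n$, the endpoint is $(s_f, 3, n+1) = \qgend$, so the preliminary observation applied on the query-graph side shows $p$ is feasible. Hence a feasible $\qgstart \to^* \qgend$ path exists and $\interp{G_M^w} = \true$.

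\textbf{Backward direction.}
Suppose $\interp{G_M^w} = \true$, witnessed by a feasible path $p = \qgstart \to^* \qgend$. Being a complete query-graph path, $p$ is well-parenthesized with empty query context, so it is tentatively feasible. Lemma~\ref{lem:alg:qgraph-build:qg-pc2}, applied with $v = \qgend = (s_f, 3, n+1)$, produces a tentatively well-parenthesized path $\pi = s_0 \to^* s_f$ in $M$ with the same (empty) query context and, since $p$ is tentatively feasible, so is $\pi$; by the preliminary observation $\pi$ is feasible. It remains to see that $\pi$ is $w$-labelled. Along $p$ the index label rises from $1$ at $\qgstart$ to $n+1$ at $\qgend$, and the only edges of $G_M^w$ that change it are those in $E_c$, namely $(s,3,i)\to(s',1,i+1)$ arising from a transition $s \to^{w_i} s'$; hence $p$ reads $w_1 w_2 \cdots w_n = w$, and the correspondence of Lemma~\ref{lem:alg:qgraph-build:qg-pc2} transports this to $\pi$. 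Therefore $M$ accepts $w$, and the proof is complete.

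\textbf{Main obstacle.}
For this theorem essentially all the real content has been delegated to Lemmas~\ref{lem:alg:qgraph-build:qg-pc1} and~\ref{lem:alg:qgraph-build:qg-pc2}, whose proofs hinge on the three-layer bookkeeping of the gadget. The two points I would check most carefully at this level are: (i) that the preliminary observation genuinely lines up --- i.e.\ that, for empty query context, tentative feasibility and feasibility coincide on both the SNFA and the query graph, which requires the decomposition of well-parenthesized paths to match the feasibility recursion, including the edge-joined (rather than state-shared) form of path concatenation used in the definitions; and (ii) that Lemma~\ref{lem:alg:qgraph-build:qg-pc2} transports the input string, not merely the endpoint state and the query context. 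If the lemma as stated only promises agreement of endpoint and query context, I would first strengthen it (or its proof) to also record that the SNFA path it returns reads $w_1 \cdots w_n$, using exactly the $\qgidx$/$E_c$ argument above.
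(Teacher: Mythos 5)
Your proof is correct and follows exactly the route the paper intends: the paper presents this theorem as a direct consequence of Lemmas~\ref{lem:alg:qgraph-build:qg-pc1} and~\ref{lem:alg:qgraph-build:qg-pc2}, and your write-up simply makes explicit the two pieces of glue the paper leaves implicit (that feasibility and tentative feasibility coincide on well-parenthesized paths, and that the string label is transported through the $E_c$ edges). Your observation that Lemma~\ref{lem:alg:qgraph-build:qg-pc2} as stated does not explicitly record $\str(\pi) = w$ is a fair catch, and your proposed fix via the $\qgidx$/$E_c$ argument matches what the appendix proof of that lemma actually constructs.
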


  \mclearpage
\subsection{Evaluating Query Graphs Using Dynamic Programming}
\label{sub:alg:qgraph-eval}

Given a SemRE $r$ and a string $w$, we sequentially translate $r$ into an SNFA $M$ (using Figure~%
\ref{fig:alg:snfa}) and construct a query graph $G$ (using Equation~\ref{eq:alg:graph-build:final}).
Our final goal is now to evaluate $\interp{G}$.
Recall that all paths from $\qgstart$ to $\qgend$ are well-parenthesized, that a $\qgstart$--%
$\qgend$ path is feasible if the oracle accepts all of its queries, and that the entire graph $G$
evaluates to true if there is a feasible path. We now show how to determine this using dynamic
programming.

We start by introducing a few new properties of vertices in the query graph:
First, we say that a vertex $v$ in $G$ is \emph{alive} if there exists a tentatively feasible path
from $\qgstart$ to $v$.
% In this case, we write $\alive(v) = \true$.
Naturally:
\begin{theorem}
\label{thm:alg:qgraph-eval}
$\interp{G} = \true$ iff $\alive(\qgend) = \true$.
\end{theorem}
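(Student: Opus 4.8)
The plan is to reduce the statement to the single observation that, along a $\qgstart$--$\qgend$ path, feasibility and \emph{tentative} feasibility coincide. By definition $\interp{G} = \true$ holds exactly when there is a feasible path from $\qgstart$ to $\qgend$, whereas $\alive(\qgend) = \true$ holds exactly when there is a \emph{tentatively} feasible such path. One direction is immediate: a feasible path has all of its queries --- in particular all of its balanced ones --- accepted by the oracle, so it is tentatively feasible; hence $\interp{G} = \true$ implies $\alive(\qgend) = \true$. All the content is in the converse.

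For the converse I would first invoke the fact, already recorded just after Lemma~\ref{lem:alg:qgraph-build:qg-pc2}, that \emph{every} path $p = \qgstart \to^* \qgend$ is well-parenthesized. Such a $p$ has no unbalanced queries, so ``tentatively feasible'' (the oracle accepts every balanced query of $p$) is, for $p$, literally the same as ``the oracle accepts every query of $p$''. It therefore suffices to prove the following claim: \emph{if a well-parenthesized path $p$ through $G$ is tentatively feasible, then $p$ is feasible} in the sense of conditions~(1)--(3) of the definition. Given the claim, a tentatively feasible $\qgstart$--$\qgend$ path witnessing $\alive(\qgend)$ is feasible, so $\interp{G} = \true$.

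I would prove the claim by induction on the number of vertices of $p$, reusing the decomposition of well-parenthesized paths from the proof sketch of Theorem~\ref{thm:alg:snfa}. A one-vertex path must be labelled $\blank$ (an isolated $\open$ or $\close$ vertex is not well-parenthesized), so condition~(1) applies. Otherwise let $v_1$ be the first vertex of $p$. If $l(v_1) = \blank$, split $p$ as the singleton $v_1$ followed by the remaining path, which is strictly shorter, still well-parenthesized and still tentatively feasible; apply the induction hypothesis and finish by condition~(3). If $l(v_1) = \open(q)$, well-parenthesization forces a unique matching vertex $v_m$ with $l(v_m) = \close(q)$. When $v_m$ is the last vertex of $p$, the subpath strictly between $v_1$ and $v_m$ is strictly shorter, well-parenthesized and tentatively feasible, hence feasible by the induction hypothesis, and tentative feasibility of $p$ supplies $\oracle(q, w_i \cdots w_{j-1}) = \true$ with $i = \qgidx(v_1)$ and $j = \qgidx(v_m)$; this is exactly condition~(2). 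When $v_m$ is not the last vertex, split $p$ at $v_m$ into two strictly shorter well-parenthesized tentatively feasible subpaths and apply condition~(3).

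I expect the only delicate part to be bookkeeping rather than mathematics: verifying that every split yields \emph{strictly shorter} well-parenthesized pieces (so the induction is well founded), that tentative feasibility really is inherited by those pieces --- it is, since a query balanced within a well-parenthesized subpath of $p$ spans the same substring of $w$ when read inside $p$ --- and, if one is being careful about the degenerate two-vertex case $\open(q)\,\close(q)$, noting that it does not arise in the query graphs produced by our construction in Section~\ref{sub:alg:qgraph-build}. Beyond the path-correspondence machinery and the decomposition already used for Theorem~\ref{thm:alg:snfa}, no new ideas are needed, so the argument is short.
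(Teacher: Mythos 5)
Your proposal is correct: the paper offers no explicit proof (it presents the theorem as immediate from the definitions, since every $\qgstart$--$\qgend$ path is well-parenthesized and hence feasibility and tentative feasibility coincide for such paths), and your argument is exactly that observation, with the routine induction establishing the nontrivial direction spelled out. The bookkeeping points you flag (strict shortening of subpaths, inheritance of tentative feasibility, and the non-occurrence of a direct $\open$-to-$\close$ edge in the constructed query graphs) all check out.
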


\begin{figure}
\begin{center}
\mprset{sep=1.5em} % Default = 2em
% \resizebox{\textwidth}{!}{...}

\scalebox{0.75}{$
  \begin{array}{ccc}
    \inferrule*[Left=As]
               { }
               {\alive(\qgstart)}
    & \qquad \qquad \qquad &
    \inferrule*[Left=Ab]
               {l(v) = \blank \\
                v' \to v \\
                \alive(v')}
               {\alive(v)}
    \\[2em]
    \inferrule*[Left=Ao]
               {l(v) = \open(q) \\
                v' \to v \\
                \alive(v')}
               {\alive(v)}
    & &
    \inferrule*[Left=Ac]
               {l(v) = \close(q) \\
                \aq(v) \neq \emptyset}
               {\alive(v)}
  \end{array}
$}

\vspace{1.5em}

\scalebox{0.75}{$
    \inferrule*[Left=M]
               {l(v) = \close(q) \\
                v'\to v \\
                \alive(v') \\
                v'' \in \backref(v') \\
                \oracle(q, w_{\qgidx(v'')} \cdots w_{\qgidx(v)-1})}
               {v'' \in \aq(v)}
$}

\vspace{1.5em}

\scalebox{0.75}{$
  \begin{array}{ccc}
    \inferrule*[Left=Bb]
               {l(v) = \blank \\
                v' \to v \\
                \alive(v')}
               {\backref(v') \subseteq \backref(v)}
    & \qquad \qquad \qquad &
    \inferrule*[Left=Bo]
               {l(v) = \open(q) \\
                \alive(v)}
               {v \in \backref(v)}
  \end{array}
$}

\vspace{1em}

\scalebox{0.75}{$
    \inferrule*[Left=Bc]
               {l(v) = \close(q) \\
                \alive(v) \\
                v' \in \aq(v)}
               {\loq(v') \subseteq \backref(v)}
$}
\end{center}
\Description{Inference rules for evaluating the query graph.}
\caption{Inference rules for evaluating the query graph.
  $\loq(v')$ is defined in Equation~\ref{eq:alg:qgraph-eval:loq}.
  Rule~\textsc{Bc} has a flavor similar to the calculation of the grandparents of a vertex in a
  graph. Repeated application of this rule is responsible for the dominant $O(|r||w|^3)$ term in the
  final time complexity of our procedure. The rule is vacuous for the case of non-nested SemREs,
  because $\loq(v') = \emptyset$.}
\label{fig:alg:qgraph-eval}
\end{figure}

We present inference rules for determining the living nodes of a query graph in Figure~%
\ref{fig:alg:qgraph-eval}. Rules~$\textsc{As}$,~$\textsc{Ab}$ and~$\textsc{Ao}$ (for the starting
node, blank and open nodes respectively) are immediate. On the other hand, upon encountering a close
node $v$ with $l(v) = \close(q)$, we need to discharge the queries that have just been fully
delimited. In principle: we must recall all tentatively feasible paths leading to $v$, find the last
unclosed open vertex on each of these paths, and discharge these queries. We do this efficiently by
also propagating the set of \emph{backreferences} during query graph evaluation.

Let $v$ be a vertex in $G$, and consider some tentatively feasible path $p = \qgstart \to^* v$. We
define the backreference of $p$, $\backref(p)$, as its last unclosed open vertex. The backreference
of the vertex $v$ is simply defined as the set of backreferences along each tentatively feasible
path to $v$:
\begin{alignat}{1}
  \backref(v) & = \{ \backref(p) \mid \text{Tentatively feasible path } p = \qgstart \to^* v \}.
\end{alignat}
Upon encountering a close node, $v$, we consult the backreferences $\backref(v')$ at each of its
immediate predecessors $v'$, and discharge the relevant queries. Formally, let $p = \qgstart \to^*
v$ be a tentatively feasible path with $l(v) = \close(q)$. We define $\aq(p)$ as the matching open
for this last vertex $v$. % For example, \TODO.
Aggregating this quantity along all tentatively feasible paths leading to $v$, we write:
\begin{alignat}{1}
  \aq(v) & = \{ \aq(p) \mid \text{Tentatively feasible path } p = \qgstart \to^* v \}.
\end{alignat}
To complete the calculation of $\alive(v)$, we simply note that if $v$ is a close node, then $v$ is
alive iff $\aq(v)$ is non-empty (see Rules~\textsc{M} and~\textsc{Ac}).

The remaining rules in Figure~\ref{fig:alg:qgraph-eval} focus on computing the backreferences. Once
again, the calculation for blank and open nodes is straightforward. See Rules~\textsc{Bb} and~%
\textsc{Bo}.
Lastly, let $v$ be a living close node, and consider how one might calculate its backreferences.
Informally, we have just popped the last open query from the query context, and we need to
``\emph{bring forward}'' the backreferences from each immediate predecessor of the popped open
vertices. This motivates the definition:
\begin{alignat}{1}
  \loq(v') & = \Union_{v'' \to v'} \backref(v''),
\label{eq:alg:qgraph-eval:loq}
\end{alignat}
and its use in Rule~\textsc{Bc}.
At this point, the following result is straightforward:
\begin{lemma}
The inference rules in Figure~\ref{fig:alg:qgraph-eval} are sound and complete.
\end{lemma}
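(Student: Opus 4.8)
The plan is to read the rules of Figure~\ref{fig:alg:qgraph-eval} as a monotone inductive definition: they simultaneously define the least triple of maps $(\alive, \backref, \aq)$ closed under all the rules. \emph{Soundness} then means that every derived fact is semantically valid --- if the rules derive $\alive(v)$ there is a tentatively feasible path $\qgstart \to^* v$, and the derived sets $\backref(v)$, $\aq(v)$ are contained in the sets $\backref$ and $\aq$ defined earlier in this subsection --- while \emph{completeness} is the reverse inclusion. The key preliminary observation is that $G$ is a DAG whose $\qgidx$ labels weakly increase along every edge, and that on any tentatively well-parenthesized path the open matching a close node appears strictly before it; hence in every inference rule every vertex named in a premise lies no later --- and in the ``recursive'' premises strictly earlier --- than the vertex in the conclusion, under any fixed topological order of $G$. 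This guarantees that the least fixpoint exists and that both directions can be argued by induction along that topological order.

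For soundness I would induct on the stage of the fixpoint iteration. Rules \textsc{As}, \textsc{Ab}, \textsc{Ao}, \textsc{Bb}, \textsc{Bo} are immediate: appending a $\blank$ node, or an $\open$ node, to a tentatively feasible path keeps it tentatively feasible, and either preserves the last unclosed open (blank) or makes the appended vertex the new last unclosed open (open). The content is in \textsc{M}, \textsc{Ac}, and \textsc{Bc}. Given the premises of \textsc{M}, take a tentatively feasible path $p' = \qgstart \to^* v'$ whose last unclosed open is $v''$ (this exists because $v'' \in \backref(v')$ and, by the inductive hypothesis, the semantic $\backref(v')$ contains $v''$), and extend $p'$ along the edge $v' \to v$. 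Since $v''$ is the last unclosed open of $p'$, the portion of $p'$ strictly after $v''$, together with $v$, is well-parenthesized and its balanced queries are exactly those of $p'$ (all oracle-accepted), so that portion is feasible; since $l(v) = \close(q)$ must match $l(v'') = \open(q)$ (forced by well-parenthesization of complete $\qgstart$--$\qgend$ paths) and the oracle accepts $w_{\qgidx(v'')}\cdots w_{\qgidx(v)-1}$, clause~(2) of feasibility makes $v'' \to^* v$ feasible and clause~(3) glues it onto the prefix, witnessing $v'' \in \aq(v)$ and, via \textsc{Ac}, $\alive(v)$. Rule \textsc{Bc} is verified the same way: a path realizing $v' \in \aq(v)$ has $v'$ as the open matched by the close $v$, so once $q$ is closed the new last unclosed open is whichever vertex was the last unclosed open just before $v'$ --- an element of $\backref(v'')$ for some predecessor $v'' \to v'$, i.e.\ of $\loq(v')$.

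For completeness I would fix a tentatively feasible path $p = \qgstart \to^* v$ and induct primarily on the topological position of $v$ and secondarily on the length of $p$. If $l(v) = \blank$ or $l(v) = \open(q)$, delete $v$ from $p$ to obtain a tentatively feasible path to its predecessor $v'$; the inductive hypothesis yields $\alive(v')$ and membership of the relevant backreference in the derived $\backref(v')$, and we close under \textsc{Ab}/\textsc{Ao} and \textsc{Bb}/\textsc{Bo}. If $l(v) = \close(q)$, let $v^\star$ be the open on $p$ matched by the final close $v$, and let $v'$ be the predecessor of $v$ on $p$; the prefix of $p$ ending at $v'$ is tentatively feasible with last unclosed open $v^\star$, so by the inductive hypothesis applied to $v'$ (which is strictly earlier than $v$) we get $v^\star$ in the derived $\backref(v')$; tentative feasibility of $p$ gives the oracle acceptance, so \textsc{M} derives $v^\star \in \aq(v)$ and \textsc{Ac} derives $\alive(v)$; finally \textsc{Bc} recovers every backreference of $p$ at $v$ as an element of $\loq(v^\star) = \bigcup_{v'' \to v^\star} \backref(v'')$, using the inductive hypothesis on the (earlier) vertices $v''$.

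The main obstacle I anticipate is keeping the two nested inductions in the close-node case genuinely well-founded: when the inductive hypothesis is invoked to learn $v^\star \in \backref(v')$ (and likewise for the predecessors $v''$ of $v^\star$ used by \textsc{Bc}), one must be sure those vertices are strictly earlier than $v$ in the topological order, which is exactly where the ``indices increase along paths, and a matched open precedes its close'' observation does the work. A secondary subtlety is confirming that $v^\star$ carries the query label $q$ of $l(v) = \close(q)$ and that the interior subpath $v^\star \to^* v$ is \emph{feasible}, not merely tentatively feasible; both follow from the well-parenthesization guarantees for query graphs together with Lemma~\ref{lem:alg:qgraph-build:qg-pc2}, but they must be cited explicitly rather than assumed.
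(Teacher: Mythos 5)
The paper itself offers no proof of this lemma --- it is asserted as ``straightforward'' immediately before the statement, and no argument appears in the appendix either --- so there is no official proof to compare against. Your plan is essentially the right one and is more careful than the paper: treating the rules as a monotone inductive definition, observing that the DAG structure and the fact that a matched open strictly precedes its close make every premise refer to a topologically earlier vertex, and then running soundness by induction on fixpoint stages and completeness by induction on a topological order. The treatment of \textsc{As}, \textsc{Ab}, \textsc{Ao}, \textsc{Bb}, \textsc{Bo}, \textsc{M}, and \textsc{Ac} is correct, and your closing caveats (that $l(v^\star)=\open(q)$ is forced by well-parenthesization via Lemma~\ref{lem:alg:qgraph-build:qg-pc2} and Assumption~\ref{assm:alg:qgraph-build:snfa1}, and that the interior subpath must be shown feasible rather than merely tentatively feasible) identify exactly the right technical obligations.

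The one place where your argument is aimed in the wrong direction is the \emph{soundness} of \textsc{Bc}. You argue that the backreference of a path witnessing $v' \in \aq(v)$ lies in $\loq(v')$; that is the containment $\backref(v) \subseteq \bigcup_{v' \in \aq(v)} \loq(v')$, which is what \emph{completeness} at a close node needs. Soundness of \textsc{Bc} is the reverse inclusion: every $u \in \loq(v') = \bigcup_{v'' \to v'} \backref(v'')$ must be realized as the last unclosed open of \emph{some} tentatively feasible path to $v$. This requires a splicing step you perform for \textsc{M} but not here: take a tentatively feasible path $p_1 = \qgstart \to^* v''$ with last unclosed open $u$, extend it along $v'' \to v'$, and then append the balanced segment $v' \to^* v$ extracted from the path witnessing $v' \in \aq(v)$; that segment is feasible by clause~(2) of the feasibility definition (its balanced queries, including the outer one opened at $v'$ and closed at $v$, are all oracle-accepted), so clause~(3) makes the spliced path tentatively feasible with last unclosed open $u$. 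With that addition both inclusions hold at close nodes and the argument is complete.
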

One may therefore mechanically compute $\alive(v)$, $\backref(v)$ and $\aq(v)$ by repeatedly
applying the evaluation rules until fixpoint. This completes our description of the SemRE membership
testing procedure.
   \mclearpage
\subsection{Performance Analysis of Our Algorithm}
\label{sub:alg:analysis}

Let $r$ be a semantic regular expression, and $w \in \Sigma^*$ be the input string. Let $M_r$ be the
SNFA constructed using Figure~\ref{fig:alg:snfa}, and $G_{M_r}^w$ be the resulting query graph.
The proof of the following claim be found in Appendix~\refapp{app:alg:analysis}. In short, it
follows from bounding the sizes of the auxiliary sets, $\backref(v)$, $\aq(v)$ and $\loq(v)$ and the
observation that the query graph is a DAG and a subsequent accounting of the cost of applying each
of the inference rules from Figure~\ref{fig:alg:qgraph-eval} to each of the vertices in the query
graph.
\begin{makethm}{theorem}{thm:alg:analysis:main}
Assume the oracle responds in unit time.
We can determine whether $w \in \interp{r}$ in $O(|r|^2 |w|^2 + |r| |w|^3)$ time and with at most
$O(|r| |w|^2)$ calls to the oracle.
If $r$ does not contain nested queries, then the running time of our algorithm is $O(|r|^2 |w|^2)$.
Furthermore, if $\skel(r)$ is unambiguous, then the algorithm runs in $O(|r|^2 |w|)$ time and
discharges $O(|r| |w|)$ oracle queries.
\end{makethm}
Here, we write $\skel(r)$ (``\emph{skeleton}'') for the underlying classical regular expression,
which may be calculated by stripping away all oracle refinements occurring within the SemRE $r$.
Also recall that a (classical) regular expression $r$ is unambiguous if every string admits a single
parse tree~\cite{Book:IEEETransactions1971}.

% TODO-Revision: Something about weak optimality
\begin{comment}
Lastly, we analyze the query efficiency of our algorithm. Assume $w \in \skel(r)$ (otherwise
the query graph is not connected and will be immediately rejected by the algorithm), and the oracle
$\oracle_{T}(q,w) = \true$ for every $q$ and $w$. Then we have the following theorem:

\begin{theorem}
\label{the:alg:analysis:queyr}
For any query $(q, w_s)$ discharged by our algorithm, there exists a parse tree $T$ for $w \in
\interp{r}$ with $\oracle_{T}$, such that $(q, w_s)$ is a query that is encoded by $T$.
\end{theorem}

In other word, our algorithm will not ask oracle any query that does not correspond to any potential
parse tree. Indeed this is a rather weak result, as the optimal query sequence will be bounded by
the decision tree complexity. Nevertheless, this result shows how our algorithm can leverage syntax
information to choose what query to ask in contrast to the baseline approach.
\end{comment}

          \mclearpage
\section{The Complexity of Matching Semantic Regular Expressions}
\label{sec:complexity}

We will now establish some (admittedly simple) lower bounds on the complexity of the membership
testing problem. Our first result will show that any SemRE matching algorithm needs to make at least
quadratic (in the string length) number of oracle invocations in the worst case. Next, in Section~%
\ref{sub:complexity:triangles}, we will reduce the problem of finding triangles in graphs to that
of matching SemREs.

%%%%%%%%%%%%%%%%%%%%%%%%%%%%%%%%%%%%%%%%%%%%%%%%%%%%%%%%%%%%%%%%%%%%%%%%%%%%%%%%%%%%%%%%%%%%%%%%%%%%

\subsection{Query Complexity}
\label{sub:complexity:query}

Note that we have so far not made any assumptions about the set of possible oracle queries $Q$. Our
first result can be stated in one of two forms, depending on whether $Q$ is finite or not:
\begin{makethm}{theorem}{thm:complexity:quad}
% \begin{enumerate}
% \item
  If the set of queries $Q = \{ q_1, q_2, \dots \}$ is unbounded, then $\Omega(|r| |w|^2)$
  oracle invocations are necessary to determine whether $w \in \interp{r}$.
% \item
  If the query space $Q$ is finite but non-empty, then $\Omega(|w|^2)$ oracle invocations are
  necessary to determine whether $w \in \interp{r}$.
% \end{enumerate}
\end{makethm}
The proof in Appendix~\refapp{app:complexity} essentially fleshes out the observation that matching
$\Sigma^* \query{q} \Sigma^*$ requires us to examine \emph{every} substring of $w$.

%%%%%%%%%%%%%%%%%%%%%%%%%%%%%%%%%%%%%%%%%%%%%%%%%%%%%%%%%%%%%%%%%%%%%%%%%%%%%%%%%%%%%%%%%%%%%%%%%%%%

\begin{note}
Note that Theorem~\ref{thm:complexity:quad} only describes the asymptotic growth of the worst-case
query complexity. On the other hand, specific SemREs might admit more efficient matching algorithms.
For example, whether a string $w$ matches $(\Sigma \land \query{q}) \Sigma^*$ can be determined with
only a single oracle query, by checking whether $\oracle(q, w_1) = \true$ (where $w_1$ is the first
letter of $w$).
\end{note}

\begin{note}
We also note that Theorem~\ref{thm:complexity:quad} only holds in the black-box setting when one is
not allowed to make any further assumptions about the oracle. In specific situations, such as when
the oracle is a set of strings or performs a computation that is easy to characterize, it might be
possible to perform additional optimizations so that these lower bounds no longer hold.
\end{note}

\mclearpage
\subsection{Membership Testing is as Hard as Finding Triangles}
\label{sub:complexity:triangles}

The gap between the worst case time complexity of the SNFA algorithm and the lower bound of
Theorem~\ref{thm:complexity:quad} is frustrating. We therefore conclude this section with a defense
of our algorithm.

Recall that the $O(|r||w|^3)$ term arises because of the time needed to compute the updated set of
backreferences at each close node in the query graph. This calculation is essentially multiplying
the Boolean vector listing backreferences from the current node with the matrix containing the
global set of backreferences. This observation led us to compare the SemRE membership testing
problem with problems in Boolean matrix multiplication and graph theory.

Let $G = (V, E)$ be an undirected graph with $V = \{ v_1, v_2, \dots, v_n \}$. We say that $G$ has a
\emph{triangle} if there are distinct vertices $v_i, v_j, v_k \in V$ with edges $\{ v_i, v_j \},
\{ v_j, v_k \}, \{ v_k, v_i \} \in E$. We will now construct a SemRE $r_\Delta$ and a string $w_G$
such that $w_G \in \interp{r_\Delta}$ exactly when $G$ has a triangle.

Let $\Sigma = \{ 1, 2, \dots, n, \# \}$ be an alphabet with $n + 1$ symbols, and where $\#$ is a
special delimiter symbol. Punning with the set of edges, let $E \in Q$ be a query such that for
non-empty strings $w = w_1 w_2 \dots w_k \in \Sigma^*$,
\[ \oracle(E, w) = \true \iff \{ w_1, w_k \} \in E. \]
Now consider the following SemRE,
\begin{alignat}{1}
  r_\Delta & = \Sigma^*
               \# \underbrace{(\Sigma \cdot
                               \underbrace{(\Sigma \Sigma^* \# \Sigma) \land \query{E}}_{r_{ij}} \cdot
                               \underbrace{(\Sigma \Sigma^* \# \Sigma) \land \query{E}}_{r_{jk}} \cdot
                               \Sigma) \land \query{E}}_{r_{ik}}
               \Sigma^*.
\label{eq:complexity:rdelta}
\end{alignat}
The marked subexpressions $r_{ij}$, $r_{jk}$ and $r_{ik}$ respectively identify portions of the
string that identify the endpoints of edges $\{ v_i, v_j \}$, $\{ v_j, v_k \}$, $\{ v_k, v_i \}$
respectively. From the semantics of $r_\Delta$ and the definition of $\oracle(E, w)$, it follows
that:
\begin{lemma}
\label{lem:complexity:triangles:interp}
Let $w_G = \#11 \#22 \#33 \cdots \#nn$. Then, $w_G \in \interp{r_\Delta}$ iff $G$ contains a
triangle.
\end{lemma}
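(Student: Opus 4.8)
The plan is to unravel the semantics of $r_\Delta$ on the specific string $w_G = \#11\#22\#33\cdots\#nn$ and show that a successful parse encodes exactly a choice of three vertices forming a triangle. I would argue both directions explicitly. For the forward direction, suppose $w_G \in \interp{r_\Delta}$. By the definition of $\interp{\cdot}$ for concatenation and Kleene-$*$, there is a factorization $w_G = u \cdot \#\,v \cdot u'$ where $u \in \interp{\Sigma^*}$, $u' \in \interp{\Sigma^*}$, and $v \in \interp{r_{ik}}$ — but I must be careful: the leading $\#$ in $r_\Delta$ must align with one of the literal $\#$ symbols in $w_G$, and since every $\#$ in $w_G$ is immediately followed by a two-character block $kk$ and then another $\#$ (or end of string), the factorization forces a rather rigid structure. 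Specifically, the outermost $\query{E}$-refined block $\Sigma \cdot r_{ij} \cdot r_{jk} \cdot \Sigma$ must match a contiguous substring $v$ of $w_G$ whose first and last characters satisfy $\oracle(E,v)=\true$, i.e. form an edge. The first character of $v$ is the character right after some $\#$, which is a vertex label $i$; the last character of $v$ is some vertex label, and I claim it must be $k$.

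The key structural step is to pin down the three "$\Sigma\Sigma^*\#\Sigma$"-shaped subexpressions. Each of $r_{ij}$ and $r_{jk}$ matches a substring of the form (one symbol)(stuff)($\#$)(one symbol) where the oracle confirms the first and last symbols form an edge. Because $w_G$ has $\#$ only in the pattern $\#aa$, and because $r_{ij}$ and $r_{jk}$ are juxtaposed inside the outer block sandwiched by single $\Sigma$'s, I would track where each internal $\#$ lands. The cleanest route: observe that within $w_G$, reading a block that starts just after a $\#$, the only way a substring of the form $\alpha\,\sigma^*\,\#\,\beta$ sits inside $w_G$ is if $\alpha$ is the first of some $\#ii$ block and $\beta$ is the first character of the next block it reaches, namely a $j$ with the $\#$ separating block $i$ from block $j$ — and the intermediate "$\Sigma^*$" just soaks up the second copy of $i$. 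So $r_{ij}$ matching forces $\{v_i,v_j\}\in E$ for the two vertices whose blocks are consecutive-ish; doing this carefully for all three refinements $r_{ij}, r_{jk}, r_{ik}$ yields three edges $\{v_i,v_j\},\{v_j,v_k\},\{v_k,v_i\}$ on three vertices, and distinctness of $i,j,k$ follows because the substrings occupy distinct, properly ordered positions (so $i<j<k$ in the block ordering, hence distinct). That gives a triangle. For the converse, given a triangle $v_i,v_j,v_k$ with $i<j<k$, I would exhibit the explicit factorization: $\Sigma^*$ absorbs the prefix $\#11\cdots\#(i{-}1)(i{-}1)$, the outer block matches the substring running from the first $i$ after the $i$-th $\#$ through the first $k$ of the $k$-th block, with $r_{ij}$ matching $i\cdots\#j$, $r_{jk}$ matching $j\cdots\#k$, the flanking $\Sigma$'s matching the leading $i$ and trailing $k$ of that span appropriately, and the trailing $\Sigma^*$ absorbing the rest; then all three oracle calls return $\true$ by the assumption that the relevant pairs are edges, so $w_G\in\interp{r_\Delta}$.

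The main obstacle I anticipate is the bookkeeping in the forward direction: verifying that the rigid block structure of $w_G$ genuinely forces the three refined subexpressions to "see" endpoint-pairs that are actual vertex labels forming edges, rather than, say, a $\#$ or a mislabeled symbol, and that the three resulting vertices are distinct. This is essentially a careful case analysis on where the $\#$'s inside $r_{ij}$, $r_{jk}$, and $r_{ik}$ can land, using the fact that $w_G$'s $\#$-delimited blocks each have length exactly two. I expect no deep difficulty here — it is a finite, syntactic argument — but it is the part that requires the most attention to get exactly right, since an off-by-one in the block alignment could spuriously allow or forbid a parse. Everything else (the semantics of $+$, concatenation, $*$, and the oracle refinement) follows directly from Equation~\ref{eq:semre:semantics}.
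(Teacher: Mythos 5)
Your proposal is correct and follows the same (and essentially only) line of argument the paper intends: the paper states the lemma with no written proof, asserting it "follows from the semantics of $r_\Delta$ and the definition of $\oracle(E,w)$," and your case analysis of where the $\#$-delimited length-two blocks force the endpoints of $r_{ij}$, $r_{jk}$, and the outer block to land is exactly the omitted bookkeeping. The only nit is a harmless off-by-one in your narration of which copy of $i$ the leading $\Sigma$ versus $r_{ij}$'s first $\Sigma$ consumes; since both copies are the same symbol, the oracle arguments and the conclusion are unaffected.
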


One objection to this reduction might be the potentially large alphabet with $|\Sigma| = O(|V|)$.
This can be easily remedied by rendering the vertices of $V$ in binary, and rewriting $r_\Delta$ and
$w_G$ using the ternary alphabet $\Sigma_3 = \{ 0, 1, \# \}$. From here, it follows that:
\begin{theorem}
\label{thm:complexity:triangles}
Given a string $w$, if we can determine whether $w \in \interp{r_\Delta}$ in $D(|w|)$ time, then we
can determine whether a given graph $G$ contains triangles in $D(|V| \log(|V|))$ time.
\end{theorem}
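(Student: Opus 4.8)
The plan is to upgrade the interpretation result of Lemma~\ref{lem:complexity:triangles:interp} into an actual algorithmic reduction, after first replacing the $O(|V|)$-letter alphabet used there by the fixed ternary alphabet $\Sigma_3 = \{0, 1, \#\}$. Fix $k = \lceil \log_2 |V| \rceil$ and write $\mathrm{bin}(i) \in \{0,1\}^k$ for the zero-padded binary encoding of vertex $v_i$. I would encode $G$ as
\[
  w_G = \mathrm{bin}(1)\, \#\, \mathrm{bin}(1)\, \#\, \mathrm{bin}(2)\, \#\, \mathrm{bin}(2)\, \#\, \cdots\, \#\, \mathrm{bin}(n)\, \#\, \mathrm{bin}(n)\, \#,
\]
so that $|w_G| = O(|V|\log|V|)$, and obtain $r_\Delta$ over $\Sigma_3$ from the template of Equation~\ref{eq:complexity:rdelta} by the obvious substitution: each single-letter ``vertex slot'' $\Sigma$ becomes the block $(0+1)^k$, each ``filler'' $\Sigma^*$ becomes $\Sigma_3^*$, enough literal $\#$'s are kept/inserted so that every $k$-bit block occurring in the construction is $\#$-delimited, and the query $E$ becomes a query $E'$ whose oracle reads the first $k$ and (ignoring a trailing $\#$) the last $k$ symbols of its argument as vertex indices $a, d$ and returns $\true$ iff $a, d \in \{1,\dots,|V|\}$ and $\{v_a, v_d\} \in E$. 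Both $w_G$ and $r_\Delta$ (the latter of size $O(\log|V|)$) are computable in $O(|V|\log|V|)$ time.

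The substantive step is to re-prove the analogue of Lemma~\ref{lem:complexity:triangles:interp}, namely $w_G \in \interp{r_\Delta}$ iff $G$ has a triangle. The ``if'' direction just transcribes the witnessing match from the proof of that lemma with each vertex symbol now spelled out in $k$ bits. The ``only if'' direction is where I expect the main obstacle to lie: a length-$k$ window of $w_G$ need not, a priori, align with an intended $\mathrm{bin}(\cdot)$ block, so one must rule out ``phantom'' triangles produced by misaligned bit windows making $E'$ return $\true$ for the wrong reason. This is precisely why every $k$-bit block in $w_G$ is surrounded by $\#$'s and why the vertex slots of $r_\Delta$ are anchored against them: since $\#$ never occurs inside a block and consecutive $\#$'s bracket exactly $k$ bits, any substring the oracle $E'$ is ever consulted on has its first and last $k$-bit groups forced to coincide with genuine vertex blocks, and a feasible $r_\Delta$-match therefore decodes to three vertices $v_a, v_b, v_d$ with $\{v_a,v_b\},\{v_b,v_d\},\{v_a,v_d\}\in E$ exactly as in the original argument. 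Carefully checking this alignment/soundness claim — i.e., that the chosen placement of $\#$'s really does make every accepted match block-aligned — is the one place requiring genuine attention; everything else is bookkeeping.

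Finally I would assemble the reduction. Suppose a procedure decides ``$w \in \interp{r_\Delta}$'' in $D(|w|)$ time, with $D$ taken nondecreasing and at least linear (so that constant and sub-constant factors inside its argument are absorbed). To test whether $G$ has a triangle, construct $w_G$ and $r_\Delta$ in $O(|V|\log|V|)$ time, run the procedure on this instance in $D(|w_G|) = D(O(|V|\log|V|)) = D(|V|\log|V|)$ time, and return its answer, which is correct by the previous paragraph. The total running time is $O(|V|\log|V|) + D(|V|\log|V|) = D(|V|\log|V|)$, as claimed.
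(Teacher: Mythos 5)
Your overall strategy --- binary-encode the vertices over $\Sigma_3 = \{0,1,\#\}$, widen each vertex slot of Equation~\ref{eq:complexity:rdelta} to $(0+1)^k$, and re-prove the analogue of Lemma~\ref{lem:complexity:triangles:interp} --- is exactly the route the paper intends, and your bookkeeping about construction time, the $O(\log|V|)$ size of the rewritten $r_\Delta$, and the monotonicity assumption on $D$ is fine. The problem sits precisely in the step you flagged but did not carry out: soundness of \emph{your} encoding. You chose $w_G = \mathrm{bin}(1)\,\#\,\mathrm{bin}(1)\,\#\,\mathrm{bin}(2)\,\#\,\mathrm{bin}(2)\,\#\cdots$, inserting a $\#$ between the two copies of each vertex as well as between consecutive vertices. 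That does settle bit-level alignment (every maximal run of bits is a genuine $k$-bit block), but it destroys the other structural fact the original reduction relies on: in $\#ii\#jj\#\cdots$ the symbol immediately following a first copy is guaranteed to be the second copy of the \emph{same} vertex, and this is what forces the three oracle checks to share endpoints and hence certify a triangle. In your uniform encoding all $2n$ blocks are indistinguishable $\#$-delimited $k$-bit words, so a match merely selects blocks $p,\,p{+}1,\,q,\,q{+}1,\,s,\,s{+}1$ and verifies the three edges $\{v(p),v(s{+}1)\}$, $\{v(p{+}1),v(q)\}$, $\{v(q{+}1),v(s)\}$, where $v(\cdot)$ decodes a block to its vertex; nothing forces $p$, $q$, $s$ to be first copies. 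A graph whose only edges are, say, $\{1,8\}$, $\{2,5\}$, $\{5,7\}$ has no triangle yet admits a match (take $p$ at the second copy of vertex $1$, $q$ at a first copy of $5$, $s$ at the second copy of $7$). So the ``only if'' direction of your lemma analogue is false as stated.

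The fix is to transliterate the paper's delimiter placement literally: $w_G = \#\,\mathrm{bin}(1)\mathrm{bin}(1)\,\#\,\mathrm{bin}(2)\mathrm{bin}(2)\,\#\cdots$, with no $\#$ between the two copies. Then every $\#$ is followed by exactly $2k$ bits, so a $(0+1)^k$ slot anchored immediately after a literal $\#$ is forced onto a first copy, the next $(0+1)^k$ slot in the concatenation is forced onto the matching second copy, and the first-/second-copy distinction is again purely positional; one checks as in the original lemma that every match decodes to $i<j<k$ with all three edges present. (Alternatively, keep your per-block delimiters but use a distinguishable separator such as $\#\#$ between vertex pairs.) With that correction the rest of your argument goes through and yields the stated $D(|V|\log|V|)$ bound.
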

There are triangle finding algorithms which run in $O(|V|^{3 - \epsilon})$ time. For example,
express the set of edges $E$ as an adjacency matrix and observe that $G$ contains a triangle iff the
trace (sum of diagonal elements) of $E^3$ is non-zero. Still, these procedures rely on algorithms
for fast (integer / Boolean) matrix multiplication, and are therefore impractical.
No purely ``\emph{combinatorial}'' algorithms are known that can find triangles in
$O(|V|^{3 - \epsilon})$ time~\cite{Williams2:FOCS2010, Yu2018:Triangles}. We conclude that finding
practical algorithms which perform SemRE membership testing in $O(|r|^c |w|^{3 - \epsilon})$ time
would be very challenging.

\begin{note}
Note the fundamental need for nested queries to express $r_\Delta$: $(\cdots \land \query{q_1}
\cdots) \land \query{q_2}$. The above reduction therefore does not contradict the promises of
Theorem~\ref{thm:alg:analysis:main} for non-nested SemREs. The Paris Hilton SemRE previously
mentioned in the introduction, $(\Sigma^* (\Sigma^* \land \query{\text{City}}) \Sigma^*) \land
\query{\text{Celebrity}}$, is one of the few natural examples of SemREs with nested queries that we
can write. We expect that most SemREs arising in practice would be unaffected by the reduction just
described.
\end{note}
   \mclearpage
\section{Experimental Evaluation}
\label{sec:eval}

We conducted an experimental evaluation of the algorithm from Section~\ref{sec:alg} in which we
attempted to answer the following questions:
\begin{enumerate}[label=\textbf{RQ\arabic*.}, ref=\textbf{RQ\arabic*}, leftmargin=*]
\item \label{enu:eval:throughput} How quickly can the algorithm process input data?
\item \label{enu:eval:oracle} How heavily does the algorithm use the oracle?
\item \label{enu:eval:input} How does the length of the input string affect matching speed?
\end{enumerate}

%%%%
\paragraph{Benchmark SemREs and input data.}

We used the SemREs from Section~\ref{sub:semre:examples} to filter lines of text in a corpus of
spam emails and a corpus of Java code downloaded from three popular GitHub repositories. We removed
lines with non-ASCII characters and lines greater than 1,000 characters in length. After this
pruning the two datasets had 4,622,992 and 5,461,188 lines and measured 61~MB and 72~MB
respectively.
Table~\ref{tab:eval:benchmarks} presents some statistics of each SemRE, including their size, the
backing oracle, and the number of matched lines in the datasets. Because processing the entire
datasets with the LLM oracle would require several weeks of computation for $r_{\text{pass}}$,
$r_{\text{id}}$ and $r_{\text{spam, 1}}$, we only provide estimates of the number of matched lines
based on random sampling. Also note that we have padded the SemREs so the system looks for lines of
text with substrings that have the desired patterns.

\begin{table}
\caption{Benchmark SemREs and their statistics. For the $r_{\text{id}}$ SemRE, the regular
  expressions $\text{pad}_1$ and $\text{pad}_2$ are added as padding to search for identifiers in
  longer lines of code. Here: $\text{pad}_1 = (\Sigma^* (\Sigma \backslash \Sigma_l))?$ and
  $\text{pad}_2 = (\Sigma^* (\Sigma \backslash (\Sigma_l \cup \{0,\dots,9\})))?$.}
\label{tab:eval:benchmarks}
\centering
\begin{tabular}{llllll}
\toprule
Dataset  & Name   & SemRE                 & Oracles               & $|r|$ & \# Matched line \\
\midrule
Code     & pass   & $\padded{pass}$       & LLM                   & 26    & $\approx$ 4,330 \\
         & file   & $\padded{file}$       & File system           & 226   & 8,706 \\
         & id     & $\text{pad}_1 \, r_\text{id} \, \text{pad}_2$ & LLM                   & 250   & $\approx$ 423,165 \\
\midrule
Spam     & edom   & $\padded{edom}$       & Whois                 & 307   & 365,112 \\
         & spam,1 & $\padded{spam,1}$     & LLM                   & 31    & $\approx$ 46,666 \\
         & spam,2 & $\padded{spam,2}$     & LLM                   & 86    & 21,096 \\
         & wdom,1 & $\padded{wdom,1}$     & Phishing website list & 263   & 192 \\
         & wdom,2 & $\padded{wdom,2}$     & Whois                 & 263   & 482  \\
         & ip     & $\padded{ip}$         & IP geolocation        & 153   & 135,320 \\
\bottomrule
\end{tabular}
\end{table}

%%%%
\paragraph{Implementation and baselines.}

We have implemented our algorithm in a system called $\grepo$, a grep-like tool for matching
semantic regular expressions.
We implemented a simple baseline algorithm based on dynamic programming, as discussed in Section~%
\ref{sub:semre:formalism}. This algorithm recursively processes subexpressions of the given SemRE
$r$ and substrings of the input string $w$, and uses top-down dynamic programming with memoization
to determine whether $w \in \interp{r}$.
The entire system consists of approximately 1,800 lines of Rust code, and has an interface similar
to grep: given the SemRE, oracle, and input file, it prints the matched lines.

Implementing the underlying oracles required some care: For example, the external Whois service
will temporarily block the users if too many queries are submitted in a short period of time. To
avoid overloading this service, we instead prepopulated a local database with necessary information
for all domains in the input data. We also provided a predefined list of phishing websites, an IP
geolocation database, and a filesystem interface. For SemREs using an LLM oracle, we used LLaMa3-8B
with 8-bit quantization~\cite{LlaMa3, LlaMa3-8bit}. Because the throughput of the LLM is low, and to
mitigate issues with nondeterminism (Recall Assumption~\ref{assm:semre:llm}), its use was mediated
by a query cache.

\mclearpage
\subsection{\ref{enu:eval:throughput}: Efficiency in Processing Input Data}
\label{sub:eval:throughput}

To measure the effectiveness of our algorithm in processing semantic regular expressions, we
compared its throughput and that of the baseline DP-based implementation for each SemRE from Table~%
\ref{tab:eval:benchmarks}. We present the results in Table~\ref{tab:eval:throughput}. For each
semantic regular expressions, we report the reciprocal throughput (average time needed to process
each line) for both the entire run and when restricted to lines that match $r$. Because of the low
throughput of the baseline and of the LLM-based oracle, we set a timeout of 40 minutes for each run
and report statistics obtained for the text processed within this period.

\begin{table}
\caption{SemRE matching performance. We report reciprocal throughput (average time needed to process
  each line) for the two algorithms, and aggregate oracle use statistics.}
\label{tab:eval:throughput}
\centering
\resizebox{\textwidth}{!}{
\begin{tabular}{ccccccccccc}
\toprule
  & \multicolumn{2}{c}{RT, Total ($\text{ms} \cdot \text{line}^{-1}$)}
  & \multicolumn{2}{c}{RT, Matched ($\text{ms} \cdot \text{line}^{-1}$)}
  & \multicolumn{2}{c}{Oracle calls ($\text{line}^{-1}$)}
  & \multicolumn{2}{c}{Oracle fraction}
  & \multicolumn{2}{c}{Query length ($\text{chars} \cdot \text{line}^{-1}$)}
\\
  \cmidrule(lr){2-3}
  \cmidrule(lr){4-5}
  \cmidrule(lr){6-7}
  \cmidrule(lr){8-9}
  \cmidrule(lr){10-11}
SemRE  &  SNFA & DP &  SNFA & DP &  SNFA & DP &  SNFA & DP &  SNFA & DP\\\midrule
pass          & \textbf{38.2}     & 464.2            & \textbf{891.0}    & 3624.0             & \textbf{0.125}  & 1.767             &  0.999 &  0.999         & \textbf{1.93}    & 23.895 \\
file          & \boldsmall        & 1.7              & \textbf{0.1}      & 5.5                & 0.001           & 0.001             &  0.001 &  $\le0.001$    & \textbf{0.033}   & 0.036 \\
id            & \textbf{175.7}    & 1569.7           & \textbf{311.0}    & 1545.5             & \textbf{3.101}  & 20.541            &  0.999 &  0.999         & \textbf{22.218}  & 111.249 \\
ip            & \boldsmall        & 159.3            & \textbf{0.1}      & 10.8               & 0.113           & \textbf{0.095}    &  0.162 &  $\le0.001$     & 1.205            & \textbf{1.016} \\
edom          & \boldsmall        & 16.9             & \textbf{0.3}      & 2.9                & 0.09            & \textbf{0.082}    &  0.417 &  $\le0.001$     & 0.993            & \textbf{0.651} \\
spam,1        & 7.9               & \textbf{6.3}     & 763.5             & \textbf{525.0}     & \textbf{0.032}  & 0.037             &  0.999 &  0.891          & 0.892            & \textbf{0.386} \\
spam,2        & \textbf{0.4}      & 23.5             & \textbf{39.5}     & 1728.1             & \textbf{0.016}  & 0.117             &  0.987 &  0.960          & \textbf{0.076}   & 0.415 \\
wdom,1        & \textbf{0.2}      & 35.9             & \textbf{0.4}      &                    & 0.424           & \textbf{0.186}    &  0.925 &  0.003          & 6.15             & \textbf{2.616} \\
wdom,2        & \boldsmall        & 2.6              & \textbf{0.4}      & 7.3                & \textbf{0.025}  & 0.055             &  0.351 &  0.002          & \textbf{0.24}    & 0.523 \\
\bottomrule
\end{tabular}}
\end{table}

Overall, we observe that $\grepo$ has a much higher throughput both for the file as a whole, and
also when restricted to matched lines. Although this speedup varies depending on the SemRE in
question, averaged across all SemREs (using geometric means), and compared to the DP-based baseline,
$\grepo$ has $101\times$ the throughput for the whole dataset, and $12\times$ the throughput when
focusing on the matched lines in particular.

We notice that this speedup is lower when the oracle that has a very long response time (i.e., for
the LLM-based oracle). This is unsurprising, and is caused by the fact that performance for these
SemREs is dominated by the response time of the oracle, as we will see in the next section.
Nevertheless, our algorithm still outperforms the baseline in all but one of the LLM-based
SemREs.
The only exception is $r_\text{spam, 1}$ where the DP-baseline has a slightly higher
throughput. We speculate that this performance inversion is due the underlying skeleton being too
permissive: matching essentially reduces to guessing which substring is accepted by the LLM, so that
the simplicity of the baseline algorithm results in reduced overhead.
 \mclearpage
\subsection{\ref{enu:eval:oracle}: Extent of Oracle Use}
\label{sub:eval:oracle}

We expect oracle use to have non-zero cost (either in terms of running time, actual cost, or in
terms of limits on daily use). We therefore measured how heavily the two algorithms used the oracle
in terms of number of calls, time spent within the oracle, and average length of submitted queries.
Once again, we report these statistics in Table~\ref{tab:eval:throughput}.
%
% 51% ~= 1 - Geomean(T2:OC:SNFA / T2:OC:DP)
% 3 ~= Geomean(T2:OF:DP*T2:RT:DP/T2:OF:SNFA*T2:RT:SNFA)
Overall, $\grepo$ makes $51\%$ fewer oracle calls than the baseline, which in turn spends $3 \times$
the time inside the oracle as our algorithm.

These additional ``useless'' queries are due to a combination of factors:
One reason is because of lines where the skeleton eventually turns out to not match. Recall that the
skeleton is the classical regular expression $\skel(r)$ that one may obtain by removing oracle
refinements. Naturally, $\interp{r} \subseteq \interp{\skel(r)}$, so any oracle use for strings
$w \notin \interp{\skel(r)}$ is wasted work.
A second reason is due to a lack of lookahead even for lines where the skeleton matches. For
example, in the case of the spam medication SemRE, while the baseline would correctly use
whitespaces on the left to demarcate the beginning of medicine names, it would not be able to
exploit the symmetric property of whitespace demarcation on the right.

\begin{note}
As a counterpoint to our entire paper, a critical reader might wonder whether it would make sense to
simply submit the entire string to the LLM and suitably phrase the SemRE matching problem as a
prompt. Of course, this approach would only work when the backing oracle is an LLM, and would lack
the semantic guarantees of SemREs. We additionally point out that even for LLM-backed SemREs,
$\grepo$ submits much fewer tokens to the oracle than the average line length, so LLM use is
massively reduced.
\end{note}
     \mclearpage
\subsection{\ref{enu:eval:input}: Scalability with Respect to Input String Length}
\label{sub:eval:input}

Finally, we investigate the empirical impact of string length on the throughput of our algorithm. We
reran the experiment on the same dataset but filtered out all lines longer than 200 characters. We
show the median throughput of the matching algorithms as a function of line length in Figure~%
\ref{fig:eval:input}. We also show the overall distribution of line lengths in our dataset.

\begin{figure}
\centering
\includegraphics[width=.9\textwidth]{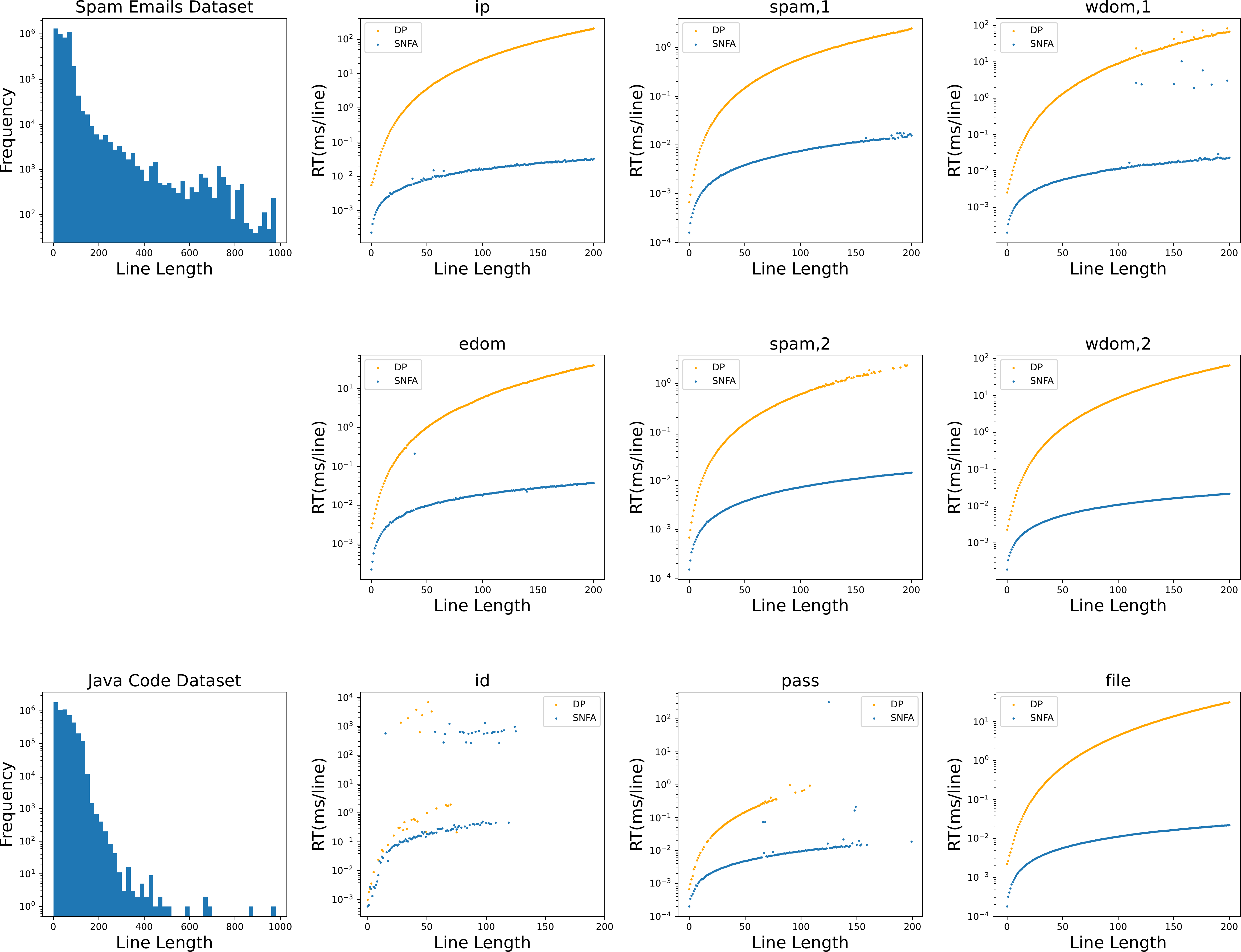}
\Description{Line length distribution and variation of running time across the datasets.}
\caption{Line length distribution and variation of running time across the datasets. We only report
  running time measurements when the algorithm processes at least 10 lines with that length.}
\label{fig:eval:input}
\end{figure}

Given the overall differences in throughput, it is unsurprising that the median runtime of $\grepo$
increases much more slowly compared to the baseline (notice the log-scaling applied to the
$y$-axis). We observe similar trends both for LLM and non LLM oracles.
The ``\emph{cloud}'' of data points appearing to the top right of the plots for $r_{\text{id}}$
occur because of cache misses in the oracle implementation leading to actual LLM queries. This cloud
of cache misses is particularly acute for this SemRE because identifiers appear in a large fraction
of lines.
      \mclearpage
         \mclearpage
\section{Related Work}
\label{sec:related}

%%%%%%%%%%%%%%%%%%%%%%%%%%%%%%%%%%%%%%%%%%%%%%%%%%%%%%%%%%%%%%%%%%%%%%%%%%%%%%%%%%%%%%%%%%%%%%%%%%%%

\paragraph{SMORE}

The story of our paper begins with SMORE~\cite{Smore}, which is itself set in a broader context of
systems that use LLMs for various data extraction~\cite{FlashGPT, Cheng:Binder:ICLR2023} and code
generation tasks~\cite{Codegen}. Other examples of work in this space include $L^*LM$~%
\cite{LstarLM}, which uses an LLM as oracle to guide the $L^*$ algorithm (Angluin's famous procedure
for learning DFAs~\cite{Lstar}) and systems that extract structured data from language models~%
\cite{Prompting, relm}.

The overarching difference between SMORE and our paper is in the focus: \citet{Smore} consider the
problem of synthesizing semantic regular expressions from data, rather than membership testing.
Their implementation uses the dynamic programming approach for membership testing that we used as a
baseline in Section~\ref{sec:eval}. This is a natural approach to testing membership: an early
reference is~\cite{HopcroftU1979}.
In addition, SemREs, as defined by Equations~\ref{eq:semre:syntax} and~\ref{eq:semre:semantics},
appear to be simpler than the SMORE formalism (Figures~3 and~4), in a few ways:
\begin{enumerate}
\item SMORE-style SemREs appear to permit string transformations during processing by the oracle.
  This allows languages such as $\{ w \in \Sigma^* \mid w \text{ is a date and }
  \operatorname{month}(w) = \text{May} \}$ to be represented. Because the backing oracle $\oracle$
  is externally defined and we make no assumptions about the query space $Q$, they can both be
  suitably extended so that such languages remain expressible in our simpler core formalism.
\item In addition to the basic constructors, SMORE-style SemREs also permit negation, intersection,
  and bounded repetition. Efficient membership testing for languages defined using these operations
  is significantly harder than for classical regular expressions. See, for
  example,~\cite{KupfermanZuhovitzky:MFCS2002, Rosu2007} and~\cite{BVAScan}. To keep our paper
  simple, we have ignored these constructs. Handling them would be an excellent direction for future
  work.
\end{enumerate}
On the other hand, our work is also more general, because the oracle can be realized using any
external mechanism. While oracle machines have been the subject of extensive theoretical
study, to the best of our knowledge, our paper is the first practical implementation of such
ideas.

%%%%%%%%%%%%%%%%%%%%%%%%%%%%%%%%%%%%%%%%%%%%%%%%%%%%%%%%%%%%%%%%%%%%%%%%%%%%%%%%%%%%%%%%%%%%%%%%%%%%

\paragraph{KAT}

Another closely related idea is that of KAT (Kleene algebras with tests)~\cite{Kozen1997KAT}. KAT
has a similar flavor as SemREs, with the major difference being in the nature of oracle tests: While
KAT allows individual characters of a string to encode uninterpreted tests, SemREs allow entire
substrings to be presented to an oracle for validation.

%%%%%%%%%%%%%%%%%%%%%%%%%%%%%%%%%%%%%%%%%%%%%%%%%%%%%%%%%%%%%%%%%%%%%%%%%%%%%%%%%%%%%%%%%%%%%%%%%%%%

\paragraph{Visibly pushdown languages.}

The well-nested query structure induced by SemREs is reminiscent of visibly pushdown languages~%
\cite{Rajeev-VPL:STOC2004}. We highlight some differences:
First, visibly pushdown automata (VPAs) require an explicitly tagged alphabet of calls, returns, and
local actions. As a consequence, the locations where the stack is pushed an popped is obvious
(i.e., ``\emph{visible}'') upon seeing the input string. On the other hand, identifying positions
where queries are opened and closed is the key computational problem associated with SemREs. Think
of the $\Sigma^* \query{\text{Politician}} \Sigma^*$ pattern. As a consequence, for a fixed VPA $M$,
we only require $O(|w|)$ time to determine whether $M$ accepts an input string $w$. This is in
opposition to our $\Omega(|w|^2)$ lower bound of Theorem~\ref{thm:complexity:quad}.
A second difference is that VPAs allow for unbounded call depths, whereas the query context in an
SNFA can never be deeper than the nesting depth.

%%%%%%%%%%%%%%%%%%%%%%%%%%%%%%%%%%%%%%%%%%%%%%%%%%%%%%%%%%%%%%%%%%%%%%%%%%%%%%%%%%%%%%%%%%%%%%%%%%%%

\paragraph{Membership testing for classical and extended regexes}

The problem of membership testing for regular expressions has also been extensively studied. We
refer the reader to \citeauthor{Cox2010Regex}'s excellent survey on the topic~\cite{Cox2010Regex}.
Practical implementations use a range of techniques, including backtracking~\cite{PCRE}, automata
simulation~\cite{RE2, WangHCPLHZ2019Hyperscan, Thompson1968}, and various forms of derivatives~%
\cite{Brzozowski1964, Antimirov1996}.
Extended regular expressions, i.e., with intersection and complement, are of particular interest to
us, because of the similarities between intersection, $r_1 \intersection r_2$, and the way oracle
access is mediated in SemREs: $r \land \query{q}$. It is now folklore that these operations do not
add expressive power to classical regular expressions. Still, efficient membership testing
algorithms for this class is challenging, as extended regular expressions are non-elementarily more
succinct than NFAs: We refer the reader to~\cite{Rosu2007}.

\citet{Petersen:STACS2002} showed that the problem of regular expressions with intersection is
LOGCFL-complete. The reduction in Section~\ref{sub:complexity:triangles} shows SemRE membership
testing is similarly related to several other problems of interest, including triangle finding,
Boolean matrix multiplication~\cite{Williams2:FOCS2010}, context-free grammar (CFG) parsing~%
\cite{lee2002fast}, and Dyck reachability~\cite{Dyck-survey, Cetti-Hansen}. These connections also
raise the possibility of applying techniques such as the Four Russians Method~\cite{four-russians}
or ideas similar to Valiant's reduction of CFG parsing to Boolean matrix multiplication~%
\cite{valiant-cfg} to develop asymptotically faster algorithms. Of course, this reduction does not
work well in practice, so developing practical subcubic algorithms would be challenging.

%%%%%%%%%%%%%%%%%%%%%%%%%%%%%%%%%%%%%%%%%%%%%%%%%%%%%%%%%%%%%%%%%%%%%%%%%%%%%%%%%%%%%%%%%%%%%%%%%%%%

\paragraph{Derivative-based algorithms.}

Derivatives~\cite{Brzozowski1964, Antimirov1996} provide an alternative characterization of regular
expression matching algorithms. Given a regular expression $r$ and an initial character $a$, the
(Brzozowski) derivative $\partial_a(r)$ is another regular expression that describes possible
suffixes that extend this initial character: $\interp{\partial_a(r)} = \{ w' \mid aw' \in
\interp{r} \}$. For example, it can be seen that $\partial_a((ab)^*) = b(ab)^*$.
There is a purely syntactic approach to calculating derivatives, and this greatly simplifies
implementation. Naturally, one can ask whether similar derivative-based approaches can be applied to
the setting of SemREs. We think that this is a good direction of future work. They might even allow
us to gracefully handle extended regex operators, in a manner similar to~%
\cite{Veanes-Derivatives:POPL2025}.
However, while derivatives have promise to simplify some parts of our algorithm, we anticipate the
core difficulties---i.e., nested queries and associated hardness results---will remain.

%%%%%%%%%%%%%%%%%%%%%%%%%%%%%%%%%%%%%%%%%%%%%%%%%%%%%%%%%%%%%%%%%%%%%%%%%%%%%%%%%%%%%%%%%%%%%%%%%%%%

% \TODO: Symbolic automata?
% Symbolic automata and transducers (see, for example, \cite{DAntoniV2021}) consider input alphabets
% that are either very large or infinite. They are different from their classical counterparts in
% that transitions are labeled with unary predicates over the input alphabet (``character classes'')
% instead of single input symbols. Practical implementations of regular pattern matching (over the
% ASCII, extended ASCII, or Unicode character alphabets) also rely on representations of character
% classes in order to avoid the explicit enumeration of individual characters.
      \mclearpage
\section{Conclusion}
\label{sec:concl}

In this paper, we studied the membership testing problem for semantic regular expressions (SemREs).
We described an algorithm, analyzed its performance both theoretically and experimentally, and
proved some lower bounds and hardness results. We think that SemREs are an exciting formalism, and
that their applications and associated computational problems deserve greater scrutiny.
        \mclearpage
\section*{Artifact Availability Statement}

The artifact that supports the findings of this paper can be downloaded from Zenodo~\cite{Artifact}.
     \mclearpage
\section*{Acknowledgements}

This research was supported in part by the US National Science Foundation awards CCF~2313062, CCF~%
2146518 and CCF~2107261. We thank our anonymous reviewers for their feedback, which helped to
greatly improve this paper.         \mclearpage

%%%%

\mclearpage
\bibliographystyle{ACM-Reference-Format}
\citestyle{acmnumeric} % acmauthoryear
\bibliography{src/references.bib, src/refs_km.bib}
\fi

%%%%

% \if\jobis{app}
% \makeatletter
% \ACM@linecount
% \makeatother
% \fi

\if\jobis{short} \else
\clearpage
\appendix
\section{Matching Semantic Regular Expressions Using Query Graphs}
\label{app:alg}

\subsection{Converting SemREs into Semantic NFAs}
\label{app:alg:snfa}

% $M = (S, \Delta, L, s_0, s_f)$

We formalize Figure~\refshort{fig:alg:snfa} with the following recursive construction of $M_r$ given
$r$:
\begin{enumerate}
\item When $r = \bot$: $M_\bot = (\{ s_0, s_f \}, \emptyset, L_\bot, s_0, s_f)$, with $L_\bot(s_0)
  = L_\bot(s_f) = \blank$.
\item When $r = \strempty$: $M_\strempty = (\{ s_0, s_f \}, \{ s_0 \to^\strempty s_f \},
  L_\strempty, s_0, s_f)$, with $L_\strempty(s_0) = L_\strempty(s_f) = \blank$.
\item When $r = a$ for $a \in \Sigma$: $M_a = (\{ s_0, s_f \}, \{ s_0 \to^a s_f \}, L_a, s_0, s_f)$,
  with $L_a(s_0) = L_a(s_f) = \blank$.
\item When $r = r_1 + r_2$: $M_r = (S_r, \Delta_r, L_r, s_0, s_f)$, where:
  \begin{alignat*}{1}
  S_r      & =      \{ s_0, s_f \} \union
                    \{ (1, s) \mid s \in S_{r_1} \} \union
                    \{ (2, s) \mid s \in S_{r_2} \}, \\
  \Delta_r & =      \{ s_0 \to^\strempty (1, s_{10}), s_0 \to^\strempty (2, s_{20}) \} \union {} \\
           & \qquad \{ (1, s) \to^a (1, s') \mid s \to^a s' \in \Delta_{r_1} \} \union {} \\
           & \qquad \{ (2, s) \to^a (2, s') \mid s \to^a s' \in \Delta_{r_2} \} \union {} \\
           & \qquad \{ (1, s_{1f}) \to^\strempty s_f, (2, s_{2f}) \to^\strempty s_f \},
                    \text{ and} \\
  L(s)     & = \begin{cases}
               \blank       & \text{if } s = s_0 \text{ or } s = s_f, \\
               L_{r_1}(s_1) & \text{if } s = (1, s_1) \text{ for } s_1 \in S_{r_1}, \text{ and} \\
               L_{r_2}(s_2) & \text{if } s = (2, s_2) \text{ for } s_2 \in S_{r_2},
               \end{cases}
  \end{alignat*}
  where $M_{r_1} = (S_{r_1}, \Delta_{r_1}, L_{r_1}, s_{10}, s_{1f})$ and $M_{r_2} = (S_{r_2},
  \Delta_{r_2}, L_{r_2}, s_{20}, s_{2f})$ are the SNFAs for each subexpression.
\item When $r = r_1 r_2$: $M_r = (S_r, \Delta_r, L_r, s_0, s_f)$, where:
  \begin{alignat*}{1}
  S_r      & =      \{ s_0, s_f \} \union
                    \{ (1, s) \mid s \in S_{r_1} \} \union
                    \{ (2, s) \mid s \in S_{r_2} \}, \\
  \Delta_r & =      \{ s_0         \to^\strempty (1, s_{10}),
                       (1, s_{1f}) \to^\strempty (2, s_{20}),
                       (2, s_{2f}) \to^\strempty s_f \} \union {} \\
           & \qquad \{ (1, s) \to^a (1, s') \mid s \to^a s' \in \Delta_{r_1} \} \union {} \\
           & \qquad \{ (2, s) \to^a (2, s') \mid s \to^a s' \in \Delta_{r_2} \} \text{ and} \\
  L(s)     & = \begin{cases}
               \blank     & \text{if } s = s_0 \text{ or } s = s_f, \\
               L_{r_1}(s_1) & \text{if } s = (1, s_1) \text{ for } s_1 \in S_{r_1}, \text{ and} \\
               L_{r_2}(s_2) & \text{if } s = (2, s_2) \text{ for } s_2 \in S_{r_2}.
               \end{cases}
  \end{alignat*}
  As before, $M_{r_1} = (S_{r_1}, \Delta_{r_1}, L_{r_1}, s_{10}, s_{1f})$ and $M_{r_2} = (S_{r_2},
  \Delta_{r_2}, L_{r_2}, s_{20}, s_{2f})$ are the subexpression SNFAs.
\item When $r = r_1^*$: $M_r = (S_r, \Delta_r, L_r, s_0, s_f)$, where:
  \begin{alignat*}{1}
  S_r      & =      \{ s_0, s_f \} \union \{ (1, s) \mid s \in S_{r_1} \}, \\
  \Delta_r & =      \{ s_0 \to^\strempty (1, s_{10}),
                       (1, s_{1f}) \to^\strempty s_0,
                       s_0 \to^\strempty s_f \} \union {} \\
           & \qquad \{ (1, s) \to^a (1, s') \mid s \to^a s' \in \Delta_{r_1} \}, \\
  L(s)     & = \begin{cases}
               \blank     & \text{if } s = s_0 \text{ or } s = s_f, \\
               L_{r_1}(s_1) & \text{if } s = (1, s_1) \text{ for } s_1 \in S_{r_1}.
               \end{cases}
  \end{alignat*}
  Once again, $M_{r_1} = (S_{r_1}, \Delta_{r_1}, L_{r_1}, s_{10}, s_{1f})$ is the recursively
  constructed SNFA.
\item For $r = r_1 \land \query{q}$: $M_r = (S_r, \Delta_r, L_r, s_0, s_f)$, where:
  \begin{alignat*}{1}
  S_r      & =      \{ s_0, s_f \} \union \{ (1, s) \mid s \in S_{r_1} \}, \\
  \Delta_r & =      \{ s_0 \to^\strempty (1, s_{10}), (1, s_{1f}) \to^\strempty s_f \} \union {} \\
           & \qquad \{ (1, s) \to^a (1, s') \mid s \to^a s' \in \Delta_{r_1} \}, \\
  L(s)     & = \begin{cases}
               \open(q)     & \text{if } s = s_0, \\
               \close(q)    & \text{if } s = s_f, \text{ and} \\
               L_{r_1}(s_1) & \text{if } s = (1, s_1) \text{ for } s_1 \in S_{r_1}.
               \end{cases}
  \end{alignat*}
  $M_{r_1}$ is once again the recursively constructed SNFA.
\end{enumerate}
We can show by induction on $r$ that all paths from $s_0$ to $s_f$ have well-parenthesized state
labels. It follows that $M_r$ is a well-formed SNFA.
          \mclearpage
\subsection{Constructing the Query Graph}
\label{app:alg:qgraph-build}

Some assumptions that will simplify our reasoning in the rest of this section:
\begin{assumption}
\label{assm:alg:qgraph-build:snfa2}
\begin{enumerate}
\item The start state $s_0$ of $M$ is $\blank$. If not, without loss of generality, we can create a
  new start state $s_0'$ with an $\strempty$-transition, $s_0' \to^\strempty s_0$.
\item The target $s'$ of any character transition, $s \to^a s'$ for $a \in \Sigma$, has a blank
  label, i.e., $\lambda(s') = \blank$. If not, we once again introduce a new state $s''$ with
  $\lambda(s'') = \blank$, and replace the previous transition with: $s \to^a s'' \to^\strempty s'$.
\end{enumerate}
The last two assumptions are needed to resolve technical complications that arise while constructing
the gadget, and will only be referenced in the appendix in the proof of Lemma~%
\refshort{lem:alg:qgraph-build:qg-pc1}.
We elide these assumptions in the rest of this paper.
\end{assumption}

\subsubsection{Preliminaries: Tentatively Feasible Subpaths, Query Contexts, and
  $\strempty$-Feasibility.}
\label{app:alg:qgraph-build:prelims}

We present inference rules to identify $\strempty$-feasible paths through the SNFA in Figure~%
\ref{fig:alg:qgraph-build:eps}. Notice that they are simply an alternative version of the rules for
computing transitive closures:
\begin{lemma}
\label{lem:alg:qgraph-build:eps}
The inference rules in Figure~\ref{fig:alg:qgraph-build:eps} are sound and complete for determining
whether $s \efp s'$. Moreover, they can be evaluated to fixpoint in $O(|r|^2)$ time.
\end{lemma}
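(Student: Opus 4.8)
The plan is to handle the three assertions—soundness, completeness, and the $O(|r|^2)$ time bound—separately, leaning throughout on the rigidity of the Thompson-style construction of $M_r$ (Figure~\ref{fig:alg:snfa}). For \textbf{soundness} I would induct on the height of a derivation of $s \efp s'$, with one case per rule: the reflexivity rule (at a $\blank$ state) is witnessed by the single-vertex empty path; a composition/transitivity rule is discharged by the concatenation clause in the definition of feasibility; and any rule matching an $\open(q)/\close(q)$ pair together with a premise $\oracle(q, \strempty) = \true$ is discharged by the nesting clause applied to the (inductively feasible) interior path. Since every transition used is an $\strempty$-transition, the path produced is $\strempty$-labelled, which is all we need.

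For \textbf{completeness} I would prove the converse by induction on the number of transitions in a feasible $\strempty$-labelled path $\pi$ from $s$ to $s'$, unfolding the three-clause recursive definition of feasibility. If $\pi$ is a single $\blank$ vertex, the reflexivity rule applies. If $\pi = \pi_1 \pi_2$, both pieces are strictly shorter feasible $\strempty$-paths, so the induction hypothesis makes the corresponding judgments derivable and the composition rule glues them. If $\pi$ opens $q$ at $s$ and closes it at $s'$, then $\oracle(q, \strempty) = \true$, the interior is a strictly shorter feasible $\strempty$-path, and the nesting rule finishes the case. The one subtlety is that the rules of Figure~\ref{fig:alg:qgraph-build:eps} extend $\strempty$-paths one transition at a time, whereas the definition permits arbitrary concatenations and nestings; I would bridge this by first normalizing $\pi$ into its canonical decomposition into maximal balanced $\open(q)\ldots\close(q)$ blocks and single transitions. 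This decomposition is unique because in $M_r$ each $\open(q)$ state is paired with a single, structurally determined $\close(q)$ state, and because the label sequence along any subpath is a substring of a well-parenthesized string (every $s_0$-to-$s_f$ path being well-parenthesized).

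For the \textbf{time bound} I would observe that $M_r$ has $O(|r|)$ states and $O(|r|)$ transitions, so the derivable judgments form a subset of $S \times S$ of size $O(|r|^2)$, and that $\strempty$-feasibility is a Dyck-style reachability problem on the $\strempty$-transition subgraph in which bracket matching is \emph{deterministic}. Concretely, precompute $\oracle(q, \strempty)$ for each of the $O(|r|)$ queries in $r$, then process the $O(|r|)$ balanced blocks from innermost to outermost: as soon as the interior of a block $\open(q)\ldots\close(q)$ is known $\strempty$-feasible and $\oracle(q, \strempty) = \true$, install one ``shortcut'' $\strempty$-edge bypassing it. The augmented graph still has $O(|r|)$ edges, so one BFS per source state yields all reachable pairs in $O(|r|)$ time per source and $O(|r|^2)$ overall; equivalently, run the rules with a worklist so that each of the $O(|r|^2)$ facts is produced once and each rule firing is charged to an incident edge.

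The step I expect to be the main obstacle is the normalization inside the completeness argument—showing that every feasible $\strempty$-path parses uniquely into the nested shape that the rules recursively traverse, and that the inner, strictly smaller subpaths handed to the induction hypothesis are themselves feasible $\strempty$-paths. The individual rule cases are mechanical; it is this parsing bookkeeping, justified by the unique open/close pairing of the Thompson SNFA, that requires care.
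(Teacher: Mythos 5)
Your overall strategy (soundness by induction on derivations, completeness by induction on the path, a counting argument for the $O(|r|^2)$ bound) is aligned with what the paper intends, and the paper itself supplies only a one-paragraph justification. But two points need repair. First, your soundness and completeness cases are phrased for a rule set (``reflexivity'', ``composition'', ``a rule matching an $\open(q)/\close(q)$ pair'') that is not the one in Figure~\ref{fig:alg:qgraph-build:eps}. The actual rules derive a stack-indexed auxiliary judgment $s \efp s' \mid \sigma_o$ and extend paths one $\strempty$-transition at a time, pushing on $\open(q)$ states and popping (with an $\oracle(q,\strempty)$ check) on $\close(q)$ states; there is no composition rule and no single rule that consumes a bracket pair. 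The induction you need is on derivation height (resp.\ path length) with the invariant that $s \efp s' \mid \sigma_o$ is derivable iff there is an $\strempty$-path from $s$ to $s'$ with no unmatched closes, with unmatched opens exactly $\sigma_o$, and with every balanced query accepted on $\strempty$; your normalization into maximal balanced blocks then becomes unnecessary, since the one-step rules walk the path transition by transition and the stack does the parsing. This invariant is also what justifies the fact count you assert without proof: because every state of $M_r$ has a unique query context, $\sigma_o$ is determined by the pair $(s,s')$, so the derivable judgments inject into $S \times S$ and number $O(|r|^2)$. Your appeal to ``deterministic bracket matching'' gestures at this but never states it.

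Second, and more seriously, your shortcut-edge algorithm is unsound as described. A feasible path is, by the recursive definition, fully balanced, so $s \efp s'$ demands a \emph{balanced} $\strempty$-path. After installing shortcuts for feasible blocks, a plain BFS on the augmented graph computes ordinary reachability, which also follows unbalanced paths: for $\Sigma^* \land \query{q}$, the raw $\strempty$-edges lead from the blank start state through the $\open(q)$ state into the block's interior, so your BFS would report interior states as $\efp$-reachable from the start even though every such path carries an unmatched open. To salvage the idea you must restrict traversal so that the bracket labels stay balanced, e.g.\ BFS only within each block's interior with immediately nested blocks collapsed to their shortcut edges. Your fallback---running the figure's rules to fixpoint with a worklist over the $O(|r|^2)$ facts---is essentially the paper's own argument and is correct once the fact count is justified as above.
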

Because paths through the SNFA are (tentatively) well-parenthesized, for each pair of derivable
judgments, $s \to s' \mid \sigma_o$ and $s \to s' \mid \sigma_o'$, it must be the case that
$\sigma_o = \sigma_o'$. In other words, the set of outstanding queries that need to be opened is
uniquely determined by the initial and final states of the path $s$ and $s'$. Therefore, there are
at most $O(|r|^2)$ judgments of the form $s \to s' \mid \sigma_o$. We evaluate them to fixpoint
using a simple DFS-based algorithm.

\begin{figure}
\begin{center}
\mprset{sep=1.5em} % Default = 2em
% \resizebox{\textwidth}{!}{...}

\scalebox{0.75}{$
  \begin{array}{ccccc}
    \inferrule*{s \efp s' \mid []}
               {s \efp s'}
    & \qquad \qquad \qquad &
    \inferrule*{\lambda(s) = \blank}
               {s \efp s \mid []}
    & \qquad \qquad \qquad &
    \inferrule*{\lambda(s) = \open(q)}
               {s \efp s \mid [q]}
  \end{array}
$}

\vspace{1.5em}

\scalebox{0.75}{$
  \begin{array}{ccc}
    \inferrule*{s \efp s' \mid \sigma_o \\
                s' \to^\strempty s'' \\
                \lambda(s'') = \blank}
               {s \efp s'' \mid \sigma_o}
    & \qquad \qquad \qquad &
    \inferrule*{s \efp s' \mid \sigma_o \\
                s' \to^\strempty s'' \\
                \lambda(s'') = \open(q)}
               {s \efp s'' \mid q :: \sigma_o}
  \end{array}
$}

\vspace{1.5em}

\scalebox{0.75}{$
    \inferrule*{s \efp s' \mid q :: \sigma_o \\
                s' \to^\strempty s'' \\
                \lambda(s'') = \close(q) \\
                \oracle(q, \strempty) = \true}
               {s \efp s'' \mid \sigma_o}
$}
\end{center}
\Description{Inference rules to identify $\strempty$-feasible paths in the SNFA, $s \efp s'$.}
\caption{Inference rules to identify $\strempty$-feasible paths in the SNFA, $s \efp s'$. The
  auxiliary judgment $s \efp s' \mid \sigma_o$ indicates the presence of an tentatively feasible
  $\strempty$-path from $\pi = s \to^* s'$ with no unmatched close nodes, and with $\sigma_o$ being
  the stack of outstanding open queries.}
\label{fig:alg:qgraph-build:eps}
\end{figure}
    \mclearpage
\subsubsection{The Inter-Character Gadget}
\label{app:alg:qgraph-build:gadget}

The following result establishes a connection between paths through the gadget and through the SNFA.
It is needed for the proof of Lemma~\refshort{lem:alg:qgraph-build:qg-pc2}, which is a similar path
correspondence, but for the entire query graph.

\begin{lemma}[Path correspondence for gadgets]
\label{lem:qgraph-build:gadget:pc}
Pick a path $p = (s, 1) \to^* (s', k')$ through $\Gad$, for $s, s' \in S, \lambda(s) = \blank$ and
$k' \in \{ 1, 2, 3 \}$. Then, there is a tentatively feasible path through the SNFA $M$ from $s$ to
$s'$ with the same query context when any of the following conditions hold:
\begin{enumerate}
\item if $\lambda(s') = \blank$ or $\lambda(s') = \close(q)$ for $q \in Q$, or if
\item if $\lambda(s') = \open(q)$ for $q \in Q$ and $k' \geq 2$.
\end{enumerate}
\end{lemma}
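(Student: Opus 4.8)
The plan is to induct on the length of the gadget path $p = (s,1) \to^* (s',k')$, exploiting the fact that the gadget has exactly three layers and edges only move forward (within a layer or from layer $k$ to layer $k+1$, via $E_{\Gad,11}$, $E_{\Gad,12}$, $E_{\Gad,22}$, $E_{\Gad,23}$). Since layers never increase and never revisit, $p$ decomposes canonically as a (possibly empty) Layer-1 segment, followed by a single $E_{\Gad,12}$ edge into Layer~2, a (possibly empty) Layer-2 segment, a single $E_{\Gad,23}$ edge into Layer~3, and a (possibly empty) Layer-3 segment — with $p$ stopping in whichever layer contains $(s',k')$. The core of the argument is to translate each such segment into a tentatively feasible $\strempty$-path in the SNFA with the matching query context, and then concatenate these SNFA paths using feasibility condition~(3) (concatenation of feasible subpaths).

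First I would handle each layer's internal edges. For an $E_{\Gad,11}$ edge $(s_a,1)\to(s_d,1)$ (using the first disjunct, with intermediate $s_b \efp s_c$, $s_a \to^\strempty s_b$, $s_c \to^\strempty s_d$, $\lambda(s_d)=\close(q)$), the definition of $\efp$ gives a tentatively feasible $\strempty$-path $s_b \to^* s_c$; prepending $s_a \to^\strempty s_b$ and appending $s_c \to^\strempty s_d$ yields a tentatively feasible SNFA path $s_a \to^* s_d$. Its query context: the $\efp$ portion contributes nothing (it is well-parenthesized, all balanced), the edge into $s_d$ pushes an unmatched $\close(q)$, so the query context is $([\close(q)],[])$; this matches the label $l_\Gad(s_d,1) = \close(q)$ that is visible along the gadget path. (The second disjunct of $E_{\Gad,11}$ is the degenerate case $s_a = s_b$, handled the same way.) Chaining these along the whole Layer-1 segment, each step appending exactly one unmatched close, produces a tentatively feasible SNFA path whose query context is exactly the stack of closes read off the Layer-1 gadget vertices. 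The $E_{\Gad,12}$ edge $(s,1)\to(s,2)$ corresponds to the empty SNFA move at $s$ — no change. The $E_{\Gad,22}$ edges are the mirror image: each contributes a tentatively feasible $\strempty$-path pushing one unmatched $\open(q)$, matching $l_\Gad(s_d,2)=\open(q)$. Finally an $E_{\Gad,23}$ edge $(s_a,2)\to(s_b,3)$ (with $s_a \efp s_c$, $s_c \to^\strempty s_b$, or $s_a = s_b$) gives a tentatively feasible $\strempty$-path $s_a \to^* s_b$ with trivial (balanced) query context, matching $l_\Gad(\cdot,3)=\blank$; and there are no internal Layer-3 edges to worry about once we're at $(s',3)$ — or if $k'=3$ was reached without an internal Layer-3 move, we're already done.

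Concatenating these pieces via feasibility condition~(3) gives a single tentatively feasible SNFA path $s \to^* s'$; the query context accumulates the Layer-1 closes then the Layer-2 opens, which is precisely the query context of the gadget path $p$ (read off its visited labels). This handles case~(1): if $\lambda(s') = \blank$ the path ends in Layer~3 (or in Layer~1/2 at a blank vertex, same reasoning); if $\lambda(s') = \close(q)$ then $(s',k')$ must be in Layer~1, and the constructed path ends with the unmatched $\close(q)$. For case~(2), $\lambda(s')=\open(q)$ with $k'\ge 2$ forces $(s',k')=(s',2)$ (open labels only appear on Layer-2 vertices), and the Layer-2 construction covers it. I expect the main obstacle to be the bookkeeping for case~(2)'s exclusion of $k'=1$: a state $s'$ with $\lambda(s')=\open(q)$ has $l_\Gad(s',1)=\blank$, so stopping at $(s',1)$ would \emph{not} expose the open in the gadget's query context, and the corresponding SNFA path genuinely does not have that open on its stack — so the lemma's restriction is exactly what makes the query contexts agree, and the delicate part is stating the query-context invariant precisely enough that this asymmetry falls out cleanly rather than looking like a gap. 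A secondary fussy point is verifying that the degenerate disjuncts (the $s=s'$ / $s_a=s_b$ alternatives in each edge set) are genuinely subsumed, which is routine but must be mentioned.
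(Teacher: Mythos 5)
Your proposal is correct and takes essentially the same approach as the paper: the paper also proceeds by induction along the gadget path with a case analysis on the layer indices of the last edge, which amounts to exactly your Layer-1/Layer-2/Layer-3 segment decomposition, translating each gadget edge into an SNFA $\strempty$-path built from the $\efp$ witness plus the bounding single $\strempty$-transitions and concatenating via feasibility rule~(3). Your query-context bookkeeping (balanced $\efp$ middles contribute nothing; Layer~1 accumulates unmatched closes, Layer~2 unmatched opens) matches the paper's argument.
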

\begin{proof}
We prove this by induction over the path $p$. The base case is immediate. For the induction step,
assume that the result is known for a path $p$, and that our goal is to prove the result for a
one-step extension, $p'$:
\[ p' = \underbrace{(s, 1) \to^* (s', k')}_{p} \to (s'', k''). \]
We analyze the value of $k''$:
\begin{enumerate}
\item \label{enu:qgraph-build:gadget:pc:kpp3}
  When $k'' = 3$: It has to be the case that $k' = 2$. We use the induction hypothesis to recover a
  tentatively feasible path $\pi = s \to^* s'$. Notice that the newly added edge $e = (s', k') \to
  (s'', k'')$ is either witnessed by an $\strempty$-labelled feasible path $\pi_e = s''' \to^* s''$
  through the SNFA with $s' \to^{\epsilon} s'''$ or satisfies $s' = s''$. In the first case we
  construct the combined path
  \[ \pi' = s \to^* s' \to^{\epsilon} s''' \to^* s'', \]
  by concatenating $\pi$ and $\pi_e$. In the later case we simply take $\pi' = \pi$.
  In any case $\pi'$ remains tentatively feasible, with the same query context.
\item \label{enu:qgraph-build:gadget:pc:kpp2}
When $k'' = 2$: We perform a nested case analysis on $k'$:
  \begin{enumerate}
  \item \label{enu:qgraph-build:gadget:pc:kpp3:kp1}
    When $k' = 1$: Two conclusions follow immediately: $\lambda(s') \neq \open(q)$ for any $q \in
    Q$, and $s'' = s'$. Our witness to the extended path $p'$ is now simply the witness $\pi$ to the
    previous path $p$.
  \item \label{enu:qgraph-build:gadget:pc:kpp3:kp2}
    When $k' = 2$: It follows immediately that $\lambda(s'') = \open(q'')$. Apply the induction 
    hypothesis and we recover a tentatively feasible path $\pi = s \to^* s'$. From the construction 
    of the gadget, either there is a path:
    \[ \pi'' = s' \to^\strempty \underbrace{s'''' \to^* s'''}_{\pi'''} \to^\strempty s'', \]
    where the $\pi'''$ is an (unconditionally) feasible
    $\strempty$-labelled path or there is a path:
    \[ \pi'' = s' \to^\strempty s'', \]
    In any case $\qcon(\pi'') = [ \open(q) ]$. We construct the combined path:
    \[ \pi' = \pi\pi''. \]
    and complete the proof for the case.
  \end{enumerate}
\item \label{enu:qgraph-build:gadget:pc:kpp1}
  When $k'' = 1$: The proof of this case is similar to
  Case~\ref{enu:qgraph-build:gadget:pc:kpp3:kp2} above.
  \qedhere
\end{enumerate}
\end{proof}

     \mclearpage
\subsubsection{The Final Query Graph Construction}
\label{app:alg:qgraph-build:final}

We present the full proofs of Lemmas~\refshort{lem:alg:qgraph-build:qg-pc1} and~%
\refshort{lem:alg:qgraph-build:qg-pc2}.

%%%%%%%%%%%%%%%%%%%%%%%%%%%%%%%%%%%%%%%%%%%%%%%%%%%%%%%%%%%%%%%%%%%%%%%%%%%%%%%%%%%%%%%%%%%%%%%%%%%%

\repthm{lem:alg:qgraph-build:qg-pc1}
\begin{proof}
By induction on $\pi$.
The base case follows from the presence of the path $(s_0, 1, 1) \to (s_0, 2, 1) \to (s_0, 3, 1)$ in
the first copy of the gadget in $G_M^w$.
In the inductive case, we are given a tentatively feasible path:
\[ \pi' = \underbrace{s_0 \to^* s}_\pi \to^a s', \]
for $a \in \Sigma \union \{ \strempty \}$. Because the prefix $\pi$ is also tentatively feasible, by
application of the induction hypothesis, we conclude the presence of a tentatively feasible path:
\[ p = \qgstart \to^* (s, 3, i), \]
for $i = |\str(\pi)|$. We visualize the situation in Figure~\ref{fig:alg:qgraph-build:qg-pc1:ind},
and consider the various cases:
\begin{enumerate}
\item When $a \in \Sigma$: Consider the following extended path:
  \[ p' = \underbrace{\qgstart \to^* (s, 3, i)}_p \to (s', 1, i + 1)
                                                  \to (s', 2, i + 1)
                                                  \to (s', 3, i + 1). \]
  The first new edge is drawn from $E_c$, and corresponds to the $s \to^a s'$ character transition.
  The remaining edges are intra-gadget connections, and are drawn from $E_\strempty$. Recall from
  Assumption~\refshort{assm:alg:qgraph-build:snfa2} that $\lambda(s') = \blank$, so that
  $l(s', 1, i + 1) = l(s', 2, i + 1) = l(s', 3, i + 1) = \blank$. It follows that $p'$ is
  also a tentatively feasible path with the same query context as $\pi'$.
\end{enumerate}
For the remaining cases, let $p$ be of the form:
\[ p = \underbrace{\qgstart \to^* (s'', 2, i)}_{p''} \to (s, 3, i). \]
\begin{enumerate}[resume]
\item When $a = \strempty$ and $\lambda(s') = \blank$: It follows that $s''$, $s$, and $s'$ all have
  the same query context, and that either there is an $\strempty$-labelled feasible path from $s'''
  \to^* s$ with $s'' \to^\strempty s'''$ or $s'' = s$, so we can construct the alternative
  tentatively feasible path with the same query context as the extended SNFA path $\pi'$:
  \[ p' = \underbrace{\qgstart \to^* (s'', 2, i)}_{p''} \to (s', 3, i). \]
\item When $a = \strempty$ and $\lambda(s') = \open(q)$: Observe that the gadget must contain an
  edge $(s'', 2, i) \to (s', 2, i)$, so that we can redirect path $p$ as follows:
  \[ p' = \underbrace{\qgstart \to^* (s'', 2, i)}_{p''} \to (s', 2, i) \to (s', 3, i). \]
  It can be shown that this path is tentatively feasible and has the same query context as $\pi'$.
\item When $a = \strempty$ and $\lambda(s') = \close(q)$: Here, we perform a second case split on
  $\lambda(s'')$:
  \begin{enumerate}
  \item If $\lambda(s'') = \blank$, or $\lambda(s'') = \close(q)$: Recall that all edges between the
    vertices of Layer~2 point towards $\open$ nodes. The predecessor of $(s'', 2, i)$ must therefore
    necessarily have been drawn from the first layer of the gadget, as follows:
    \[ p = \underbrace{\qgstart \to^* (s'', 1, i)}_{p'''} \to (s'', 2, i) \to (s, 3, i). \]
    In any case, the query context of $s'$ must be one query shorter than the query context of
    $s''$, and there must be a Layer~1 transition from $(s'', 1, i) \to (s', 1, i)$. We obtain the
    target path $p'$ by redirecting $p$ as follows:
    \[ p' = \underbrace{\qgstart \to^* (s'', 1, i)}_{p'''} \to (s', 1, i)
                                                           \to (s', 2, i)
                                                           \to (s', 3, i). \]
  \item If $\lambda(s'') = \open(q)$: In this case, the predecessor of $(s'', 2, i)$ must be present
    either in Layer~2 or in Layer~1. Suppose the predecessor is $(s''', 2, i)$, for some state
    $s'''$. Then $p$ looks as follows:
    \[ p = \underbrace{\qgstart \to^* (s''', 2, i)}_{p'''} \to (s'', 2, i) \to (s, 3, i). \]
    The query context of $s'$ must be the same as $s'''$ and there must be a Layer~2 to Layer~3
    transition $(s''', 2, i) \to (s', 3, i)$ . And we construct the target path $p'$ as follows:
    \[ p' = \underbrace{\qgstart \to^* (s''', 2, i)}_{p'''} \to (s', 3, i). \]
    On the other hand, assume that the predecessor of $(s'', 2, i)$ is some vertex $(s''', 1, i)$.
    Because the Layer~1 to Layer~2 edges only involve direct connections, $s''' = s''$. But
    Assumption~\refshort{assm:alg:qgraph-build:snfa2} allows us to argue that $(s'', 1, i)$ cannot
    be reachable from $\qgstart$, leading to a contradiction. \qedhere
  \end{enumerate}
\end{enumerate}
\end{proof}

%%%%

\begin{figure}
\centering
\scalebox{\scalefactor}{
\begin{tikzpicture}
  \node [state, initial] (s0) {$s_0$};
  \node (s0Nw) at ($(s0.north) + (-.25, .3)$) {};

  \node [state, right=4 of s0] (s) {$s$};

  \node [state, right=2.5 of s] (sp) {$s'$};
  \node (spNe) at ($(sp.north) + (0.25, .3)$) {};

  \draw [decorate, decoration={coil, aspect=0}, ->] (s0) -- (s) node [midway, above] {$\pi$};

  \path [->] (s) edge [above] node {$a \in \Sigma \union \{ \strempty \}$} (sp);
  \draw [decorate, decoration={brace}] (s0Nw) -- (spNe) node [midway, above] {$\pi'$};

  \node [qgstart, below=of s0, label=below:{$(s_0, 1, 0)$}] (qgstart) {$\qgstart$};
  \node [below=of s] (qgs) {$(s, 3, i)$};
  \draw [decorate, decoration={coil, aspect=0}, ->] (qgstart) -- (qgs) node [midway, below] {$p$};

  \path [dashed] (s0) edge (qgstart);
  \path [dashed] (s) edge (qgs);
\end{tikzpicture}
}

\Description{Setting for the inductive step in the proof of
  Lemma~\refshort{lem:alg:qgraph-build:qg-pc1}.}
\caption{Setting for the inductive step in the proof of
  Lemma~\refshort{lem:alg:qgraph-build:qg-pc1}. We are given a path $\pi$ through the SNFA for which
  the conclusion is known to hold, and need to establish the result for a one step extension:
  $\pi' = s_0 \to^* s \to^a s'$. Both $\pi$ and $\pi'$ are tentatively feasible. The induction
  hypothesis promises the presence of a tentatively feasible path $p = (s_0, 1, 0) \to^* (s, 3, i)$
  through $G_M^w$, for $i = |\str(\pi)|$. We need to establish the existence of an extended path
  corresponding to $\pi'$.}
\label{fig:alg:qgraph-build:qg-pc1:ind}
\end{figure}

%%%%%%%%%%%%%%%%%%%%%%%%%%%%%%%%%%%%%%%%%%%%%%%%%%%%%%%%%%%%%%%%%%%%%%%%%%%%%%%%%%%%%%%%%%%%%%%%%%%%

\mclearpage
\repthm{lem:alg:qgraph-build:qg-pc2}
\begin{proof}
By induction on $i$. The base case, when $i = 0$ follows immediately from Lemma~%
\ref{lem:qgraph-build:gadget:pc}. We now establish the inductive case.
In this case, we have to prove the result for a combined path:
\[ p' = \underbrace{\qgstart \to^* (s, 3, i)}_p \to
        \underbrace{(s', 1, i + 1) \to^* (s'', 3, i + 1)}_{p''}, \]
assuming that the result holds for the prefix $p$.

From the induction hypothesis, it follows that there exists a tentatively well-parenthesized path
$\pi$ in the SNFA with the same query context which is equifeasible with $p$. From Lemma~%
\ref{lem:qgraph-build:gadget:pc} we declare the existence of a tentatively feasible path $\pi''$
through the SNFA that has the same query context as $p''$. Also since there exists an inter-gadget
edge between $v$ and $v'$, there is a character transition $s \to^a s'$ in the SNFA. We construct the
combined path:
\[ \pi' = \pi \to^a \pi''. \]
The two paths, $p'$ and $\pi'$ are equifeasible. This completes the proof.
\end{proof}

  \mclearpage
\subsection{Performance Analysis of Our Algorithm}
\label{app:alg:analysis}

\repthm{thm:alg:analysis:main}
\begin{proof}
First, the SNFA $M_r$ can be constructed in time $O(|r|)$, and recall from Lemma~%
\refapp{lem:alg:qgraph-build:eps} that $\strempty$-feasibility can be determined in $O(|r|^2)$ time.
Thus the full query graph can be constructed in $O(|r|^2 |w|)$ time.

We also observe that $G_{M_r}^w$ has at most $O(|r||w|)$ vertices. From the construction of the
final query graph in Equation~\ref{eq:alg:graph-build:final} and the gadget in Equation~%
\ref{eq:alg:qgraph-build:gadget}, observe that the in-degree of any node is bounded by $O(|r|)$.
Also note that, because all vertices $v \in \backref(v)$ must be opening the same query $q_v$, it
must be the case that $|\backref(v)| \leq |w|$. For similar reasons, $|\aq(v)| \leq |w|$ and
$|\loq(v)| \leq |w|$. When $r$ is non-nested, $\loq(v) = \emptyset$. Finally, when $r$ is
unambiguous, $|\backref(v)|, |\aq(v)|, |\loq(v)| \leq 1$. 

Now, we catalogue the time needed to apply each rule to each vertex of the query graph in Table~%
\ref{tab:alg:analysis}. The promised time complexity may be obtained by multiplying the per-node
running times with the number of vertices $O(|r| |w|)$ in $G_{M_r}^w$.

Oracle calls are only made while computing $\aq(v)$. Because $|\aq(v)| \leq |w|$, at most $|w|$
oracle calls are made while computing $\aq(v)$. Overall, then, the algorithm makes no more than
$O(|r| |w|^2)$ oracle calls in the general and non-nested cases. In the unambiguous setting, because
$\backref(v')$ contains at most one element, it follows that at most one oracle query may be
discharged at each vertex query graph, which consequently limits the total number of discharged
queries to $O(|r| |w|)$.
\end{proof}

\begin{table}
\caption{Time needed to apply each rule from Figure~\refshort{fig:alg:qgraph-eval} to each vertex of
  the query graph, in the general case and for the case of unambiguous SemREs. Because the query
  graph contains $O(|r| |w|)$ vertices, the total running time is bounded by $O(|r|^2 |w|^2 +
  |r| |w|^3)$ in the general case, and by $O(|r|^2 |w|^2)$ and $O(|r|^2 |w|)$ in the non-nested and
  unambiguous cases respectively.}
\label{tab:alg:analysis}
\centering
\begin{tabular}{llll}
\toprule
  Rule         & Time (General)         & Time (Non-nested) & Time (Unambiguous) \\
\midrule
  \textsc{As}  & $O(1)$                 & $O(1)$            & $O(1)$ \\
  \textsc{Ab}  & $O(|r|)$               & $O(|r|)$          & $O(|r|)$ \\
  \textsc{Ao}  & $O(|r|)$               & $O(|r|)$          & $O(|r|)$ \\
  \textsc{Ac}  & $O(1)$                 & $O(1)$            & $O(1)$ \\
\midrule
  \textsc{M}   & $O(|r| |w|)$           & $O(|r| |w|)$      & $O(|r|)$ \\
\midrule
  \textsc{Bb}  & $O(|r| |w|)$           & $O(|r| |w|)$      & $O(|r|)$ \\
  \textsc{Bo}  & $O(1)$                 & $O(1)$            & $O(1)$ \\
  \textsc{Bc}  & $O(|w|^2)$             & $O(|w|)$          & $O(1)$ \\
\midrule
  $\loq(v)$    & $O(|r| |w|)$           & $O(|r|)$          & $O(|r|)$ \\
\midrule
  \textbf{Total} & $O(|r| |w| + |w|^2)$ & $O(|r| |w|)$      & $O(|r|)$ \\
\bottomrule
\end{tabular}
\end{table}

\begin{note}
\begin{enumerate}
\item Instead of the auxiliary quantity $\loq(v)$ and the Rule~\textsc{Bc} in
  Figure~\refshort{fig:alg:qgraph-eval}, an early version of our paper instead inlined the necessary
  reasoning into the conceptually simpler Rule~\textsc{Bcs} (the subscript ``\emph{s}'' indicating
  ``\emph{simple}'') shown in Figure~\ref{afig:alg:analysis:Bcs}. Despite its simplicity, the lack
  of caching provided by the explicit definition of $\loq(v)$ results in a poorer overall time
  complexity.

\item An early implementation of our algorithm explicitly constructed the query graph $G_M^w$.
  Although this implementation satisfied the asymptotic guarantees of
  Theorem~\refshort{thm:alg:analysis:main}, the repeated construction and discarding of query graphs
  for each line of text in the dataset turned out to greatly slow down the algorithm in practice.
  Our current implementation instead only constructs functions for the adjacency relation that allow
  us to traverse the vertices of $G_M^w$ as needed. This massively reduces memory pressure and
  speeds up the implementation.

\item Another optimization in our implementation is to compute the intermediate quantities,
  $\alive(v)$, $\aq(v)$, and $\backref(v)$ lazily in an on-demand manner. This
  significantly reduced the number of oracle calls made for strings which match.
\end{enumerate}
\end{note}

\begin{figure}
\centering
\scalebox{0.75}{$
    \inferrule*[Left=Bcs]
               {l(v) = \close(q) \\
                \alive(v) \\
                v' \in \aq(v) \\
                v'' \to v}
               {\backref(v'') \subseteq \backref(v)}
$}
\caption{Alternative query graph evaluation Rule~\textsc{Bcs}. This formulation inlines the
  definition of $\loq(v)$ and is therefore conceptually simpler. On the other hand, we require
  $O(|r| |w|^2)$ time to apply this rule to each vertex, which increases the overall time complexity
  to $O(|r|^2 |w|^3)$.}
\label{afig:alg:analysis:Bcs}
\end{figure}

          \mclearpage
\section{The Complexity of Matching Semantic Regular Expressions}
\label{app:complexity}

\repthm{thm:complexity:quad}
\begin{proof}
When $Q$ is unbounded, construct the family of SemREs $r_n = \Sigma^* (\query{q_1} + \query{q_2} +
\dots + \query{q_n}) \Sigma^*$ and the family of strings $w_m = 0^m 1^m$, indexed by $n, m \in
\mathbb{N}$ respectively. We claim that any algorithm which determines whether $w_m \in
\interp{r_n}$ must evaluate $\oracle(q_i, 0^j 1^k)$ for each $i \in \{ 1, 2, \dots, n \}$ and for
each $0 \leq j, k \leq m$.

Assume, for the sake of contradiction, that there is an algorithm $A$ which determines the result
without evaluating $\oracle(q_i, 0^j 1^k)$. Consider the pair of oracles $\oracle_f$ and $\oracle_t$
defined as:
\begin{alignat*}{1}
  \oracle_f(q, w) & = \false, \text{ and} \\
  \oracle_t(q, w) & = \begin{cases}
                      \true  & \text{if } q = q_i \text{ and } w = 0^j 1^k, \text{ and} \\
                      \false & \text{otherwise},
                      \end{cases}
\end{alignat*}
respectively. Notably, the two oracles only produce different outputs when queried on the input
$(q_i, 0^j 1^k)$. Furthermore, observe that $w_m \notin \interp{r_n}$ according to $\oracle_f$ and
that $w_m \in \interp{r_n}$ according to $\oracle_t$, contradicting the assumption that $A$ can
correctly solve the matching problem.

The proof of the second part is similar to the first, except using the fixed SemRE $r = \Sigma^*
\query{q} \Sigma^*$ for some query $q \in Q$ instead of the family of SemREs, $r_n$.
\end{proof}
   \mclearpage
\section{Experimental Evaluation}
\label{app:eval}

%%%%
\paragraph{Experimental Setup.}

We ran our experiments on a workstation machine with an AMD Ryzen 9 5950X CPU and 128~GB of memory
running Ubuntu~22.04. The workstation also has a Nvidia RTX 4080 GPU with 16~GB of GRAM for locally
running the LLM.
% Our experiments focus primarily on comparative running times rather than absolute values, so
% largely identical results should be obtained on most contemporary computers.         \mclearpage
\fi

\end{document}